\newtheorem{theorem}{Theorem}
\newtheorem{lemma}{Lemma}
\newtheorem{proposition}{Proposition}
\theoremstyle{definition}
\newtheorem{definition}{Definition}
\theoremstyle{remark}
\newtheorem{remark}{Remark}
\newtheorem{example}{Example}
\newtheorem{corollary}{Corollary}
\theoremstyle{claim}
\theoremstyle{fact}
\newtheorem{fact}{Fact}
\newcommand{\myenddefinition}{$\blacklozenge$}
\newcommand* \supp {\mathop{\rm supp}\nolimits }
\def\ss{\mathfrak{s}}
\def\aa{\mathfrak{a}}
\def\dd{\mathfrak{d}}
\def\cc{\mathfrak{c}}
\def\sc{\mathfrak{sc}}
\def\sd{\mathfrak{sd}}
\def\ac{\mathfrak{ac}}
\def\ad{\mathfrak{ad}}
\def \hh{\mathfrak{h}}
\def \ee{\mathfrak{e}}
\def \rr{\mathfrak{r}}
\def \vv{\mathfrak{v}}
\def \he{\mathfrak{he}}
\def \hr{\mathfrak{hr}}
\def \hv{\mathfrak{hv}}
\def \de{\mathfrak{he}}
\def \dr{\mathfrak{hr}}
\def \dv{\mathfrak{dv}}
\def\R{\mathbb{R}}
\def\a{x}
\def\D{\Delta}
\def\e{\varepsilon}
\def \y{\mathfrak{y}}
\def\:={\coloneqq}
\def\INT{\textrm{Int}}
\def\Rec{\textrm{Rec}}
\begin{document}
\title{{Mentors and Recombinators:\\ Multi-Dimensional Social Learning}\thanks{We thank seminar and conference audiences at Learning, Evolution and Games conference (Lucca, 2022), Bar-Ilan University, the Hebrew University of Jerusalem, Technion (Israel Institute of Technology), Stony Brook University, UC Santa Barbara, UC San Diego, and the University of Manchester, for various helpful comments. 
SA gratefully acknowledges financial support from a Fine Fellowship provided by the Technion. 
All authors gratefully acknowledge research support from the Israel Science Foundation (\#448/22). 
}}

\author{Srinivas Arigapudi\thanks{Department of Economic Sciences, IIT Kanpur, India, 
email:\href{mailto:arigapudi@iitk.ac.in}{arigapudi@iitk.ac.in.}}
\and Omer Edhan\thanks{Department of Economics, University of Manchester, UK, email:\href{mailto:omeredhan.idan@manchester.ac.uk} {omeredhan.idan@manchester.ac.uk}}
\and Yuval Heller\thanks{Department of Economics, Bar-Ilan University, Israel,  
 email:\href{mailto:yuval.heller@biu.ac.il}{yuval.heller@biu.ac.il.}} \and Ziv Hellman\thanks{Department of Economics, Bar-Ilan University, Israel, email:\href{mailto:ziv.hellman@biu.ac.il}{ziv.hellman@biu.ac.il} }} 

\date {\today}
\maketitle

\begin{abstract}
    {
    %Y(26.9): Abstract now is 98 words (AER's limitation is 100).
    We study games in which the set of strategies is multi-dimensional, and new agents might learn various strategic dimensions from  different mentors. 
    We introduce a new family of dynamics, the recombinator dynamics, which is characterised by a single parameter, the recombination rate $r\in[0,1]$. 
    The case of $r=0$ coincides with the standard replicator dynamics. The opposite case of $r=1$ corresponds to a setup in which each new agent learns each new strategic dimension from a different mentor.
    We fully characterise stationary states and stable states under these dynamics, and we show that they predict novel 
    %Y(15.10): aded "and empirically-relevant"
 and empirically-relevant behaviour in various applications.}
\end{abstract}
\noindent \textbf{Keywords}: 
revision protocols, imitation, non-monotone dynamics, evolutionary stability. \ \ \ \textbf{JEL codes}: C73,
% Evolutionary Games
D83 %learning 

\section{Introduction}
\epigraph{\hspace{-7cm}Ben Zoma said: Who is wise? One who learns from every one,
as it is written\\
\hspace{-7cm} (Psalms 119:99): `From all of my teachers I have gained understanding'.
}{\textit{Ethics of the Fathers,\,\,\,  Chapter 4}}
The theory of the dynamics of behavioural traits in populations, developed in the context of imitative games, has been one of the most successful sub-fields of evolutionary game theory.
A rich set of results and insights regarding the convergence and the stability of equilibria of population-level traits forms the core of the subject. 
They have been applied to models of social interactive situations, economic models, and biological models of evolution and ecology.  

Our contribution here begins with the observation that in many of the replicator population game models in the literature to date the traits that are the main objects of study are dimensionless,
and player types are usually entirely identified with single traits.
This is far from realistic; each of us is composed of an ensemble of traits and behaviours, from the way we speak and dress to the subjects that we study and the professions that we choose.
In a broad sense, each of us may be identified with those trait ensembles, and social and professional success often depends on the entire suite of those traits.

The standard replicator paradigm of evolutionary game theory typically posits a population with a set of traits, while each agent in the population (which may be an individual, a firm, or a social or economic unit) bears only a single trait from that set. A two-player symmetric normal-form game exists, and a trait is associated with a population-dependent fitness value, representing the payoff a player with that trait would receive on average playing against the current distribution of traits in the population.

Here we extend this model to what we term the recombinator model by positing that there is a set of dimensions of cardinality $|D|$, where each dimension is itself a set of traits.
An agent is now associated with a $|D|$-tuple of traits, one from each dimension.
Such a tuple of traits is a type, and each type is assigned a fitness payoff.
 
A newly born agent can either sample a single mentor, as in the standard replicator model, and then imitate the entirety of that mentor's type for his own type, or alternatively be what we term a combinator, who independently samples $|D|$ incumbent agents, each of whom is a mentor.
For each $1 \le d \le |D|$, the combinator agent imitates the $d$-th trait of the $d$-th mentor.

As an example, consider a newly formed commercial enterprise. 
It may entirely emulate another successful company, copying every aspect of the mentor company's corporate structure and strategic practices; this would be replicator imitation.
Alternatively, in combinator imitation, the newly formed company would copy the corporate governance of one mentor company, the brand management of a second company, and the marketing and customer services approach of a third company.

For each $r \in [0,1]$, which we term the recombination rate, we obtain a recombinator dynamic, in which a newly born agent is a combinator with probability $r$ and a replicator with probability $1-r$.
In this way one obtains a recombinator law of motion equation; when $r=0$ the recombinator equation reduces to the standard replicator equation, and when $r=1$ a pure combinator model is attained. 
From here, we may inquire about trajectories, convergence, stability, and similar properties.

The recombinator model, with its tuple of traits comprising the type of an agent, was inspired by considering models of DNA and genetic recombination in biological evolution.
One of the central insights of the past century in biological evolution, the gene-centred view in which the interactions of genes are paramount over those of individuals for tracing long-term trajectories, is paralleled in the social learning setting of this paper, where the analogy is a traits-centred view.

We can identify two games occurring in parallel in our model: at a visible level, agents interact in the two-player normal form game $G_P$, which determines their population-dependent payoffs.
More subtly, there is also a game being played between the traits:
one can interpret each trait as competing against the other traits within its dimension, analogous to the competition between genetic alleles.

This view of parallel types and traits games is not simply a modelling convenience; it is helpful for fully analysing trajectories and asymptotic convergence results in recombinator models. 
For one thing, when $r>0$ the recombinator dynamics may violate payoff monotonicity, a property that is necessary for many of the standard results in imitative games. This can lead to stable states in which all agents have strictly dominated types. 

To remedy this we introduce what we term the $r$-payoff function, which combines the effects of the combinator and replicator components of the dynamics into a single vector field along which trajectories in the recombinator model flow.
Our first result (Proposition \ref{pro:stationary-states-recombinator}) shows that the dynamics are monotone with respect to these $r$-payoffs, and uses this to present our first ``if and only if'' characterisation of the set of stationary states.

At the stationary state of a convergent trajectory of the recombinator dynamic the surviving types may exhibit different payoffs.
In contrast, at the traits level, the dynamics reliably select for traits with higher payoffs (in ensemble with other traits), with lower performing traits eventually becoming extinct. At stationary state convergence, all surviving traits have exactly the same payoff. This is formalised in our second result (Proposition \ref{prop-trait-stationary}), which presents a second ``if and only if'' characterisation of stationary states in terms of the traits payoffs and the correlation in the population between different traits.

 It is well-known (see Facts \ref{claim-necc-asymp-replic}--\ref{claim-suffic-asymp-replic}) that a stationary state is asymptotically stable under the standard replicator dynamics essentially if and only if it satisfies: 
 \begin{enumerate}
     \item \emph{internal stability}: the payoff matrix restricted to the incumbent types is negative-definite, and
     \item \emph{external stability} the payoffs of external types  is lower than the incumbents. 
 \end{enumerate} 
 
 Our main results (Theorems \ref{thm:asymptotic-neccesary}--\ref{thm:asymptotic-sufficient}) extend this characterisation to the recombinator dynamics. 
 Specifically, we show that a stationary state is asymptotically stable under the combinator dynamics essentially\footnote{\ Asymptotic stability implies the weak variants of these 3 conditions, and it is implied by their strict counterparts.} if and only if
  \begin{enumerate}
      \item \emph{internal stability}: the $r$-Jacobian matrix (which replaces the standard payoffs with the $r$-payoffs described above) is negative-definite, 
      \item \emph{external stability against traits}: the payoff of the incumbents is higher than the payoff of external traits; we show that each small invasion of an external trait induces a unique distribution of types who carry this trait (partners), and this distribution is used to calculate the trait's payoff.
     \item  \emph{external stability against types}:
     the payoff of incumbents is  higher than than the payoff of external types  multiplied by $1-r<1$, where this factor $1-r$ reflects the fact that stability against external types become easier the higher the recombination rate $r$ (because traits of a successful external type recombine with the incumbent traits into hybrid types, which might have lower payoffs).
  \end{enumerate} 
  
We apply our dynamics in two simple applications. The first application is a prisoner's dilemma with partially enforceable contracts in which each player chooses (1) making either a costless enforceable promise or a costly non-enforceable promise to cooperate, and (2) either cooperating or defecting. All recombination rates admit an inefficient stable state in which all players make costly non-enforceable promises and defect. By contrast, if the recombination rate is not too low, the game also admits a Pareto-optimal stable state in which all agents make enforceable promises and cooperate.
Our second application (Example \ref{exam-emotional-HD}) is a hawk-dove game which is enriched by adding 
a second dimension with three traits: (1-2) two specialised traits: being emotional (rational), which is somewhat helpful for a hawkish (dovish) player, while  being harmful for a dovish (hawkish) player, and (3) being versatile (no effect on the game payoffs). We show that there is a recombination threshold, such that for any  recombination rate below (above) this threshold, the unique stable state involves specialised (versatile) traits.

Most of our paper focuses on a particular equation of motion expressed by the recombinator dynamic. In Section \ref{sec-extension} we extend our characterisation results for stationary and stable states to a general setup that captures a broader class of dynamics. 

Real-life social interactions, whether between individuals or economic units, are typically multi-dimensional, as opposed to the single-dimensional quality of many of the existing replicator dynamics in the literature.
As argued by \cite{arad2018multi} in games with a multi-dimensional strategy space, players tend to think in terms of strategy traits rather than the strategies themselves. 
We therefore expect that the recombinator dynamics
can serve in future research efforts as a tool for understanding observed behaviour in a large variety of applications. 

The paper is structured as follows. In Section \ref{sec-related}, we discuss the related literature. Section\ref{sec-model} presents our model. We characterise stationary states in Section \ref{sec-staionary} and asymptotically stable states in Section \ref{sec-stable}. Section \ref{sec-extension} extends our analysis to a broader set of dynamics. We conclude in Section \ref{sec-conclusion}. The technical proofs are presented in the Online Appendix.

\section{Related Literature}\label{sec-related}

Our research belongs to the evolutionary game theory literature (pioneered in the seminal paper of \citealp{maynard-smith1973logic}, with an earlier brief discussion in John Nash's unpublished dissertation, see \citealp{weibull1994mass}). 
This literature considers a game that is played over and over again by biologically or socially conditioned players who are randomly drawn from
large populations. Occasionally, new agents join the population (or incumbents revise their behaviour), and they learn how to play based on observing the (possibly noisy) behaviour and payoffs of some of the incumbent agents (see, \citealp{weibull1997evolutionary} and \citealp{sandholm2010population} for a textbook introduction, and \citealp{newton2018evolutionary} for a comprehensive recent survey of the literature). 

A commonly applied dynamic to capture how the aggregate behaviour gradually changes in such a learning process is the \emph{replicator dynamic} (\citealp{taylor1978evolutionary}) in which the relative (per capita) change in the proportion of agents playing each action $a$ (henceforth, $a$-agents) is proportional to the average payoff of the $a$-agents. Although the replicator dynamic was originally developed to describe natural selection in a genetic evolution, it has been successfully applied to many situations of social learning (examples for various applications of the replicator dynamic and its extensions include  \citealp{borgers1997learning,hopkins2002two, skyrms2004stag, cressman2014replicator, sawa2014evolutionary, mertikopoulos2018riemannian}). 
In particular, the replicator dynamic models imitative processes in which new agents imitate the behaviour of successful incumbents (mentors), with the probability that a specific mentor is chosen to be imitated proportional to that mentor's payoff (\citealp{bjornerstedt1994nash}).%
%(24.10): New footnote.
\footnote{\ Experimental evidence for the predictions induced by imitative dynamics (similar to the replicator dynamic) is presented in  \cite{oprea2011separating,cason2014cycles,hoffman2015experimental, benndorf2016equilibrium,benndorf2021games}.}

 The replicator dynamic is part of a broad family of dynamics that satisfy \emph{payoff monotonicity}: the relative growth of action $a$ is larger than the relative growth of action $a'$ if and only if $a$ receives higher payoff than $a'$.
  The so-called Folk Theorem of evolutionary game theory (see, e.g., \citealp{nachbar1990evolutionary,hofbauer2003evolutionary}) states that there are close relations between stable states in the family of payoff monotone dynamics and Nash equilibria, namely, that stable stationary points are Nash equilibria of the game, interior trajectories converge to Nash equilibria, and strict Nash equilibria are asymptotically stable. 
  \footnote{\ The folk theorem holds under a weaker property than monotonicity, namely, weak payoff positivity (\cite[Proposition 4.11]{weibull1994mass}). 
 Weak payoff positivity is the condition that if there exist some actions that yield strictly higher payoffs than the average payoff in the population, then at least one of these actions has a positive growth rate. 
 This weak assumption holds in many dynamics, such as better reply dynamics (\citealp{hart2002evolutionary,arieli2016stochastic}), and  best reply dynamics (\citealp{hwang2017payoff,babichenko2018fast,sawa2019evolutionary}). 
 The few models that violate Weak payoff positivity include action-sampling dynamics (\citealp{sandholm2001almost,oyama2015sampling,arigapudi2023heterogeneous}) and payoff-sampling dynamics (\citealp{sethi2000stability,sandholm2020stability,arigapudi2020instability}).} 
 
The vast majority of the evolutionary game theory literature assumes that the learning process is one-dimensional. 
In particular, most of the literature on imitative processes assumes that a new agent mimics the behaviour of only a single mentor.  In what follows we describe the relatively small literature that deals with multi-dimensional learning, in which new agents may combine the learning of various traits from different mentors.%
\footnote{
\ Our notion of multi-dimensional learning should not be confused with \citeauthor{arieli2019multidimensional}'s (\citeyear{arieli2019multidimensional}) different use of the same phrase. In their setup the phrase describes agents who take actions sequentially, and the order in which actions are taken is determined by a multi-dimensional integer lattice rather than a line as in the standard model of herding.} 
\cite{arad2012multi} and \cite{arad2018multi} present experimental evidence that perceive games with large set of strategies as multi-dimensional, and that they think how to play in each strategic dimension, independently of the other dimensions. 
Motivated by this, \cite{arad2019multidimensional} present a static (non-evolutionary) solution concept that is related to the case of $r=1$ under the assumption of each player perceiving the distribution of traits as uniform.

The combination of different traits in a new agent in this social learning situation resembles the way that genetic inheritance is passed through generations in sexual inheritance: namely, just as a newly formed individual inherits DNA that combines genes from both of her parents (in contrast to the standard replicator dynamic which resembles asexual inheritance), a new agent in our social learning model combines traits from several mentors. 
The stability of phenotypic behaviour that is determined by the combination of genes (at different loci in the DNA) has long been studied in the biological literature, see, e.g., \citealp{karlin1975general,eshel1984initial,matessi1996long}).

 \cite{waldman1994systematic} studies a setup in which the action of each agent is two-dimensional, where each dimension reflects  a finite choice regarding the level of  a different bias; for example, the first dimension may reflect the amount of overconfidence, and the second dimension the amount of disutility from work. \citeauthor{waldman1994systematic} shows that a pair of biases can be evolutionarily stable under sexual inheritance if the level of each bias is optimal when taking the level of the other bias as fixed (\citeauthor{waldman1994systematic} calls such pairs ``second-best adaptations”). 
 \citeauthor{frenkel2018endowment} (\citeyear{frenkel2018endowment}), extended this analysis to a setup in which the level of each bias is a continuum, and shows that in that case although second-best adaptations do not exist, biases, which approximately compensate for the errors that any one of them would give rise to in isolation, may persist for relatively long periods.%
 \footnote{\ Other related models that explain how pairs of biases, which approximately compensate for each other, can be stable are \cite{herold2023second,steiner2016perceiving,netzer2021endogenous}.}
 
In the last decade some research papers have studied the relation between evolutionary dynamics and learning 
algorithms (see, e.g., \citealp{ chastain2014algorithms, barton2014diverse, meir2015sex}). 
These papers showed that sexual inheritance achieves `regret minimisation' learning and convergence to Nash equilibria, and that it is helpful to regard genetic alleles in separate loci as playing a common interest game. 
\citet{edhan2017sex} points out an important advantage of sexual inheritance over asexual inheritance in setups in which the set of possible genetic combinations is large. In such setups, individuals in any generation can only bear a tiny sample of the large genotype space.
An asexual population samples once and finds a local maximum within that sample.
In contrast, by continuously re-sampling, the sexual population more reliably attains an asymptotically globally superior action.

\cite{palaiopanos2017multiplicative} studied the learning behaviour of the polynomial multiplicative weights update (MWU) algorithm. \citeauthor{palaiopanos2017multiplicative} showed that interior trajectories always converge to pure Nash equilibria in congestion games in which each player separately applies an MWU algorithm.
\citeauthor{edhan2021making} (\citeyear{edhan2021making}), noting that the replicator is a special case of the MWU algorithm and that genetic recombination reproduction can be cast as a potential game between genetic loci separately implementing a replicator dynamic, build on a similar result to show that haploid sexually reproducing populations exhibit monotonic increase in mean payoff and converge to pure Nash equilibria. 

A key difference between our proposed research and the existing literature is that the latter focuses on situations in which an agent's payoff essentially depends only on her own action, independently of the aggregate behaviour in the population. 
The strategic aspect of the payoff structure (i.e., the fact that an agent's payoff crucially depends on the behaviour of other agents in the population) is a key factor in our model. This dependency of the agent's payoff on the behaviour of others yields qualitatively different results. 

\section{Model}\label{sec-model}
\subsection{Basic Setup}
Let 
$G = (A,u)$ be a two-player symmetric 
normal form game, where $A$ is a (finite) set of actions and
$u:A^2 \to \R^{++}$ is a payoff function.
We interpret $u(a,a') \in \R^+$ as the payoff of an agent playing action $a$ against one playing
action $a'$.

A continuum of agents of mass one is presumed. Each agent is associated with an action $a \in A$; this identification is called the \emph{type} of the agent. 
The state space is the simplex $\Delta(A)$, where we interpret a population state (abbr., \emph{state})
$\a \in \Delta(A)$ as a distribution of types with $\a(a)$ 
expressing the frequency of agents in the population playing $a$.
The payoff (fitness) of an agent of type $a$ in state $\a$ is
\begin{equation}
	\label{eq:agentFitness}
	u_{\a}(a) \:= \sum_{a' \in A} \a(a') u(a,a'),
\end{equation}
and the average (mean) payoff in state $\a$ is 
\begin{equation}
	\label{eq:meanFitness}
	u_{\a} \:= \sum_{a \in A} \a(a) u_\a(a) = \sum_{a,a' \in A} \a(a) \a(a') u(a,a').
\end{equation}

We define a state $x$ to be a \emph{Nash equilibrium} if $u_{\a} \geq u_\a(a)$ for each $a \in A;$
i.e, in a Nash equilibrium no action can induce a payoff that is greater than the average payoff.

Moving from static descriptions to a dynamic law of motion, the time $t \ge 0$ is continuous. 
We denote an initial state of a 
trajectory by $\a^0$ and the state at time $t$ along this trajectory by $\a^t$.
The derivative with respect to time is denoted by
$\dot{\a} \:= \frac{d\a}{dt}$.

At each time $t$ there is a flow one of agents who die regardless of their type. Each dying agent is replaced by a new agent 
(or, equivalently, one can think of this flow as capturing agents who occasionally revise their strategies).

The following \emph{matrix notation} will be helpful.  Fix an arbitrary ordering of $A=(a_i)$, where we identify action $a_i$ with its index $i$. Let $U$ be the $|A|\times|A|$ payoff matrix. Given a subset of actions $\hat A\subseteq A$, let $U_{\hat A}$ be the payoff matrix restricted to the actions in  $\hat A$, and let $T_{\hat A}$ be the \emph{tangent space} of $\Delta(\hat A)$, i.e., the set of vectors $w \neq 0$ of size $|{\hat A}|$ such that $\sum_{a_i\in\hat A}(w_i)=0$. Let $w^\intercal$ denote the \emph{transpose} of the vector $w$.

\subsection{Strategic Dimensions}
If we were describing the standard imitation interpretation of the replicator dynamics (\citealt[Section 4.4.3]{weibull1997evolutionary}), we would at this point
imagine that a newly born agent selects a single `mentor' whose action is imitated. 
However, we introduce here an extension to the replicator dynamics, by
positing that there is a set $D = \{1, \ldots, |D|\}$ of dimensions of behaviours, with $|D| \ge 2$. 
A typical dimension will be denoted by $d$.
Each $d \in D$ is associated with a finite set $A_d$ of traits, and a typical trait in dimension $d$ will be denoted $a_d \in A_d$.
We also write $A_{-d} \:= \prod_{d' \ne d} A_{d'}$.

The set of actions $A$, which in most of the evolutionary game theory literature is just a collection of elements lacking internal features, is defined here as follows: each type $a \in A$ is now defined to be a $D$-tuple, i.e., $a = (a_d)_{d \in D}=(a_1,...,a_{|D|})$. 
 
The interpretation is that the set of traits within the set $A_d$ of dimension $d$ are mutually exclusive; an agent can exhibit only one trait $a_d \in A_d$.
In contrast, traits in different dimensions are complementary. 
With slight abuse of notation we write $a_d \in a$ when trait $a_d$ is one of the components in the $D$-tuple $a$.

We demonstrate the 
multi-dimensionality of types in the following example.
\begin{example}
[The Partially-Enforceable Prisoner's Dilemma]\label{ex:PD-contracts} Consider an interaction in which each player simultaneously makes two choices:
\begin {enumerate}
\item making either a  simple contract-enforceable promise (abbreviated, $\ss$) or an ambiguous non-enforceable promise (abbreviated, $\aa$) to cooperate, and
\item cooperating (abbr., $\cc$) or defecting (abbreviated, $\dd$)
in a prisoner's dilemma. 
\end {enumerate}
The set of actions is two-dimensional and includes $4=2^2$ actions: $A=(\sc,\sd,\ac,\ad)$, where the first dimension describes the type of promise, and the second dimension describes the behaviour in the prisoner's dilemma. An ambiguous promise induces a cost of 5. When both players cooperate, they obtain a payoff of 15. When a player's promise is ambiguous, a player can gain 6 by defecting 
(the net gain is 1, given the cost of 5 induced by the ambiguous promise), and in this case her opponent loses 9. By contrast, when a player's promise is simple, she loses 5 from defecting (without affecting her opponent's payoff.) The payoff matrix is summarised in Table \ref{tab:Payoff-Matrix-of}. Note that action $\ad$ strictly dominates action $\sc$, which, in turn, strictly dominates the two remaining actions $\sd$ and $\ac$.
\hfill \myenddefinition
\end{example}

\begin{table}[t]
\caption{\label{tab:Payoff-Matrix-of}Payoff Matrix of a Partially-Enforceable Prisoner's Dilemma}

\begin{centering}
\par\end{centering}
\centering{}%
\begin{tabular}{c|c|c|c|c|}
\multicolumn{1}{c}{} & \multicolumn{1}{c}{\textcolor{red}{$\sc$}} & \multicolumn{1}{c}{\textcolor{red}{$\sd$}} & \multicolumn{1}{c}{\textcolor{red}{$\ac$}} & \multicolumn{1}{c}{\textcolor{red}{$\ad$}}\tabularnewline
\cline{2-5} \cline{3-5} \cline{4-5} \cline{5-5} 
\textcolor{blue}{$\sc$} & \textcolor{blue}{$15$},~\textcolor{red}{15} & \textcolor{blue}{$15$},~\textcolor{red}{10} & \textcolor{blue}{$15$},~\textcolor{red}{10} & \textcolor{blue}{$6$},~\textcolor{red}{16}\tabularnewline
\cline{2-5} \cline{3-5} \cline{4-5} \cline{5-5} 
\textcolor{blue}{$\sd$} & \textcolor{blue}{$10$},~\textcolor{red}{15} & \textcolor{blue}{$10$},~\textcolor{red}{10} & \textcolor{blue}{$10$},~\textcolor{red}{10} & \textcolor{blue}{1},~\textcolor{red}{16}\tabularnewline
\cline{2-5} \cline{3-5} \cline{4-5} \cline{5-5} 
\textcolor{blue}{$\ac$} & \textcolor{blue}{$10$},~\textcolor{red}{15} & \textcolor{blue}{$10$},~\textcolor{red}{$10$} & \textcolor{blue}{10},~\textcolor{red}{10} & \textcolor{blue}{$1$},~\textcolor{red}{16}\tabularnewline
\cline{2-5} \cline{3-5} \cline{4-5} \cline{5-5} 
\textcolor{blue}{$\ad$} & \textcolor{blue}{$16$},~\textcolor{red}{6} & \textcolor{blue}{16},~\textcolor{red}{1} & \textcolor{blue}{$16$},~\textcolor{red}{1} & \textcolor{blue}{7},~\textcolor{red}{7}\tabularnewline
\cline{2-5} \cline{3-5} \cline{4-5} \cline{5-5} 
\end{tabular}
\end{table}
\subsection{Frequencies of Traits and Payoffs}

The payoff function $u: A \times A \to \R^+$ remains as before, as do the notations $u_{\a}(a)$ and $u_\a$ from Equations (\ref{eq:agentFitness}) and (\ref{eq:meanFitness}).
To this we add new expressions.
Let the (marginal) frequency of trait $a_d\in A_d$ 
(resp., trait profile $a_{-d}\in A_{-d}$) in  state $\a$ be denoted by
\[
\a(a_d) \:= \sum_{a_{-d} \in A_{-d}} \a(a_d, a_{-d}),\;\;\;\;\;\;\;\a(a_{-d}) \:= \sum_{a_{d} \in A_{d}} \a(a_d, a_{-d}).
\]
Let $\supp(\a)$ (resp., $\supp_d(\a)$,
$\supp_{-d}(\a)$) denote the set of actions (resp., traits, trait-profiles) with positive frequency in state $\a$, that is,
\begin{align}
	\label{eq:supp-def}
& \supp(\a) \:= \{a \in A \mid \a(a) > 0\},\,\,\,\, \supp_d(\a) \:= \{a_d \in A_d \mid \a(a_d) > 0\}\\
& \;\;\;\;\;\;\;\;\;\;\;\;\;\;\;\;\;\;\;\;\nonumber \supp_{-d}(\a) \:= \{a_{-d} \in A_{-d} \mid \a(a_{-d}) > 0\}.
\end{align}
Let $\INT(\D(A))$ denote the set of interior (full-support) states, i.e., $$\INT(\D(A))=\{(\a\in \D(A)|\supp(\a)=A\}.$$
\begin{definition}
   For a given state $\a$, we call the collection of actions such that each of the traits in each action has positive frequency in state $\a$ the \emph{rectangular closure of the support} of $\a$ and denote it by $\overline{\supp}(\a)$. In detail,
   \begin{equation}
	\label{eq:supp-rec-def}
        \overline{\supp}(\a) \:= \{a \in A \mid \a(a_d) > 0 \text{ for all }  a_d \in a\}.
    \end{equation}
    
    It is immediate that $\supp(\a) \subseteq \overline{\supp}(\a)$. We say that $\a$ has \emph{rectangular support} if $\supp(\a) = \overline{\supp}(\a)$. Let $\Rec(\D(A))$ denote the set of states with rectangular support. 
    Note that any interior state has rectangular support, i.e.,  $\INT(\D(A))\subseteq \Rec(\D(A))$.
    \hfill \myenddefinition
\end{definition}

These notions of support are illustrated by revisiting Example \ref{ex:PD-contracts}.
\addtocounter{example}{-1}
\begin{example}[continued] 
    Let $\a_1$ be the state that places weight $50\%$ on each of the types $\sc$ and $\ad$.
	Observe that state $\a_1$ does not have rectangular support: $\supp(\a_1) = \{\sc, \ad\}\neq\overline{\supp}(\a_1) = A$. Let $\a_2$ be the state that places weight $50\%$ on each of the types $\sc$ and $\sd$. Then state $\a_2$ has rectangular support: $\supp(\a_2) =\overline{\supp}(\a_2) = \{\sc, \sd\}$. \hfill \myenddefinition
\end{example}

For each $a_d \in \supp_d(\a)$ define $u_\a(a_d)$ to be the 
mean (marginal) payoff of agents with trait $a_d$:
\begin{equation}
	\label{eq:trait-payoff}
u_\a(a_d) \:= \frac{1}{\a(a_d)}\sum_{a_{-d} \in A_{-d}} \a(a_d, a_{-d}) \cdot u_\a (a_d, a_{-d}).
\end{equation}

The definitions of the trait frequencies and payoffs are illustrated as follows.
\addtocounter{example}{-1}
\begin{example}[continued] 
Let $\a$ place weight $40\%$ on type $\sc$, $30\%$ on $\sd$,  $20\%$ on $\ac$, and $10\%$ on $\ad$. These weights imply that the marginal frequencies are: $\a(\ss)=40\%+30\%=70\%$, $\a(\aa)=1-\a(\ss)=30\%$, $\a(c)=60\%$, $\a(\dd)=40\%$. A simple calculation shows that the types' payoffs are: $u_{\a}(\sc)\approx14.1$,  $u_{\a}(\sd)=u_{\a}(\ac)\approx9.1$, $u_{\a}(\sd)\approx15.1,$  
and that the mean payoff in the population is $u_{\a}\approx11.7$.
Applying (\ref{eq:trait-payoff}) implies that trait payoffs are: $u_{\a}(\ss)=\frac{40\% \cdot 14.1 + 30\% \cdot 9.1}{70\%}\approx12$, $u_{\a}(\aa)=\frac{20\% \cdot 9.1 + 10\% \cdot 15.1}{30\%}\approx11.1$, $u_{\a}(\cc)=12.4$, $u_{\a}(\dd)=10.6$. 
\hfill \myenddefinition
\end{example}

Next, we observe that the average marginal payoff in each dimension is equal to the average payoff of the population $u_\a$. This is so because:
\begin{equation}
	\label{eq:average_marginal_payoff}	
		\sum_{a_d \in A_d} \a(a_d)u_{\a}(a_d) =  \sum_{a_d \in A_d} \sum_{a_{-d} \in A_{-d}} \a(a_d, a_{-d}) u_\a (a_d, a_{-d}) 
		= \sum_{a \in A} \a(a) u_{\a}(a)
		= u_\a.\qedhere
		\end{equation}	

\subsection {Recombinator Dynamics}
In our model, 
a new agent may either with probability $1-r$ select a single 
incumbent (mentor) and directly imitate all the traits of that mentor, or with probability $r$ sample $|D|$ mentors, one for each dimension, and imitate a trait from each of those mentors. In greater detail, if the new agent has a single mentor, then that mentor is sampled from the current state $\a$. 
If instead there are multiple mentors, the new agent independently samples $|D|$ different mentors and for each $d$ imitates the $d$-th trait of the $d$-th mentor. Importantly, the sampling of mentors is not uniform: the more successful a mentor of type $a$ is at 
state $\a$, as measured by that mentor's payoff relative to the population average, the more mentees he or she attracts.
At the same time, the greater $\a(a)$, that is, the greater the proportion of action $a$ in population $\a$, the greater the likelihood that one of the mentors of type $a$ will be selected.

Calling $r \in [0,1]$ the 
\emph{recombination rate}, the resulting \emph{recombinator dynamics} is 
\begin{equation}
	\label{eq:recombinatorMotion}
	\dot{\a}(a) = (1-r)\frac{\a(a)u_\a(a)}{u_\a} + r\prod_{a_d \in a} \frac{\a(a_d)u_\a(a_d)}{u_\a} - \a(a),  
\end{equation}
where the first component is the inflow of new agents who imitate a single mentor, the second component is the inflow of new agents who combine learning from $|D|$ mentors, and the last term is the outflow of dying agents.

%Time is continuous and is denoted by the superscript $t\geq0$. 
The initial state of a trajectory is denoted by $\a^0$, and the dynamics of a trajectory is determined by the equation of motion given by Equation (\ref{eq:recombinatorMotion}), with $\a^t$ substituting for $\a$ and $\dot{\a^t} \equiv \frac{d\a^t}{dt}$ denoting the time derivative of the state. 

When $r=0$, Equation (\ref{eq:recombinatorMotion}) reduces to:
\begin{equation}
	\label{eq:replicatorMotion}
	\dot{\a}(a) = \frac{\a(a)u_\a(a)}{u_\a} - \a(a)= \frac{1}{u_\a}\a(a)(u_\a(a) - u_\a),
\end{equation} 
which is the replicator dynamics (up to a payoff-dependent rescaling of time; see \citealp[Section 4.4.3]{weibull1997evolutionary}).

When $r=1$, we have a model of pure combination of traits at each time by each agent that we call the 
\emph{combinator dynamics}, characterised by the equation of motion:
\begin{equation}
	\label{eq:combinatorMotion}
	\dot{\a}(a) = \prod_{a_d \in a} \frac{\a(a_d)u_\a(a_d)}{u_\a} - \a(a).
\end{equation}

\begin{remark}
\citet[Section 4.4.3]{weibull1997evolutionary} presents a more general imitation dynamic (for the case of $r=0$) in which $u_\a(a)$ (resp., $u_\a$) in Equation (\ref{eq:replicatorMotion}) is replaced by $w(u_\a(a))$ (resp., $w(u_\a)$), where $w:\mathbb{R}\rightarrow\mathbb{R}^{++}$ is a strictly monotone function. Our dynamics can capture this general version by a normalisation of the payoff function. That is, if the original payoff function is denoted $\pi:A\rightarrow\mathbb{R}$ (which might be measured in dollars), then $u \equiv w(\pi)$ is the normalised payoff following a monotone transformation to cardinal units, measuring the probability
of being chosen as a mentor that is induced by the dollar payoff. \hfill \myenddefinition
\end{remark}

\subsection{Forward Invariance}% for Rectangular Initial States}
It is well known that under the replicator dynamics the support of any state remains identical along trajectories at all finite times $t \ge 0$; this property is called \emph{forward invariance}.
The support may decrease (but not increase) as $t \to \infty$ (i.e., if $r=0$, then $\supp(\a^t) = \supp(\a^0)$  $\forall t > 0,$ and
$\lim_{t \to \infty} \supp(\a^t)$ $\subseteq \supp(\a^0)$).
Moreover, these properties hold in the broader class of imitative dynamics 
(as defined in \citealt[Section 5.4]{sandholm2010population}). 

A related property holds for the recombinator dynamics with $r>0$ with one key difference: the support, if it is not rectangular at time zero, instantaneously increases to its rectangular closure for any $t>0$ (as demonstrated in the example below). That is, for any $r > 0$ and any trajectory starting at $\a^0$:
\begin{equation}
	\label{eq:supports}
	\supp(\a^t) = \overline{\supp}(\a^0)\textrm{ for all } t > 0, \textrm{ \,\,\,\,and \,\,\,\,} \lim_{t \to \infty} \supp(\a^t) \subseteq \overline{\supp}(\a^0).
\end{equation}
\addtocounter{example}{-1}
\begin{example}[continued] 
Let the initial state $\a^0$ place positive weights on types $\sc$ and $\ad$ (i.e., $\supp(\a^0) = \{\sc,\ad\})$. Observe that $\supp(\a^t) = \overline{\supp}(\a^0) = A$ for all $t > 0$. This is because every time a new agent is born there is a positive probability that a mentor pair $(\sc,\ad)$ will be sampled, leading to the creation of type $\sd$, and similarly a positive probability that mentor pair $(\ad,\sc)$ will be sampled (the ordering makes a difference), leading to the creation of type $\ac$. Note that this always holds, even though actions $\sd$ and $\ac$ induce strictly dominated payoffs.
\hfill \myenddefinition
\end{example}

The recombinator dynamics exhibit the same continuous (and forward invariant) behaviour as the imitative dynamics (\citealt[Section 5.4]{sandholm2010population}) at all positive times $t>0$. 
Instantaneous discontinuities in the recombinator dynamics can only occur at time zero, and only when the support of the initial state is not rectangular. 

\subsection {Stability and Convergence}
We conclude this section with a few standard definitions of dynamic stability and convergence. 
A state is stationary if it is a fixed point of the dynamics.
\begin{definition}
   A state $\a \in \D(A)$ is \emph{stationary} if $\dot\a(a)=0$ for each $a \in A$.
\end{definition}
It is well-known that any convergent limit state of a trajectory $\lim_{t \to \infty} \a^t$ must be a stationary state (Proposition 6.3 of \citealp{weibull1997evolutionary}).

A state is Lyapunov stable if the trajectory of a population starting out near that state
always remains close, and it is asymptotically stable if, in addition, the trajectory eventually converges to the equilibrium state. A state is
unstable if it is not Lyapunov stable. It is well known (see, e.g., \citealp[Section 6.4]{weibull1997evolutionary})
that every Lyapunov stable state is stationary.

\begin{definition}
\label{def:lyaponouv-stability}
A state $\a\in\D (A)$
is \emph{Lyapunov stable} if for every neighbourhood $U$ of $\a^*$
there is a neighbourhood $V\subseteq U$ of $\a^*$ such that if the
initial state of a trajectory satisfies $\a_0\in V$ then $x^t\in U$
for all $t>0$. A state is \emph{unstable} if it is not
Lyapunov stable.
\end{definition}

A Lyapunov stable state is asymptotically stable if starting from every sufficiently close initial condition, the trajectories converge to that state.

\begin{definition}\label{def:asymptotic-stable}
A state $\a^{*}\in\D(A)$ is \emph{asymptotically stable} if (1) it is Lyapunov stable and (2) there is an open neighbourhood $U$
of $\a^{*}$ such that all trajectories initially in $U$ converge
to $\a^{*},$ i.e., $\a^0\in U$ $\Rightarrow$ $\lim_{t\rightarrow\infty}\a^t=\a^{*}$.  
\end{definition}

The \emph{basin of attraction} of a state $x^*$, which is denoted by $BA(\a^*)$ is the set of initial states that converge to this state, i.e.,
\begin{equation}\label{eq:basin-of-attraction-def}
    BA(\a^*)=\{\a \in A | \a^0=\a \Rightarrow \lim_{t\rightarrow\infty}\a^t=\a^{*}.\}
\end{equation}
Finally, we say that state $\a^*$ is \emph{globally stable} if its basin of attraction includes all interior states, i.e., $\INT(\D(A))\subseteq BA(\a^*).$

\section{Payoff Monotonicity and Stationary States} \label{sec-staionary}
In this section we explore the monotonicity properties of the recombinator dynamics. We first demonstrate that the recombinator dynamics violates payoff monotonicity. We then define a new payoff function, called $r$-payoff, and we show that the recombinator dynamics is monotone with respect to the  $r$-payoffs. Finally, we show that the induced dynamics on the traits (rather than on the types) does satisfy payoff monotonicity (with respect to the original payoff function $u$).

\subsection {Non-Monotonicity of the Recombinator Dynamics}
 Dynamics are payoff monotone (see, e.g., \citealt[Definition 4.2]{weibull1997evolutionary}) if a type with a higher payoff grows at a higher rate.
 \begin{definition}
    Dynamics $\dot\a$ are \emph{payoff monotone} if $u_\a (a)>u_\a (a') \iff \frac{\dot\a(a)}{\a(a)}>\frac{\dot\a(a')}{\a(a')}$ for each state $\a \in \Delta(A)$ and for each pair of types $a,a' \in \supp(\a)$.
 \end{definition}
  It is well known that the replicator dynamic satisfies payoff monotonicity, which implies that its stationary states are those that satisfy the property that all incumbent types have the same payoff. Formally 
 \begin{fact}(\citealp[Proposition 5.9]{weibull1997evolutionary})~\label{claim-replicator-stationary}
 \begin {enumerate}
 \item The replicator dynamic is payoff monotone.
 \item A state $\a$ is stationary if and only if $u_\a(a)=u_\a(a')$ for all $a,a'\in\supp(\a).$
 \end{enumerate}
 \end{fact}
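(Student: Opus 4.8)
The plan is to handle the two parts in sequence, with Part~1 (payoff monotonicity) doing essentially all the work and Part~2 following as a short corollary. For Part~1, the first step is to read off the relative growth rate of each type directly from the closed form of the replicator equation~(\ref{eq:replicatorMotion}). For any $a \in \supp(\a)$, so that $\a(a) > 0$, dividing $\dot\a(a)$ by $\a(a)$ gives
\[
\frac{\dot\a(a)}{\a(a)} = \frac{u_\a(a) - u_\a}{u_\a},
\]
which exhibits the relative growth rate as an order-preserving function of the type payoff $u_\a(a)$ alone, shifted and rescaled only by the quantity $u_\a$ that is common to all types.

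The second step is to subtract these expressions for two types $a, a' \in \supp(\a)$, obtaining
\[
\frac{\dot\a(a)}{\a(a)} - \frac{\dot\a(a')}{\a(a')} = \frac{u_\a(a) - u_\a(a')}{u_\a}.
\]
Because the payoff function takes values in $\R^{++}$, the mean payoff $u_\a$ is strictly positive, so the sign of the left-hand side agrees with the sign of $u_\a(a) - u_\a(a')$. This yields both directions of the payoff-monotonicity biconditional simultaneously. The only point demanding care is the strict positivity of the denominator $u_\a$, which is precisely why the standing assumption $u : A^2 \to \R^{++}$ is invoked.

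For Part~2, I would begin by noting that $\dot\a(a) = 0$ holds automatically whenever $a \notin \supp(\a)$ (since then $\a(a) = 0$), so stationarity reduces to the condition $u_\a(a) = u_\a$ for every $a \in \supp(\a)$. The forward implication of the claimed characterisation is then immediate: if every support type earns the mean payoff, any two of them earn equal payoffs. For the converse, I would use the fact that $u_\a = \sum_{a \in \supp(\a)} \a(a)\,u_\a(a)$ is a convex combination of the support-type payoffs, with weights summing to one; hence if all of these payoffs share a common value $c$, the mean $u_\a$ must also equal $c$, which restores the condition $u_\a(a) = u_\a$ and thus stationarity.

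The only (mild) obstacle is making explicit that ``all support payoffs equal one another'' is logically \emph{equivalent} to ``all support payoffs equal the mean,'' rather than merely implied by it; this equivalence rests on the convex-combination identity for $u_\a$ together with $\sum_{a \in \supp(\a)} \a(a) = 1$. Beyond this bookkeeping, no genuine difficulty arises, since the entire statement is a direct consequence of the explicit form~(\ref{eq:replicatorMotion}) of the replicator equation.
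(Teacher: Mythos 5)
Your proposal is correct: dividing the replicator equation~(\ref{eq:replicatorMotion}) by $\a(a)>0$, subtracting the resulting growth rates, and using $u_\a>0$ gives monotonicity, and the convex-combination identity $u_\a=\sum_{a\in\supp(\a)}\a(a)u_\a(a)$ settles the equivalence between ``all support payoffs equal'' and ``all support payoffs equal the mean.'' The paper itself omits a proof of this fact (citing \citealp[Proposition 5.9]{weibull1997evolutionary}), but its proof of the generalisation in Proposition~\ref{pro:stationary-states-recombinator} proceeds by exactly the same argument applied to Equation~(\ref{eq:recombDynamicsSimple}), of which your computation is the $r=0$ special case.
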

 
 Given this, it is noteworthy that our next example demonstrates that the recombinator dynamics violate payoff monotonicity when the recombination rate is positive, and that this can allow strictly dominated types to be asymptotically stable. 
  \addtocounter{example}{-1}
\begin{example}[continued] 
Fix 
  a sufficiently small $\e<<1$. Consider an initial state $\a$ that puts weight $1-\e$ on type $\sc$ and weight $\e$ on type $\ad$. The payoff matrix  (Table \ref{tab:Payoff-Matrix-of}) implies that $u_\a(\sc)=15-9\e<u_\a(\ad)=16-9\e$. In what follows, we show that $\frac{\dot\a(\sc)}{\a(\sc)}>\frac{\dot\a(\ad)}{\a(\ad)}$, which violates payoff monotonicity. Observe that $u_\a(\ss)=u_\a(\cc)=15-9\e$, $u_\a(\aa)=u_\a(\dd)=16-9\e$, and $u_\a\approx15-8\e$. Substituting these values in the recombinator dynamics (Equation \ref{eq:recombinatorMotion}) yields: 
 \[
\dot\a(\sc)=(1-r)\frac{(1-\e)(15-9\e)}{15-8\e}+r\left(\frac{(1-\e)(15-9\e)}{15-8\e}\right)^2-(1-\e)=O(\e)\Rightarrow \frac{\dot\a(\sc)}{\a(\sc)}=O(\e),
\]
\[
\dot\a(\ad)=(1-r)\frac{\e(16-9\e)}{15-8\e}+r\left(\frac{\e(16-9\e)}{15-8\e}\right)^2-\e=\e\frac{16(1-r)}{15}-\e+O(\e^2)\Rightarrow \frac{\dot\a(\ad)}{\a(\ad)}=\frac{1-16r}{15}+O(\e).
\]
 Observe that for a sufficiently small $\e$, $\frac{\dot\a(\sc)}{\a(\sc)}>\frac{\dot\a(\ad)}{\a(\ad)}$ if $r>\frac{1}{16}$, while the opposite inequality holds if $r<\frac{1}{16}$. 
We later show (Corollaries \ref{cor-pure-nec}--\ref{cor-pure-suf} ) that the strictly dominated type $\sc$ is an asymptotically stable state if $r>\frac{1}{16}$ and it is unstable if $r<\frac{1}{16}$, and that the state $\ad$ is asymptotically stable for all values of $r$ (and one can show that are no other asymptotically stable states). Figure \ref{fig:PDWC-near-ac-r-half} illustrates the growth rates, phase plots and the basins of attractions of $\ac$ and $\sd$ for two values of $r$: 0.1 and 0.9.
 \end{example}

\begin{figure}
\begin{center}
\caption{Relative Growth Rates and Phase Portraits in Example
 \ref{ex:PD-contracts}.}
 \label{fig:PDWC-near-ac-r-half}
 \includegraphics[scale=0.385]{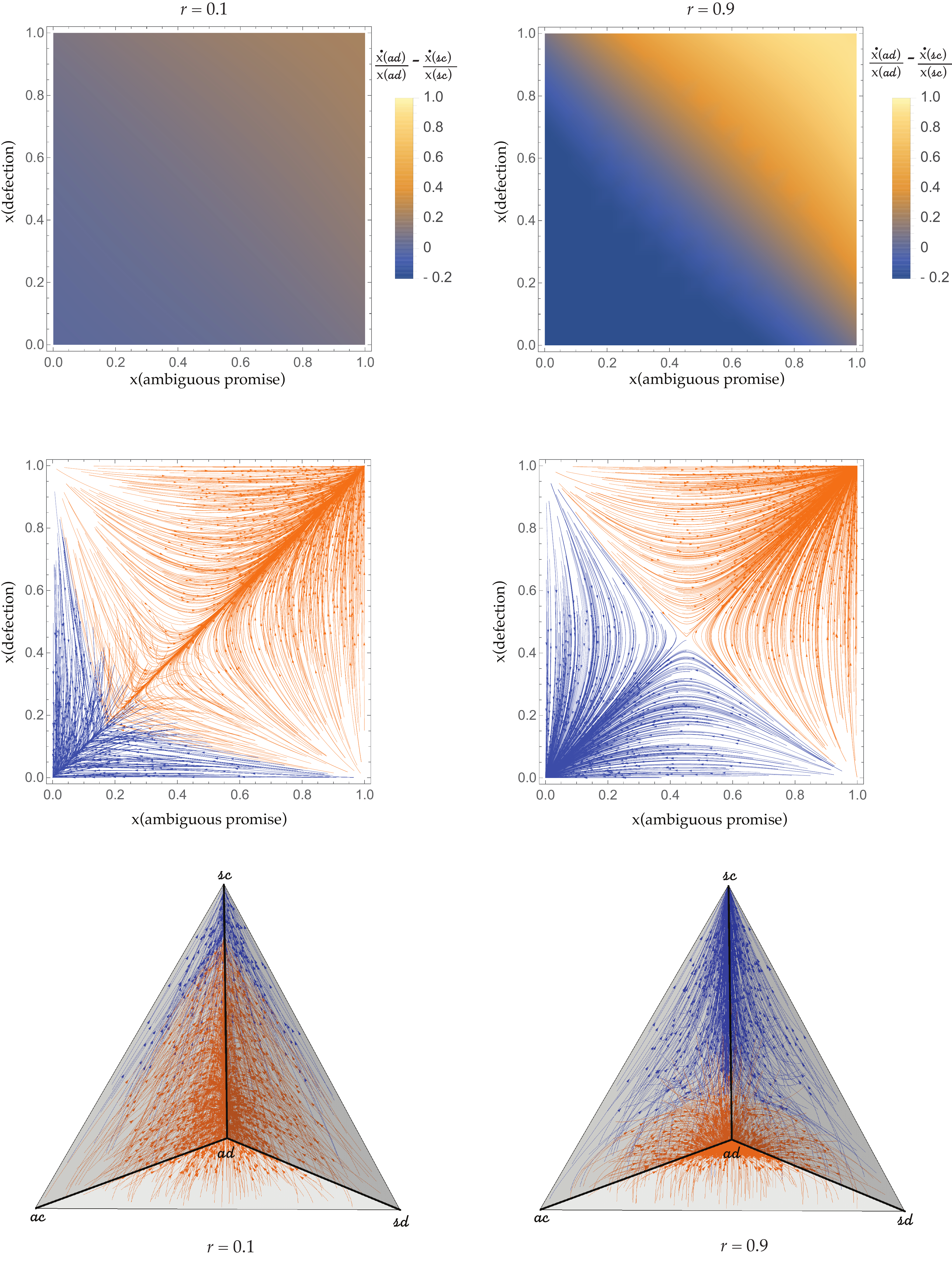}
  \end{center}
   {\footnotesize{}\
    {\small{}
   {This figure illustrates the relative growth rates and the basins of attraction in the partially-enforceable Prisoner's Dilemma. In all the panels, the left side describes recombination rate of $r=0.1$ and the right side describes $r=0.9$. The upper panel illustrates  the difference between the relative growth rates of the types $\ad$ and $\sc$ (i.e., $\frac{\dot\a(\ad)}{\a(\ad)}-\frac{\dot\a(\sc)}{\a(\sc)}$) under the recombinator dynamics given trait-independent states. 
   The $x \, (y)$ axis describes the frequency of  trait $\aa$ ($\dd$).  
   The middle panel illustrates the evolution of the projection of the trajectories on the plane defined by the frequencies of the traits $\aa$ and $\dd$. The bottom panel illustrates the evolution of the trajectories (i.e., the phase plot) in the full three dimensional space of $\Delta(A)$.} Trajectories that converge to $\sc$ (everyone giving a simple promise and cooperating) appear in blue, and those that converge to $\ad$ (everyone giving an ambiguous promise and defecting) appear in orange. 
   }
   }{\footnotesize\par}
 \end{figure}

 \subsection {$r$-Payoffs}
In this section, we define a new payoff function $z^r_\a(a)$, and show that the recombinator dynamics is monotone with respect to $z^r_\a(a)$. 

The recombinator dynamics of Equation (\ref{eq:recombinatorMotion}) can be rewritten as:
 \begin{equation}
	\label{eq:recombDynamics2}
	\dot{\a}(a) = \a(a)(1-r) \frac{ u_\a(a)}{u_\a} +  r \prod_{a_d \in a} \a(a_d) \prod_{a_d \in a} \frac{u_\a(a_d)}{u_\a} - \a(a).
\end{equation}
This implies that for any $a \in \supp(\a)$:
\begin{align}
	\label{eq:relativeGrowth1}
	\frac{\dot{\a}(a)}{\a(a)} &= (1-r) \frac{ u_\a(a)}{u_\a} +    r  \frac {\prod_{a_d \in a}\a(a_d)}{\a(a)}    \prod_{a_d \in a} \frac{u_\a(a_d)}{u_\a} - 1.  
\end{align}
Now define for any $a \in \supp(\a)$ the \emph{trait-to-type ratio} $m_\a(a)$
\begin{equation}
\label{eq:traitToType}
m_\a(a) \:= \frac {\prod_{a_d \in a}\a(a_d)}{\a(a)}.
\end{equation}
The trait-to-type ratio $m_\a(a)$ is the ratio between the product of the weights of the traits in $\a$ to the weight of $\a$ itself. In states in which the event of a randomly chosen agent having a trait in one dimension (say, being cooperative in Example \ref{ex:PD-contracts}) is independent of that agent having a trait in another dimension (say, giving a simple promise), $m_\a(a)\equiv 1$. We call such states trait independent.

\begin{definition}
   A state $\a$ is \emph{trait independent} if $\a(a)=\prod_{a_d \in a}x(a_d)$ for each $a\in A$.
\end{definition}
Thus, the trait-to-type ratio $m_\a(a)$ captures the distance of $\a(a)$ from trait independence. Values of $m_\a(a)<1$ (respectively, $m_\a(a)>1$) represent positive correlation between the traits composing type $a$, in the sense that that the probability of a randomly chosen agent having type $a$ is larger (respectively, smaller) than the product of the probabilities  of $|D|$ randomly chosen agents each having one of the traits $a_d$ in $a$.
Define for each state $\a$ and each action $a\in\supp(\a)$:
\begin{equation}\label{eq:zed_def}
	z^r_\a(a) \:= (1-r) \frac{ u_\a(a)}{u_\a} +  r m_\a(a) \prod_{a_d \in a} \frac{u_\a(a_d)}{u_\a}.
\end{equation}
We call $z^r_{\a}(a)$ the \emph{$r$-payoff} of action $a\in\supp(\a)$ at state $\a$. 

Observe that without recombination ($r=0$) the $r$-payoff coincides with the standard payoff function $u_\a(a)$  (up to a normalisation attained by dividing by the mean payoff $u_\a$):
$z^0_{\a}(a)=\frac{ u_\a(a)}{u_\a}$. 
Further observe that in the opposite case of full recombination ($r=1$) the $r$-payoff of a strategy $a$ depends only on the payoffs of its traits (and not of its own payoff $u_\a(a)$): $z^1_{\a}(a)=m_\a(a) \prod_{a_d \in a} \frac{u_\a(a_d)}{u_\a}$. 
In the general case of $r\in(0,1)$, the $r$-payoff is a convex combination of these two terms: $z^r_{\a}(a)=(1-r)z^0_{\a}(a)+rz^1_{\a}(a)$.

Substituting $z^0_{\a}$ and $m_\a(a)$ in Equation (\ref{eq:relativeGrowth1}) yields:
\begin{equation}
	\label{eq:recombDynamicsSimple}
	\frac{\dot{\a}(a)}{\a(a)} = z^r_\a (a) - 1.  
\end{equation}
The crucial point here is that re-casting the recombinator dynamic in the form of Equation (\ref{eq:recombDynamicsSimple}) shows that, when interpreted in terms of the vector field determined by $z^r$ instead of  the $u$ payoff vector field, the recombinator exhibits replicator-like behaviour. 
In particular, Equation (\ref{eq:recombDynamicsSimple}) implies that the recombinator dynamic is monotone with respect to 
$z^r_\a (a)$, and that a state $\a$ is stationary if and only if  all incumbent types have $r$-payoff of 1. 
This is formalised in Proposition \ref{pro:stationary-states-recombinator}, which generalises Fact \ref{claim-replicator-stationary} to recombinator dynamics, where the game payoff function $u$ is replaced by the $r$-payoff function $z^r$.

\begin{proposition}
\label {pro:stationary-states-recombinator}
\label{prop-1}
The recombinator dynamics 
satisfy:
\begin{enumerate}
    \item $r$-payoff monotonicity:  $z^r_\a (a) > z^r_\a (a') \Leftrightarrow \frac{\dot\a(a)}{\a(a)} > \frac{\dot\a(a')}{\a(a')}$ for each state $\a \in \Delta(A)$ and for each pair of types $a,a' \in \supp(\a)$.
    \item A state $\a$ is stationary if and only if $z^r_\a(a) = 1$ for all $a \in \supp(\a).$
 \end{enumerate}
\end{proposition}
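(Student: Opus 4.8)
The plan is to lean entirely on the reformulation already established in Equation (\ref{eq:recombDynamicsSimple}), namely $\frac{\dot\a(a)}{\a(a)} = z^r_\a(a) - 1$ for every $a \in \supp(\a)$, which recasts the recombinator dynamics as a replicator-type system driven by the $r$-payoff field. Part 1 is then immediate: for any $a, a' \in \supp(\a)$, subtracting the two instances of (\ref{eq:recombDynamicsSimple}) gives $\frac{\dot\a(a)}{\a(a)} - \frac{\dot\a(a')}{\a(a')} = z^r_\a(a) - z^r_\a(a')$, so the left side is positive exactly when the right side is, which is the claimed $r$-payoff monotonicity.

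For Part 2, the ``only if'' direction is also direct. If $\a$ is stationary then $\dot\a(a) = 0$ for every $a \in A$; restricting to $a \in \supp(\a)$, where $\a(a) > 0$, Equation (\ref{eq:recombDynamicsSimple}) forces $z^r_\a(a) - 1 = 0$, i.e. $z^r_\a(a) = 1$. (For $a \in \supp(\a)$ every trait $a_d \in a$ satisfies $\a(a_d) \ge \a(a) > 0$, so $z^r_\a(a)$ and the trait payoffs $u_\a(a_d)$ entering it are well-defined.)

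The ``if'' direction is where the real work lies. Assuming $z^r_\a(a) = 1$ for all $a \in \supp(\a)$, Equation (\ref{eq:recombDynamicsSimple}) immediately yields $\dot\a(a) = \a(a)(z^r_\a(a) - 1) = 0$ for every $a \in \supp(\a)$. To conclude stationarity I must still show $\dot\a(a) = 0$ for the types $a \notin \supp(\a)$. For such a type $\a(a) = 0$, so the replicator and outflow terms of (\ref{eq:recombinatorMotion}) vanish and $\dot\a(a) = r \prod_{a_d \in a} \frac{\a(a_d) u_\a(a_d)}{u_\a}$. When $r = 0$ this is automatically zero and we are done, so the only genuine obstacle is the case $r > 0$, where a type absent from the support could in principle receive strictly positive inflow through recombination --- precisely the instantaneous support expansion flagged in the forward-invariance discussion.

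The key step that removes this obstacle is a global summation identity. I would first multiply (\ref{eq:recombDynamicsSimple}) by $\a(a)$ and sum over $a \in \supp(\a)$, using $\a(a) m_\a(a) = \prod_{a_d \in a}\a(a_d)$ and $\sum_{a}\a(a)u_\a(a) = u_\a$, to rewrite the hypothesis $z^r_\a \equiv 1$ on the support as $\sum_{a \in \supp(\a)} \prod_{a_d \in a} \frac{\a(a_d) u_\a(a_d)}{u_\a} = 1$ (the factor $r$ cancels since $r>0$). On the other hand, because $A = \prod_{d \in D} A_d$, the analogous sum over all of $A$ factorises as $\prod_{d \in D}\big( \sum_{a_d \in A_d} \frac{\a(a_d) u_\a(a_d)}{u_\a}\big)$, and each factor equals $1$ by the average-marginal-payoff identity (\ref{eq:average_marginal_payoff}); hence $\sum_{a \in A} \prod_{a_d \in a} \frac{\a(a_d) u_\a(a_d)}{u_\a} = 1$ as well. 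Subtracting, $\sum_{a \notin \supp(\a)} \prod_{a_d \in a} \frac{\a(a_d) u_\a(a_d)}{u_\a} = 0$; since every summand is nonnegative and the payoffs $u_\a(a_d), u_\a$ are strictly positive, each product $\prod_{a_d \in a}\a(a_d)$ must vanish, i.e. every $a \notin \supp(\a)$ has some trait $a_d$ with $\a(a_d) = 0$. This is exactly rectangularity of $\supp(\a)$, and it forces $\dot\a(a) = r\prod_{a_d \in a}\frac{\a(a_d)u_\a(a_d)}{u_\a} = 0$ for all $a \notin \supp(\a)$, completing the proof that $\a$ is stationary. I expect this factorisation-and-nonnegativity argument to be the crux; everything else is bookkeeping on top of (\ref{eq:recombDynamicsSimple}).
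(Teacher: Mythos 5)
Your proof is correct, and for Part 1 and the ``only if'' direction of Part 2 it follows the paper's route exactly: everything is read off from the identity $\frac{\dot\a(a)}{\a(a)} = z^r_\a(a)-1$ of Equation (\ref{eq:recombDynamicsSimple}). The genuine difference is in the ``if'' direction of Part 2. The paper's proof stops after observing that $z^r_\a(a)=1$ forces $\dot\a(a)=0$ for $a\in\supp(\a)$ and then asserts stationarity; but stationarity is defined as $\dot\a(a)=0$ for \emph{every} $a\in A$, and when $r>0$ a type outside the support can receive strictly positive combinator inflow $r\prod_{a_d\in a}\a(a_d)u_\a(a_d)/u_\a$ --- exactly the instantaneous support expansion discussed around Equation (\ref{eq:supports}). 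You correctly identify this as the real content of the claim and close it: summing $\a(a)z^r_\a(a)=\a(a)$ over the support gives $\sum_{a\in\supp(\a)}\prod_{a_d\in a}\a(a_d)u_\a(a_d)/u_\a=1$, while the product structure of $A$ and Equation (\ref{eq:average_marginal_payoff}) give $\sum_{a\in A}\prod_{a_d\in a}\a(a_d)u_\a(a_d)/u_\a=\prod_{d\in D}\bigl(\sum_{a_d\in A_d}\a(a_d)u_\a(a_d)/u_\a\bigr)=1$; strict positivity of $u$ then forces every off-support term to vanish, i.e.\ $\supp(\a)$ is rectangular and no absent type receives inflow. This is sound. (An equivalent shortcut: the same factorisation identity shows $\sum_{a\in A}\dot\a(a)=0$ at every state, and $\dot\a(a)\ge 0$ for $a\notin\supp(\a)$, so once the supported derivatives vanish the nonnegative off-support derivatives must sum to zero and hence each vanish.) In short, your argument is the same in its core but is the complete version; the paper's own proof elides the off-support check, which is precisely where the rectangularity of the support is needed.
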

\begin{proof}
	Part (1) is immediate from Equation (\ref{eq:recombDynamicsSimple}). We prove part (2). Suppose that $z^r_\a (a) = 1$ for all $a \in \supp(\a)$.
	Then $\dot{\a}(a) = 0$ for $a \in \supp(\a)$, by Equation (\ref{eq:recombDynamicsSimple}).
	Hence each $a \in \supp(\a)$ satisfies the property that $\a^t(a) = \a^0(a)\equiv x(a)$ for all $t \ge 0$, 
 which implies that $x$ is stationary. For the other direction of part (2), suppose that $z^r_\a(a) \ne 1$ for some $a \in \supp(\a)$. 
	Then by Equation (\ref{eq:recombDynamicsSimple}), $\dot{\a}(a) \ne 0$ and hence stationarity cannot obtain.
\end{proof}

\subsection {Payoff Monotonicity of the Trait-Centric Dynamics}
The recombinator dynamics, which is defined over the set of types, induces dynamics over the set of traits. These induced dynamics can be interpreted as a game between the traits (which lies behind the original game between the types $G$). 
The gene-centered view of genetic evolution (\citealp{Williams-George-1966,r.dawkins1976the-selfish-gen}) highlights the ways in which  biological natural selection chooses  fitness-maximising genes, rather than choosing fitness-maximising individuals. 
Similarly, in what follows, we show that the social learning process that is captured by the recombinator dynamics leads to the survival of  payoff-maximising traits (rather than payoff-maximising types), where each trait $a_d \in A_d$ is essentially competing against the other traits in $A_d$.

Fix any trait $a_d$ in the support of $\a$ (i.e., $\a(a_d) > 0$). 
Let us slightly rewrite Equation (\ref{eq:recombDynamics2}) from the perspective of a particular $a_d \in a$:
\begin{equation}\label{eq:recombDynamics-trait-initial}
    \dot{\a}(a_d,a_{-d}) = (1-r) \frac{\a(a_d,a_{-d}) u_\a(a_d,a_{-d})}{u_\a} + r\frac{\a(a_d)u_\a(a_d)}{u_\a^{|D|}}
	\prod_{a_{d'} \in a_{-d}} \a(a_{d'})u_\a(a_{d'}) 
	- \a(a_d,a_{-d}). 
\end{equation}

Dividing Equation (\ref{eq:recombDynamics-trait-initial}) by $\a(a_d)$ and summing over all $a'_{-d} \in A_{-d}$ yields the following \emph{trait-centric recombinator dynamics}:
\begin{align}
	\label{eq:recombDynamics-trait}
	\frac{\dot{\a}(a_d)}{\a(a_d)} =\sum_{a'_{-d} \in A_{-d}} \frac {\dot{\a}(a_d, a'_{-d})}{x(a_d)} &= \frac{(1-r)} {u_x} u_\a(a_d)   
	+ \frac{r} {u_x^{|D|}} u_\a(a_d)  \sum_{a'_{-d} \in A_{-d}} \prod_{a'_{d'} \in a'_{-d}} \a(a'_{d'})u_\a(a'_{d'}) - 1 \\
	& = u_\a(a_d) \biggl(\frac{(1-r)} {u_x}   
	+ \frac{r} {u_x^{|D|}}  \sum_{a'_{-d} \in A_{-d}} \prod_{a'_{d'} \in a'_{-d}} \a(a'_{d'})u_\a(a'_{d'}) \biggr) - 1. \nonumber
\end{align}

Note that the right-hand side of Equation (\ref{eq:recombDynamics-trait}) can be decomposed into an expression involving $u_\a (a_d)$ and a sum involving only elements of $a_{-d}$.
This implies that the trait-centric recombinator dynamics is monotone in trait payoffs. 
This yields a simple trait-centred characterisation of stationary states: (a) all traits obtain the same payoff, and (b) all types obtain the same $r$-weighted average of the \emph{relative type payoff} (i.e., the ratio $\frac{ u_\a(a)}{u_\a}$ between the type's payoff and the average payoff ) and the trait-to-type ratio, $m_\a(a)$. 

\begin{proposition} 
\label{prop-trait-stationary} \label{prop-2}
The 
recombinator dynamics 
satisfy:
	\label{prop:relative}
	\begin{enumerate}
    \item Trait payoff monotonicity:  $u_\a (a_d) > u_\a (a'_d) \Leftrightarrow \frac{\dot\a(a_d)}{\a(a_d)} > \frac{\dot\a(a'_d)}{\a(a'_d)}$ for each state $\a \in \Delta(A)$ and for each pair of traits $a_d,a'_d \in \supp_d(\a)$.
    \item A state $\a$ is stationary if and only if
   \begin {enumerate}
    \item  \label {trait-equal-payoff} 
    $u_{x}(a_{d})=u_{x}$ for any dimension $d\in D$ and any $a_{d}\in supp_{d}(x)$, and 
	\item $(1-r) \frac{ u_\a(a)}{u_\a} +  r m_\a(a) = 1$ for any $a\in\supp(\a)$. 	
	\end{enumerate}
 \end{enumerate}
\end{proposition}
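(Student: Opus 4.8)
The plan is to prove the two parts in turn, leaning on the factored form of the trait-centric dynamics in Equation (\ref{eq:recombDynamics-trait}) for Part (1) and on the stationarity characterisation of Proposition \ref{pro:stationary-states-recombinator} for Part (2).

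For Part (1), the observation I would build on is that the right-hand side of Equation (\ref{eq:recombDynamics-trait}) factors as $\frac{\dot\a(a_d)}{\a(a_d)} = u_\a(a_d)\cdot C_d(\a) - 1$, where $C_d(\a) := \frac{1-r}{u_\a} + \frac{r}{u_\a^{|D|}}\sum_{a'_{-d}\in A_{-d}}\prod_{a'_{d'}\in a'_{-d}}\a(a'_{d'})u_\a(a'_{d'})$ depends on the dimension $d$ and the state $\a$ but not on the particular trait chosen within dimension $d$. Since $u:A^2\to\R^{++}$ is strictly positive, every payoff appearing in $C_d(\a)$ is positive, so $C_d(\a) > 0$. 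The map $t\mapsto t\,C_d(\a) - 1$ is then strictly increasing, and comparing two traits $a_d,a'_d\in\supp_d(\a)$ (which share the same multiplier $C_d(\a)$) yields $u_\a(a_d) > u_\a(a'_d) \iff \frac{\dot\a(a_d)}{\a(a_d)} > \frac{\dot\a(a'_d)}{\a(a'_d)}$, the claimed monotonicity.

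For the ``if'' direction of Part (2), I would suppose (a) and (b) hold and fix $a\in\supp(\a)$. Each trait $a_d\in a$ satisfies $\a(a_d)\ge\a(a) > 0$, so $a_d\in\supp_d(\a)$ and (a) gives $u_\a(a_d) = u_\a$; hence $\prod_{a_d\in a}\frac{u_\a(a_d)}{u_\a} = 1$, and the defining Equation (\ref{eq:zed_def}) collapses to $z^r_\a(a) = (1-r)\frac{u_\a(a)}{u_\a} + r\,m_\a(a)$, which equals $1$ by (b). Proposition \ref{pro:stationary-states-recombinator}(2) then delivers stationarity. For the ``only if'' direction, I would suppose $\a$ is stationary, so $\dot\a(a) = 0$ for every $a\in A$; summing over $a_{-d}$ (the marginal being linear in the state coordinates) gives $\dot\a(a_d) = 0$, hence $\frac{\dot\a(a_d)}{\a(a_d)} = 0$ for all $a_d\in\supp_d(\a)$. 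By Part (1) all supported traits in dimension $d$ share this common relative growth rate, hence a common payoff $v_d$. Substituting the constant $v_d$ into the averaging identity (\ref{eq:average_marginal_payoff}), $\sum_{a_d\in\supp_d(\a)}\a(a_d)\,v_d = u_\a$, and using $\sum_{a_d\in\supp_d(\a)}\a(a_d) = 1$, forces $v_d = u_\a$, which is (a); finally (a) makes the product term in Equation (\ref{eq:zed_def}) equal to $1$, so the stationarity condition $z^r_\a(a) = 1$ reads exactly $(1-r)\frac{u_\a(a)}{u_\a} + r\,m_\a(a) = 1$, which is (b).

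I expect the only genuinely delicate step to be upgrading ``all supported traits in a dimension have equal payoff'' (which monotonicity delivers at once) to ``that common payoff equals $u_\a$''; this is precisely where the averaging identity (\ref{eq:average_marginal_payoff}) and the normalisation $\sum_{a_d}\a(a_d) = 1$ are indispensable. A minor bookkeeping point to verify along the way is the inclusion $a\in\supp(\a)\Rightarrow a_d\in\supp_d(\a)$ for each $a_d\in a$, which follows from $\a(a_d)\ge\a(a)$.
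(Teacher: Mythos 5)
Your proposal is correct and follows essentially the same route as the paper: Part (1) via the observation that the bracketed factor in Equation (\ref{eq:recombDynamics-trait}) is trait-independent within a dimension, and Part (2) by combining Proposition \ref{pro:stationary-states-recombinator} with the averaging identity (\ref{eq:average_marginal_payoff}) to upgrade ``equal trait payoffs'' to ``equal to $u_\a$''. Your added remarks on the strict positivity of the common multiplier and the inclusion $a\in\supp(\a)\Rightarrow a_d\in\supp_d(\a)$ are details the paper leaves implicit, and they are correct.
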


\begin{proof}~
\begin{enumerate}
    \item Part (1) is implied by the fact that replacing $a_d$ with $a'_d$ in the right-hand side of Equation (\ref{eq:recombDynamics-trait}) leaves the expression between the round brackets unchanged.
Appealing to Equation (\ref{eq:average_marginal_payoff}), we further conclude that $u_{x^*}(a_{d})=u_{x^*}(a'_{d})=u_{x^*}$.
\item 
Observe that (a) and (b) jointly imply that $z^r_\a(a) = 1$ for all $a \in \supp(\a)$, which implies by Proposition \ref{pro:stationary-states-recombinator} that $\a$ is stationary. 
For the other direction, suppose that  $\a$ is stationary. 
Proposition \ref{pro:stationary-states-recombinator} implies that  $z^r_\a(a) = 1$ for all $a \in \supp(\a)$. 
By Part (1), the stationarity of $\a$ implies that $u_{x}(a_{d})=u_{x}$ for every $d$ and $a_d\in\supp_d(\a)$. 
Substituting this equality in the definition of  $z^r_\a(a)$ 
(Equation (\ref{eq:zed_def})) implies that $(1-r) \frac{ u_\a(a)}{u_\a} +  r m_\a(a)=1$ for any $a\in\supp(\a)$. \qedhere 
\end{enumerate}
\end{proof}

Part (1) of Proposition \ref{prop-trait-stationary} implies that the recombinator dynamics selects for traits with higher 
(state-dependent) payoffs, and that traits that
consistently have lower payoffs become extinct. 
When the population converges to a stationary state, it must be the case that all surviving traits have exactly the same payoff.
In contrast, the surviving types in a stationary state may have different payoffs, under the constraint that types with higher payoffs have lower trait-to-type ratios.  This is illustrated in the following example.
\addtocounter{example}{-1}
\begin{example}[continued]
Consider the partially-enforceable prisoner's dilemma with recombination rate $r=0.5$. Consider the state $x^*$ 
(as described in Table \ref{tab:Heterogeneous-Stationary-State}) in which 38.4\% of the agents have type $\sc$, 23.9\% have type $\ad$, and 18.8\% have each of the remaining types ($\ac$ and $\sd$). The payoff matrix (Table \ref{tab:Payoff-Matrix-of}) implies that $\ad$ has the highest payoff of 13.8, $\sc$ has a payoff of 12.8, and the remaining two types have the lowest payoff of 7.8. Calculating the payoff of each trait (as the weighted mean of the payoffs of the types that has this trait) shows that each trait has the same payoff of 11.2 (satisfying condition (2-a) on Proposition \ref{prop-trait-stationary}). Observe that the average of each type's relative payoff $\frac{u_\a(\aa)}{u_\a}$ and its trait-to-type ratio $m_\a(\aa)$ is equal to one, which satisfies condition (2-b). For example, the highest relative payoff of 1.24 of $\ad$ is compensated by having the lowest trait-to-type ratio of 0.76. Thus, Proposition \ref{prop-trait-stationary} implies that $\a^*$ is stationary.
\end {example}
\begin{center}
\begin{table}[t]

\caption{\label{tab:Heterogeneous-Stationary-State} Stationary
State $x^{*}$ in the Partially-Enforceable Prisoner's Dilemma}

\begin{doublespace}
\begin{centering}
\begin{tabular}{|c|c|c|c|c|}
\hline 
\multicolumn{5}{|c|}{Types at the stationary state $x^{*}$}\tabularnewline
\hline 
$a$ & $x\left(a\right)$ & $u_{x}\left(a\right)$ & $\frac{u_{x}\left(a\right)}{u_{x}}$ & $
m_{x}\left(a\right)
$\tabularnewline
\hline 
$\sc$ & 38.4\% & 12.8 & 1.15 & 0.85\tabularnewline
\hline 
$\ac$ & 18.8\% & 7.8 & 0.70 & 1.3\tabularnewline
\hline 
$\sd$ & 18.8\% & 7.8 & 0.70 & 1.3\tabularnewline
\hline 
$\ad$ & 23.9\% & 13.8 & 1.24 & 0.76\tabularnewline
\hline 
\end{tabular}~~~~~%
\begin{tabular}{|c|c|c|}
\hline 
\multicolumn{3}{|c|}{Traits at the stationary state $x^{*}$}\tabularnewline
\hline 
$a_{d}$ & $x\left(a_{d}\right)$ & $u_{x}\left(a_{d}\right)$\tabularnewline
\hline 
$\cc$ & 38.4\%+18.8\%=57.2\% & $\frac{38.4\%}{57.2\%}$$\cdot$12.8+$\frac{18.8\%}{57.2\%}$$\cdot$7.8=11.2\tabularnewline
\hline 
$\dd$ & 23.9\%+18.8\%=42.7\% & $\frac{23.9\%}{42.7\%}$$\cdot$13.8+$\frac{18.8\%}{42.7\%}$$\cdot$7.8=11.2\tabularnewline
\hline 
$\ss$ & 38.4\%+18.8\%=57.2\% & $\frac{38.4\%}{57.2\%}$$\cdot$12.8+$\frac{18.8\%}{57.2\%}$$\cdot$7.8=11.2\tabularnewline
\hline 
$\aa$ & 23.9\%+18.8\%=42.7\% & $\frac{23.9\%}{42.7\%}$$\cdot$13.8+$\frac{18.8\%}{42.7\%}$$\cdot$7.8=11.2\tabularnewline
\hline 
\end{tabular}
\par\end{centering}
\end{doublespace}
The left table describes the frequencies, payoffs and trait-to-type
ratios of the types in  state $x^{*}$. The right table describes
the frequencies and payoffs of the traits in $x^{*}$.

\end{table}
\par\end{center}

An interesting observation arises when considering part 2 of Proposition \ref{prop-trait-stationary} in the special case of $r = 1$.
In that case  $(1-r) \frac{ u_\a(a)}{u_\a} +  r m_\a(a)=1$ reduces to $m_\a(a) = 1$.
In other words, under the combinator dynamic (i.e., $r=1$) a stationary state $x$
must be trait Independent (i.e., the combinator dynamic induces an exact trait-to-type ratio  of 1 at each $a \in \supp(x)$).

\section {Characterisation of Stable States}\label{sec-stable}
%In this section, we characterise asymptotically stable states.
\subsection{Benchmark Result for the Replicator Dynamics}
It is well known that asymptotic stability of stationary states under the replicator dynamic is  characterised by two conditions:
\begin{enumerate}
    \item \emph{Internal stability}: the payoff matrix restricted to the incumbent types is (semi-) negative-definite
     with respect to the tangent space.
   
     \item \emph{External stability}: The payoffs of types outside the support is lower than the incumbents' payoff.
\end{enumerate} 
The strict variants of these conditions  (i.e., negative-definiteness and quasi-strictness) imply asymptotic stability, and their weak counterparts (semi-negative-definiteness and being a Nash equilibrium) are implied by asymptotic stability. 
This is formalised as follows. 

\begin{fact}\label{claim-necc-asymp-replic}
If state $x$ is asymptotically stable under the replicator dynamics, then it satisfies
\begin{enumerate}
    \item weak internal stability: 
    $w^\intercal\cdot U_{\supp(x)}\cdot w\leq0$ for each $w\in T_{\supp(x)}$; and
    \item weak external stability: $U_\a \geq U_\a(a)$ for each $a \notin \supp(\a)$.
\end{enumerate}
\end{fact}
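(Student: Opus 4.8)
The plan is to treat the two conditions separately, each by contraposition, exploiting two structural facts: asymptotic stability implies Lyapunov stability (Definition \ref{def:asymptotic-stable}), and the support is forward invariant, so the face $\D(\supp(x))$ is an invariant manifold on which $x$ is relatively interior. Throughout I would use the unnormalised replicator field $\dot y(a)=y(a)(u_y(a)-u_y)$, which differs from Equation (\ref{eq:replicatorMotion}) only by the strictly positive time rescaling $1/u_y$ and hence shares its stationary states, trajectories (up to reparametrisation), and stability type.

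I would dispatch external stability first, as the easier half. Suppose it fails, i.e.\ there is $a\notin\supp(x)$ with $u_x(a)>u_x$. By Fact~\ref{claim-replicator-stationary} every incumbent action satisfies $u_x(a')=u_x$, so $y\mapsto u_y(a)-u_y$ is continuous and strictly positive at $x$; fix a neighbourhood $W$ of $x$ and $\delta>0$ on which this map exceeds $\delta$. Taking initial states $y^0\to x$ with $y^0(a)>0$ arbitrarily small, as long as the trajectory stays in $W$ we have $\frac{d}{dt}\log y^t(a)=u_{y^t}(a)-u_{y^t}\ge\delta$, so $y^t(a)\ge y^0(a)\,e^{\delta t}$ grows past any fixed level while $x(a)=0$. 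Thus the trajectory leaves any prescribed small neighbourhood of $x$, contradicting Lyapunov stability, and we conclude $u_x\ge u_x(a)$ for all $a\notin\supp(x)$.

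For internal stability I would restrict to the invariant face $F=\D(\supp(x))$, on which $x$ is relatively interior and still (at least) Lyapunov stable. The engine is the relative-entropy function $V(y)=\sum_{i\in\supp(x)}x_i\log(x_i/y_i)$, which is $C^1$ and nonnegative on the relative interior of $F$ and vanishes only at $x$. A direct computation along the flow, using $(Ux)_i=u_x$ for every $i\in\supp(x)$ and $\sum_i(y_i-x_i)=0$, yields the exact identity $\dot V(y)=(y-x)^\intercal U_{\supp(x)}(y-x)$. Equivalently, writing $J$ for the Jacobian of the field on $T_{\supp(x)}$ and using the Shahshahani inner product $\langle u,v\rangle=\sum_i u_iv_i/x_i$, one obtains $\langle Jw,w\rangle=w^\intercal U_{\supp(x)}\,w$, so the quadratic form in the statement is exactly the expansion rate of the flow near $x$ measured in this metric. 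If negative semi-definiteness fails, there is $w\in T_{\supp(x)}$ with $w^\intercal U_{\supp(x)}w>0$; by continuity the form is positive on an open cone about $w$, and on the corresponding cone-sector attached to $x$ we have $\dot V>0$. Feeding this into a Chetaev-type instability argument (with a Chetaev function adapted to the cone rather than $V$ itself) shows that trajectories launched into the sector leave every small neighbourhood of $x$, contradicting Lyapunov stability; hence $w^\intercal U_{\supp(x)}w\le0$ for all $w\in T_{\supp(x)}$.

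The main obstacle is this last step. The clean part is the exact Lyapunov identity, which is special to the replicator field; the delicate part is converting ``the form is positive in some tangent direction'' into genuine instability, because only the symmetric part of $U_{\supp(x)}$ enters the quadratic form while the flow near $x$ is governed by the generally non-normal Jacobian $J$. A pure linearisation (eigenvalue) analysis is insufficient here, since $J$ can be Hurwitz while its metric-symmetric part is indefinite, so the argument must exploit the exact—not merely linearised—sign of $\dot V$ on the positive cone, and one must ensure that trajectories do not escape the cone through its lateral boundary before $V$ has grown. This is precisely the juncture at which the weak necessary condition and the strict sufficient condition fail to coincide exactly, which is why the overall characterisation is only an ``essentially'' if-and-only-if.
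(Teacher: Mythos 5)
Your proposal should be measured against the fact that the paper offers no proof of this statement at all: it is presented as a known result with a pointer to van Damme (1991), so any self-contained argument is by construction a different route. Your treatment of part (2) is correct and standard: asymptotic (hence Lyapunov) stability forces stationarity, so all incumbents earn $u_x$; continuity then gives a neighbourhood on which $u_y(a)-u_y\geq\delta>0$ for the deviating $a\notin\supp(x)$, and the resulting exponential growth of $y^t(a)$ from initial conditions arbitrarily close to $x$ contradicts Lyapunov stability. Your exact identity $\dot V(y)=(y-x)^\intercal U_{\supp(x)}(y-x)$ for the relative entropy on the invariant face is also correct.

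The gap you flag in part (1), however, is not a technicality that a Chetaev construction will repair: that step genuinely fails, and it cannot be closed, because the implication ``asymptotically stable $\Rightarrow w^\intercal U_{\supp(x)}w\leq 0$ on $T_{\supp(x)}$'' is false for the replicator dynamics. Zeeman's game
\[
U=\begin{pmatrix}0&6&-4\\-3&0&5\\-1&3&0\end{pmatrix}
\]
has the interior rest point $x=(1/3,1/3,1/3)$; the Jacobian restricted to the tangent space has trace $-2/3$ and determinant $1/3$, so both eigenvalues have negative real part and $x$ is asymptotically stable, yet $w=(1,0,-1)\in T_{\supp(x)}$ gives $w^\intercal Uw=5>0$. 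What happens is precisely the escape you worry about: trajectories spiral around $x$, repeatedly crossing between the cone where $\dot V>0$ and the cone where $\dot V<0$, with a net decrease of $V$ over each circuit; positivity of $\dot V$ on a non-invariant cone is compatible with asymptotic stability. The only necessary condition extractable from the local analysis is spectral (eigenvalues of the Jacobian in the closed left half-plane), and since the Jacobian is not normal with respect to the Shahshahani metric this does not control the sign of its symmetric part, which is exactly the quadratic form $w\mapsto w^\intercal U_{\supp(x)}w$. So part (1) of your argument cannot be completed as written; a correct statement along these lines must either be weakened to the spectral condition or obtained under a stronger hypothesis (local superiority/ESS, for which your Lyapunov identity yields the sufficiency direction instead). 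This also means you should treat the necessity claim in the quoted Fact, and the corresponding ``weak internal stability'' clauses downstream of it, with caution rather than as something your proof merely failed to reach.
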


\begin{fact}\label{claim-suffic-asymp-replic}
Stationary state $x$ is asymptotically stable under the replicator dynamics if it satisfies:
\begin{enumerate}
    \item internal stability:     $w^\intercal\cdot U_{\supp(x)}\cdot w<0$ for each $w\in T_{\supp(x)}$; and
    \item external stability: $U_\a>U_\a(a)$ for each $a \notin \supp(\a)$.
\end{enumerate}
\end{fact}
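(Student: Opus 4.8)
The plan is to exhibit a strict Lyapunov function for the replicator dynamics near $x$, using the classical relative-entropy (Kullback--Leibler) construction adapted to the possibly non-interior support $S := \supp(x)$. Throughout I write $U$ for the payoff matrix, so that $u_{\a}(a) = (U\a)_a$ and $u_\a = \a^\intercal U \a$, and I let $\a$ denote a generic state near the fixed stationary state $x$. Because the factor $1/u_\a$ in Equation (\ref{eq:replicatorMotion}) is strictly positive (payoffs lie in $\R^{++}$), it only reparametrises time and may be dropped for the purpose of stability.

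First I would record the Nash structure of $x$: since $x$ is stationary, Fact \ref{claim-replicator-stationary} gives $u_x(a) = u_x$ for every $a \in S$, which together with external stability makes $x$ a Nash equilibrium of $G$. I then define
\[
V(\a) := \sum_{a \in S} x(a)\,\ln\frac{x(a)}{\a(a)},
\]
which is smooth on the open (and, by forward invariance, flow-invariant) neighbourhood of $x$ on which $\a(a) > 0$ for all $a \in S$. A short Jensen/Gibbs estimate gives $V(\a) \ge -\ln\bigl(\sum_{a\in S}\a(a)\bigr) \ge 0$, with equality iff $\a = x$, so $V$ is positive definite at $x$. Differentiating along trajectories yields the exact identity
\[
\dot V = -\frac{1}{u_\a}\Bigl(x^\intercal U \a - \a^\intercal U \a\Bigr) = -\frac{1}{u_\a}\,(x - \a)^\intercal U \a,
\]
so the whole problem reduces to establishing \emph{local superiority} of $x$: that $(x - \a)^\intercal U \a > 0$ for every $\a \ne x$ in a neighbourhood of $x$. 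This is the step I expect to be the crux.

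To analyse it I set $y := \a - x$ and $\e := \sum_{a\notin S}\a(a)$ (the external mass), and use the equal-payoff property on $S$ to obtain the exact decomposition
\[
(x - \a)^\intercal U \a = \sum_{a \notin S}\a(a)\bigl(u_x - u_x(a)\bigr)\; -\; y^\intercal U y.
\]
External stability bounds the first term below by $c\,\e$ with $c := \min_{a\notin S}\bigl(u_x - u_x(a)\bigr) > 0$. For the quadratic term I would split $y$ into its restrictions to $S$ and to $A\setminus S$, write the $S$-part as an element $w$ of the tangent space $T_{S}$ plus an $O(\e)$ correction (the correction is forced because the $S$-part sums to $-\e$, not $0$), and invoke internal stability in the uniform form $w^\intercal U_{S} w \le -\lambda\|w\|^2$ on $T_{S}$. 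This gives $-y^\intercal U y \ge \lambda\|w\|^2 - O(\e\|w\|) - O(\e^2)$.

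The main obstacle is precisely that negative-definiteness is available only on $T_{S}$, so the external coordinates generate first-order ($\e$) and cross ($\e\|w\|$) terms that the second-order internal term cannot absorb by itself. The resolution is that external stability supplies a genuinely first-order positive term $c\e$: combining the two bounds and applying AM--GM to the cross term ($C\e\|w\| \le \tfrac{\lambda}{2}\|w\|^2 + \tfrac{C^2}{2\lambda}\e^2$) yields, for $\e$ sufficiently small,
\[
(x-\a)^\intercal U \a \;\ge\; \tfrac{c}{4}\,\e + \tfrac{\lambda}{2}\,\|w\|^2 \;>\; 0 \qquad\text{whenever } (\e,w)\neq 0,
\]
i.e. whenever $\a \ne x$. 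Hence $\dot V < 0$ on a punctured neighbourhood of $x$, so $V$ is a strict Lyapunov function and asymptotic stability of $x$ follows from the standard Lyapunov theorem.
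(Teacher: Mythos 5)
Your argument is correct: the exact decomposition $(x-\a)^\intercal U\a=\sum_{a\notin \supp(x)}\a(a)\bigl(u_x-u_x(a)\bigr)-y^\intercal Uy$ checks out, the absorption of the $O(\e\|w\|)$ cross term into the first-order external gain $c\e$ and the second-order internal gain $\lambda\|w\|^2$ is the right way to establish local superiority, and the relative-entropy function is a legitimate strict Lyapunov function on a small enough neighbourhood (where $\a(a)>0$ for all $a\in\supp(x)$ is automatic). The paper itself omits the proof and defers to van Damme (1991), whose route is exactly this one --- conditions (1)--(2) yield an (local) ESS, i.e.\ local superiority, and the Kullback--Leibler divergence then gives asymptotic stability --- so your proposal is essentially a correct, self-contained rendering of the cited argument.
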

We omit the proofs of these well-known results (which are implied by combining Theorem 9.2.7, Corollary 9.4.2, Theorem 9.4.4, and Theorem 9.4.8 in \citealp{van1991stability}). 

The main results of this section generalise these facts, and characterise asymptotic stability under the recombinator dynamics. In order to do so, we first need to present the notion of $r$-Jacobian matrix (which generalises the payoff matrix in condition (1) of internal stability), and the notion of a payoff of an external trait, which will be used when generalising condition (2). 

\subsection{$r$-Jacobian Matrix}

For each state $\a$, let $J_\a^r$ denote the Jacobian matrix at $\a$ with respect to the $r$-payoff function $z^r$ (henceforth, the \emph{$r-$Jacobian matrix}); that is, $J_\a^r$ is a square matrix of size $|supp(\a)|$, where the $ij$-th element in the matrix is the partial derivative of $z^r_{\a}(a_i)$ with respect to $\a(a_j)$:
\begin{equation}
(J_\a^r)_{a_i,a_j}=\frac{\partial z^r_\a(a_i)}{\partial \a(a_j)} \textrm{~~~~~~}\text{for all } a_i,a_j\in\supp(\a).
\end{equation}

Observe that when $r=0$ (the replicator dynamic) the $r$-Jacobian coincides with the payoff matrix (restricted to $\supp(\a)$) up to multiplication by the constant $u_\a$, i.e.: $$(J_\a^0)_{a_i,a_j}=\frac{\partial z^0_\a(a_i)}{\partial \a(a_j)}=\frac{\partial (u_\a(a_i)/u_\a)}{\partial \a(a_j)}=\frac{u(a_i,a_j)}{u_\a}\textrm{~~~~}\text{for all } a_i,a_j\in\supp(\a).$$
The main results of this section show that one can replace the payoff matrix in Condition (1) of Facts \ref{claim-necc-asymp-replic}--\ref{claim-suffic-asymp-replic} by the $r$-Jacobian when characterising asymptotic stability under the recombinator dynamics. Specifically, they show that a stationary state is internally stable if the $r$-Jacobian matrix is negative definite, and it is unstable if the matrix is not negative-semi-definite. This is demonstrated in the following example.
\addtocounter{example}{-1}
\begin{example}[continued] 
We arbitrarily order the strategies in $\supp(x^*)$ as $(\sc,\ac,\sd,\ad)$.  A simple numeric calculation shows that the $r$-Jacobian in state $x^*$ is:\\
\[
-\left(\begin{array}{cccc}
1.05 & 0.49 & 0.49 & 1.71\\
0.56 & 2.71 & 1.12 & 0.25\\
0.56 & 1.12 & 2.71 & 0.25\\
1.61 & 0.36 & 0.36 & 1.02
\end{array}\right)
\]

\noindent Observe that the  $r$-Jacobian matrix is not negative semidefinite with respect to the tangent space, which implies that $x^*$ is unstable. Specifically, let $w=(-1,0,0,1)\in T_{A}$ be a vector describing a small perturbation that slightly increases the share of $\ad$-agents and slightly decreases the share $\sc$-agents. Observe that $w^\intercal\cdot U \cdot w=(-0.56,0.13,0.13,0.69)\cdot w=1.25>0$, which implies that this small perturbation will take the population away from $x^*$. 
\hfill \myenddefinition
\end{example}

\subsection {Stable Partner Distribution and Invading Trait Payoff}
Consider an invasion of a population $\a$ by mutants bearing trait $a_d \notin \supp(\a)$.
In such a scenario, there is a qualitative difference between the replicator dynamics ($r=0$) and the recombinator dynamics with a positive recombination rate ($r>0$). 
Under the replicator dynamics, the mutant type carrying trait $a_d$, say type $a=(a_d,a_{-d})$ (where we refer to the trait profile $a_{-d}$ as the \emph{partner} of trait $a_d$), 
remains constant (that is, $a_d$ does not combine with other partners, hence the only type in the population bearing $a_d$ is $(a_d,a_{-d})$), and thus the success of trait $a_d$ in invading the population depends solely on the payoff of the mutant type combining the trait $a_d$ and its partner $a_{-d}$.

In contrast, under recombination the distribution of partners of an invading trait $a_d$ typically changes after a mutant carrying $a_d$ is introduced into a stationary population $\a$. In what follows we show that the distribution of partners converges towards a unique stable distribution of partners, which is independent of the specific initial mutant type that introduces $a_d$ to the population. We denote this stable distribution of partners of an invading trait $a_d$ by $\eta_\a^{a_d}$. The distribution  $\eta_\a^{a_d}$ induces each type $(a_d,a_{-d})$ (where $a_{-d}\in \supp(\a)$) with a frequency that is an $r$-weighted average of two elements: (1) its own payoff times its own frequency, and (2) the product of the frequencies of traits in $a_{-d}$. Formally,

\begin{definition}\label{def:partner-definition} Fix $r>0$, a stationary state $\a$, and a trait $a_d\notin \supp_d(\a)$.
Then $\eta_\a^{a_d}\in\Delta(\supp_{-d}(\a))$, which we call the \emph{stable partner  distribution} of trait $a_d$, is the unique solution to the following set of $|\supp_{-d}(\a))|$ equations: 
\begin{equation}
\eta(a_{-d})=(1-r)\frac{\eta(a_{-d})u_{x}(a_{d},a_{-d})}{\sum_{a_{-d}\in supp_{-d}(\a)}\eta(a_{-d}')u_{x}(a_{d},a_{-d}')}+r\prod_{a_{d'}\in a_{-d}}x(a_{d'}),
\label{eq:eta-definition}
\end{equation}
for each $a_{-d} \in 
\supp_{-d}(x)$.
\end{definition}

We begin by showing that the stable partner distribution is well-defined.
\begin{proposition}\label{prop:UniquePartner}
    Equation (\ref{eq:eta-definition}) admits a unique solution in $\Delta(\supp_{-d}(\a))$  for any $r>0$,  stationary state $\a$, and trait $a_d\notin \supp_d(\a).$
\end{proposition}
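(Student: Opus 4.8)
The plan is to collapse the coupled nonlinear system (\ref{eq:eta-definition}) to a single scalar equation. Write $v(a_{-d}) := u_\a(a_d,a_{-d})$, which is strictly positive since $u$ takes values in $\R^{++}$, and $q(a_{-d}) := \prod_{a_{d'} \in a_{-d}} \a(a_{d'})$. For every $a_{-d} \in \supp_{-d}(\a)$ each trait in $a_{-d}$ has positive frequency, so $q(a_{-d}) > 0$; moreover $\sum_{a_{-d} \in \supp_{-d}(\a)} q(a_{-d}) = \prod_{d' \ne d}\bigl(\sum_{a_{d'} \in A_{d'}} \a(a_{d'})\bigr) = 1$, because every partner profile outside the support contains a trait of zero frequency and hence contributes nothing. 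The crucial observation is that the right-hand side of (\ref{eq:eta-definition}) depends on the unknown $\eta$ only through the single scalar $S := \sum_{a_{-d}} \eta(a_{-d})\, v(a_{-d})$; I would therefore treat $S$ as a free parameter, solve for each coordinate in isolation, and only afterwards impose self-consistency.

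Concretely, rearranging (\ref{eq:eta-definition}) gives $\eta(a_{-d})\bigl(1 - (1-r)v(a_{-d})/S\bigr) = r\,q(a_{-d})$, so that
\[
\eta(a_{-d}) = \frac{r\,q(a_{-d})\,S}{\,S - (1-r)v(a_{-d})\,},
\]
which is positive and well defined precisely when $S > (1-r)v^{\max}$, where $v^{\max} := \max_{a_{-d}} v(a_{-d})$. Substituting this back into $S = \sum_{a_{-d}} \eta(a_{-d}) v(a_{-d})$ and cancelling one factor of $S$ reduces the entire system to the single equation $f(S) = 1$, where $f(S) := \sum_{a_{-d}} r\,q(a_{-d})\,v(a_{-d}) / (S - (1-r)v(a_{-d}))$. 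On the interval $\bigl((1-r)v^{\max}, \infty\bigr)$ each summand is continuous and strictly decreasing in $S$, so $f$ is strictly decreasing; it tends to $+\infty$ as $S \downarrow (1-r)v^{\max}$ (the maximising term blows up) and to $0$ as $S \to \infty$. Hence $f$ is a continuous strictly decreasing bijection onto $(0,\infty)$, and $f(S) = 1$ has a unique root $S^*$. Setting $\eta(a_{-d}) := r\,q(a_{-d})S^*/(S^* - (1-r)v(a_{-d}))$ produces a strictly positive vector, and summing (\ref{eq:eta-definition}) gives $\sum_{a_{-d}} \eta(a_{-d}) = (1-r) + r\sum_{a_{-d}} q(a_{-d}) = 1$, so $\eta \in \Delta(\supp_{-d}(\a))$; this establishes existence.

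For uniqueness I would run the reduction in reverse. Given any solution $\eta \in \Delta(\supp_{-d}(\a))$, set $S := \sum_{a_{-d}} \eta(a_{-d}) v(a_{-d}) \ge \min_{a_{-d}} v(a_{-d}) > 0$. Since $r > 0$ and $q(a_{-d}) > 0$, the rearranged identity $\eta(a_{-d})\bigl(S - (1-r)v(a_{-d})\bigr) = r\,q(a_{-d})\,S$ has a strictly positive right-hand side, which forces $\eta(a_{-d}) > 0$ and $S - (1-r)v(a_{-d}) > 0$ for every $a_{-d}$; thus $S > (1-r)v^{\max}$ and $\eta$ necessarily has the closed form above. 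Self-consistency then yields $f(S) = 1$, so $S = S^*$ by uniqueness of the root, and $\eta$ is pinned down. The one genuinely delicate point — the main obstacle — is the nonlinear coupling of all coordinates through the shared denominator $S$; the whole argument succeeds because this coupling is one-dimensional, collapsing the fixed-point problem to the monotone scalar equation $f(S)=1$.
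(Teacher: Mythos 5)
Your proof is correct in its essential mechanics, and it is in fact the same idea as the paper's: the authors also observe that the right-hand side of Equation (\ref{eq:eta-definition}) depends on $\eta$ only through the scalar $\sum_{a_{-d}}\eta(a_{-d})u_\a(a_d,a_{-d})$, derive the identical closed form $\eta_z(a_{-d}) = z\,r\prod_{a_{d'}\in a_{-d}}\a(a_{d'})\,/\,\bigl(z-(1-r)u_\a(a_d,a_{-d})\bigr)$, and pin the scalar down by a monotone one-dimensional equation. The differences are worth noting, though. The paper gets existence from Brouwer's fixed point theorem and uses the scalar reduction only for uniqueness, with the normalisation equation $h(z):=\sum_{a_{-d}}\eta_z(a_{-d})=1$ as the scalar equation; you instead run the reduction for existence as well, using the self-consistency equation $f(S)=1$, which makes the argument fully constructive (no fixed-point theorem) and handles $r=1$ uniformly, whereas the paper treats $r=1$ as a separate case and assumes $r<1$ thereafter. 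That is a modest gain in elementarity.

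There is, however, one step you assert with a justification that is wrong as stated: the claim that $\sum_{a_{-d}\in\supp_{-d}(\a)}q(a_{-d})=1$ ``because every partner profile outside the support contains a trait of zero frequency.'' For $|D|\ge 3$ this is false for general states: $\a(a_{-d})=0$ does not imply that some trait in $a_{-d}$ has zero marginal frequency (e.g., a state supported on $(a_1,b,c')$ and $(a_1,b',c)$ has $\a(b),\a(c)>0$ but $\a(b,c)=0$). The identity you need is true here only because $\a$ is \emph{stationary} and $r>0$: since $u>0$, any type all of whose traits have positive frequency would have $\dot\a(a)\ge r\prod_{a_d\in a}\a(a_d)u_\a(a_d)/u_\a^{|D|}>0$ if absent from the support, so stationary states have rectangular support, whence $\supp_{-d}(\a)=\prod_{d'\ne d}\supp_{d'}(\a)$ and the product measure sums to one over it. This is not cosmetic for your argument: without it your constructed $\eta$ sums to $(1-r)+r\sum_{a_{-d}}q(a_{-d})<1$ and existence fails (indeed the proposition itself would be false), and it is the only place where the stationarity hypothesis enters at all — a hypothesis your write-up otherwise never uses, which is a warning sign. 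In fairness, the paper's own proof leans on the same unproven identity (it is needed both for the Brouwer self-map property and for the limit $h(z)\to r$), so the gap is shared; but your version should prove it rather than assert it with an incorrect reason.
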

\begin{proof}
The result is immediate when $r=1$ (in which case $\eta(a_{-d})=\prod_{a_{d'}\in a_{-d}}x(a_{d'}))$.
We henceforth assume that $r<1$. A solution for Equation (\ref{eq:eta-definition}) clearly exists by Brouwer's Fixed Point Theorem. To prove uniqueness, we will show that, for a given $r$, $\a$, and $a_d$, there is a unique $z_0>0$, such that every solution $\eta$ of Equation (\ref{eq:eta-definition}) satisfies $$\sum\limits_{a_{-d}} \eta(a_{-d})u_\a(a_d,a_{-d})=z_0.$$ 
Assuming that this holds true, substituting $z_0$ back into Equation (\ref{eq:eta-definition}) we obtain the system of linear equations
\begin{gather*}
    \eta(a_{-d})=(1-r)\frac{\eta(a_{-d})u_\a(a_d,a_{-d})}{z_0}+r\prod\limits_{a_{d'}\in a_{-d}} \a(a_{d'}),
\end{gather*}
for each $a_{-d} \in \supp_{-d}(x)$,
which clearly has a unique solution. 
As this must hold for every solution $\eta$, uniqueness follows. 

To prove the existence of such a $z_0$, for each $a_{-d}$ and each $z>(1-r)u_\a(a_d,a_{-d})$, define $\eta_z(a_{-d})$ by:
\begin{equation}\label{Eq:eta_z}
\eta_z(a_{-d})=\frac{z\cdot r\cdot\prod_{a_{d'}\in a_{-d}}x(a_{d'})}{z-(1-r)u_\a(a_d,a_{-d})}.
%\Leftrightarrow
%\eta_z(a_{-d})=(1-r)\frac{\eta_z(a_{-d})u_\a(a_d,a_{-d})}%{z}+r\prod_{a_{d'}\in a_{-d}}x(a_{d'}).    
\end{equation}

Let $\overline{z}=\max_{a_{-d}\in\supp_{-d}(\a)}\left((1-r)u_\a(a_d,a_{-d})\right)$.
Observe that  
$\eta_z(a_{-d})$ is well-defined for each $a_{-d}\in\supp_{-d}(\a)$ and each $z>\overline{z}$.
Further observe that $\eta_z(a_{-d})$ is decreasing in $z$ for each $a_{-d}$, hence the sum $h(z)=\sum\limits_{a_{-d}\in supp_{-d}(\a)}\eta_z(a_{-d})$ is decreasing. 
Letting $z\to \overline{z}$ from above takes $h(z)\to \infty$ in the limit, and in contrast as $z\to \infty$ one obtains $$h(z)\to r\sum\limits_{a_{-d}\in supp_{-d}(\a)}\prod_{a_{d'}\in a_{-d}}x(a_{d'})=r<1.$$ 
It follows that there is a unique $z_0>\overline{z}>0$ such that $h(z_0)=1$. 
Given a solution $\eta$ of Equation (\ref{eq:eta-definition}), if we set $z=\sum\limits_{a_{-d}\in supp_{-d}(\a)}\eta(a_{-d})u_\a(a_d,a_{-d})$ then solving the equation for $\eta$ we obtain $\eta=\eta_z$.
Furthermore, $z=z_0$ by the uniqueness of $z_0$. 
\end{proof}

The fact that the partner distribution of an invading trait
converges to a stable distribution (as shown in the  the proof of Theorem \ref{thm:asymptotic-sufficient}) allows us to define the payoff of an invading trait as the weighted average of the types carrying this mutant trait, where the weights are distributed according to the stable distribution of partners. Formally,
\begin{definition}
Fix a stationary state $\a$ and recombination rate $r$. 
The \emph{payoff of an invading trait} $a_d \notin \supp_d(\a)$ is defined as:
\begin{equation}
\label{eq:u_a_d-def}
u^r_{x}\left(a_{d}\right) = \sum_{a_{-d} \in supp_{-d}(x)} \eta_{x}^{a_{d}}\left(a_{-d}\right) \cdot u_{x}\left(a_{d},a_{-d}\right).
\end{equation}
\end{definition}

The stable partner distribution is degenerate if the stationary state is homogeneous (pure). 
For example, the stable partner distribution of trait $\dd$ with respect to the stationary state $\sc$ in the partially enforceable prisoner's dilemma assigns mass one to $\ss$. The following example demonstrates the stable partner distribution for a heterogeneous stationary state in a new game, which we call the emotional hawk dove game.
\begin{example}[Stable partner distribution in the emotional hawk dove game] 
\label{exam-emotional-HD}
Consider a bargaining interaction in which each player simultaneously makes two choices:
\begin{enumerate}
    \item being a ``hawk'' (trait $\hh$) or a ``dove'' (trait $\dd$) in the bargaining.
    \item being ``emotional'' (trait $\ee$), ``rational'' (trait $\rr$) or ``versatile'' (trait $\vv$) during the bargaining process (where the latter trait  allows transitions  from emotional to rational phases within the bargaining process).
\end{enumerate}
The two dimensional set of types includes $6=2\cdot3$ elements $A=\{\dr,\dv,\de,\hr,\hv,\he\}$. The basic game is hawk-dove: two players have to divide a surplus worth 100 between them. Two doves divide it equally (50 each). A hawk obtain a large share of 70 against a dovish opponent. Finally, when two hawks are matched bargaining often fails, and hence each player obtains a low payoff of 10. 

The payoffs of the basic hawk-dove game are modified by the choice of each player's emotional approach. Being emotional helps a hawkish player and adds two to her share of the surplus, while it harms a dovish player and reduces four units from her share. Similarly, being rational helps a dovish play (adds two units to her share) but harms a hawkish player (reduces four units from her share). Finally, being versatile does not affect a player's payoff.

Observe that the heterogeneous state $\a$ that assigns mass $50\%$ to type $\hv$ and the remaining mass of $50\%$ to type $\dv$ is stationary 
(and it is straightforward to show that the corresponding $r$-Jacobian matrix is negative definite for all $r$-s, which implies internal stability). Consider an invasion of this population by a rare mutant type with the emotional trait $\ee$. Substituting the example's parameters in Equation (\ref{eq:eta-definition}) yields the following value of $\eta_\a^\ee(h)$ as a function of $r$:
\begin{equation}
    \eta(\hh)=(1-r)\frac{42\eta(\hh)}{42\eta(\hh)+36(1-\eta(h))}+\frac{r}{2}\,\,\Rightarrow\,\,\eta_\a^\ee(\hh)(r)=\frac{\sqrt{169r^2-4r+4}-13r+2}{4}.
\end{equation}
Figure \ref{fig:stable-share} illustrates the stable share of hawkish partners of the invading trait $\ee$ as a function of the recombination rate $r$. When the recombination rate is close to zero, almost all the partners of $\ee$ are hawkish. This share is decreases in in $r$ and it converges to $50\%$ as the recombination rate converges to 1. The payoff of the invading trait is
\begin{equation}
u_{x}(\ee)=\eta_{x}^{\ee}(r)(\hh)\cdot u_{x}(\he)+\eta_{x}^{\ee}(r)(\dd)\cdot u_{x}(\de)=\eta_{x}^{\ee}(r)(\hh) \frac{72+12}{2}+\eta_{x}^{\ee}(r)(\dd) \frac{46+26}{2},
\end{equation}
which is larger than $u_\a=40$ iff $\eta_{x}^{\ee}(r)(\hh)>\frac{2}{3}\iff r<\frac{1}{6}.$  Thus the stationary state $\a$ is stable against an invasion of the trait $\ee$ if $r>\frac{1}{6}$ and unstable if $r<\frac{1}{6}.$
\begin{table}[h]
\begin{centering}
\caption{Payoff Matrix for Emotional Hawk-Dove Game}
\begin{tabular}{c|c|c|c|c|c|c|}
\multicolumn{1}{c}{} & \multicolumn{1}{c}{$\mathfrak{dr}$} & \multicolumn{1}{c}{$\mathfrak{dv}$} & \multicolumn{1}{c}{$\mathfrak{de}$} & \multicolumn{1}{c}{$\mathfrak{hr}$} & \multicolumn{1}{c}{$\mathfrak{hv}$} & \multicolumn{1}{c}{$\mathfrak{he}$}\tabularnewline
\cline{2-7} \cline{3-7} \cline{4-7} \cline{5-7} \cline{6-7} \cline{7-7} 
$\mathfrak{dr}$ & \textcolor{blue}{50},~\textcolor{red}{50} & \textcolor{blue}{52},~\textcolor{red}{48} & \textcolor{blue}{56},~\textcolor{red}{44} & \textcolor{blue}{36},~\textcolor{red}{64} & \textcolor{blue}{32},~\textcolor{red}{68} & \textcolor{blue}{30},~\textcolor{red}{70}\tabularnewline
\cline{2-7} \cline{3-7} \cline{4-7} \cline{5-7} \cline{6-7} \cline{7-7} 
$\mathfrak{dv}$ & \textcolor{blue}{48},~\textcolor{red}{52} & \textcolor{blue}{50},~\textcolor{red}{50} & \textcolor{blue}{54},~\textcolor{red}{46} & \textcolor{blue}{34},~\textcolor{red}{66} & \textcolor{blue}{30},~\textcolor{red}{70} & \textcolor{blue}{28},~\textcolor{red}{72}\tabularnewline
\cline{2-7} \cline{3-7} \cline{4-7} \cline{5-7} \cline{6-7} \cline{7-7} 
$\mathfrak{de}$ & \textcolor{blue}{44},~\textcolor{red}{56} & \textcolor{blue}{46},~\textcolor{red}{54} & \textcolor{blue}{50},~\textcolor{red}{50} & \textcolor{blue}{30},~\textcolor{red}{70} & \textcolor{blue}{26},~\textcolor{red}{74} & \textcolor{blue}{24},~\textcolor{red}{76}\tabularnewline
\cline{2-7} \cline{3-7} \cline{4-7} \cline{5-7} \cline{6-7} \cline{7-7} 
$\mathfrak{hr}$ & \textcolor{blue}{64},~\textcolor{red}{36} & \textcolor{blue}{66},~\textcolor{red}{34} & \textcolor{blue}{70},~\textcolor{red}{30} & \textcolor{blue}{10},~\textcolor{red}{10} & \textcolor{blue}{6},~\textcolor{red}{14} & \textcolor{blue}{4},~\textcolor{red}{16}\tabularnewline
\cline{2-7} \cline{3-7} \cline{4-7} \cline{5-7} \cline{6-7} \cline{7-7} 
$\mathfrak{hv}$ & \textcolor{blue}{68},~\textcolor{red}{32} & \textcolor{blue}{70},~\textcolor{red}{30} & \textcolor{blue}{74},~\textcolor{red}{26} & \textcolor{blue}{14},~\textcolor{red}{6} & \textcolor{blue}{10},~\textcolor{red}{10} & \textcolor{blue}{8},~\textcolor{red}{12}\tabularnewline
\cline{2-7} \cline{3-7} \cline{4-7} \cline{5-7} \cline{6-7} \cline{7-7} 
$\mathfrak{he}$ & \textcolor{blue}{70},~\textcolor{red}{30} & \textcolor{blue}{72},~\textcolor{red}{28} & \textcolor{blue}{76},~\textcolor{red}{24} & \textcolor{blue}{16},~\textcolor{red}{4} & \textcolor{blue}{12},~\textcolor{red}{8} & \textcolor{blue}{10},~\textcolor{red}{10}\tabularnewline
\cline{2-7} \cline{3-7} \cline{4-7} \cline{5-7} \cline{6-7} \cline{7-7} 
\end{tabular}
\par\end{centering}
\end{table}

\begin{figure}[h]
\caption{Stable Share of Hawkish Partners $\eta_\a^\ee(\hh)(r)$ in Example \ref{exam-emotional-HD}}
\label{fig:stable-share}
\begin{centering}
\includegraphics[scale=0.35]{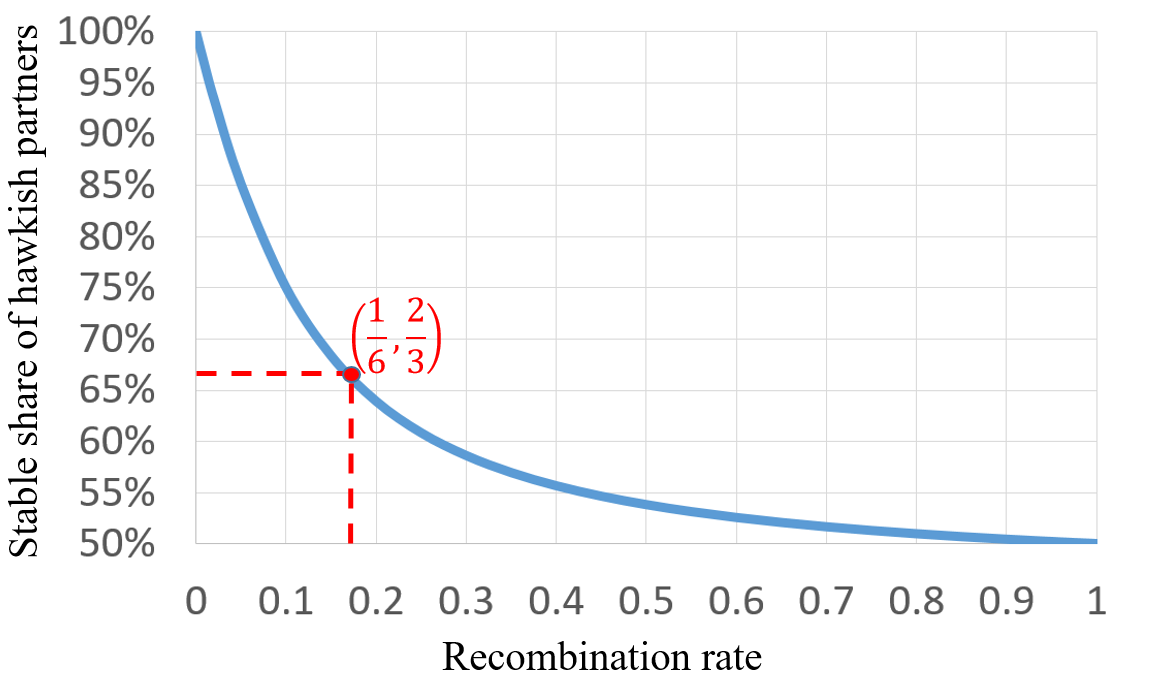}
\par\end{centering}
\end{figure}
\hfill \myenddefinition
\end{example}

\subsection{Main Results}
In what follows we characterise the set of asymptotically stable states. The characterisation extends Facts \ref{claim-necc-asymp-replic}--\ref{claim-suffic-asymp-replic} by showing that the asymptotic stability of  a stationary state is characterised by the following three conditions:
\begin{enumerate}
   \item \emph{Internal stability:} The $r$-Jacobian matrix is (semi-)negative-definite
   with respect to the tangent space.
      \item \emph{External stability against traits:} The payoffs of traits outside the support are lower than the payoffs of the traits of the incumbents.
    \item \emph{External stability against types:} The payoffs of types outside the support are at most $1-r$ times higher than the average payoffs of the incumbents. 
    \end{enumerate}
    
    Our main results are, essentially, an if and only if characterisation of asymptotic stability. Specifically, the strict variants of these conditions  (i.e., negative-definiteness and strictly lower payoffs of external traits/types) imply asymptotic stability, and their weak counterparts (semi-negative-definiteness and weakly lower payoffs of external traits/types) are implied by
    Lyapunov stability (which is implied by asymptotic stability). Formally,

\begin{theorem}
\label{thm:asymptotic-neccesary}
If a state $x^*$ is 
Lyapunov stable with recombination rate $r$, then it satisfies
\begin{enumerate}
     \item Weak internal stability:  $w^\intercal\cdot J_{\a^*}^r w\leq 0$ for each $w\in T_{\supp({\a^*})}$.
    \item External stability against traits: $u_{\a^*} \geq u^r_{\a^*}(a_d)$ for each $a_d \notin \supp_d({\a^*})$.
    \item External stability against types: $u_{\a^*} \geq (1-r)u_{\a^*}(a)$ for each $a \notin \supp({\a^*})$.
\end{enumerate}
\end{theorem}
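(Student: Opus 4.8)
The plan is to prove each of the three conclusions by contraposition: assuming the strict failure of the corresponding weak inequality, I construct an arbitrarily small perturbation of $\a^*$ whose trajectory leaves a fixed neighbourhood, so that $\a^*$ cannot be Lyapunov stable. The common starting point is the reformulation $\dot\a(a)/\a(a)=z^r_\a(a)-1$ of Equation (\ref{eq:recombDynamicsSimple}) together with the identity $\sum_{a}\a(a)z^r_\a(a)\equiv 1$, which holds because $\sum_a\dot\a(a)=0$. This identity shows that on the invariant face $\Delta(\supp(\a^*))$ the recombinator behaves like a payoff-monotone (replicator-type) dynamic whose payoff is the nonlinear function $z^r_\a(a)$ and whose mean payoff is constantly $1$; at $r=0$ everything below collapses to the replicator benchmark of Fact \ref{claim-necc-asymp-replic}.

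For \emph{external stability against types}, suppose $(1-r)u_{\a^*}(a)>u_{\a^*}$ for some $a\notin\supp(\a^*)$. Since the combinator term $r\,m_\a(a)\prod_{a_d\in a}\frac{u_\a(a_d)}{u_\a}$ in (\ref{eq:zed_def}) is nonnegative (all payoffs lie in $\R^{++}$), we always have $z^r_\a(a)\ge(1-r)u_\a(a)/u_\a$, and by continuity this lower bound exceeds $1$ throughout a neighbourhood of $\a^*$. Perturbing $\a^*$ to place a small mass $\e$ on $a$, we get $\dot\a(a)=\a(a)(z^r_\a(a)-1)>0$ as long as the state stays near $\a^*$; hence $\a^t(a)$ grows away from $0$ and the trajectory must leave the neighbourhood. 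The factor $1-r$ appears precisely because only the replicator (single-mentor) channel preserves an external type, whereas the combinator channel can only speed up its growth.

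For \emph{external stability against traits}, suppose $u^r_{\a^*}(a_d)>u_{\a^*}$ for some $a_d\notin\supp_d(\a^*)$. I perturb $\a^*$ by injecting trait $a_d$ at total mass $\e$, with its partners distributed according to the stable partner distribution $\eta_{\a^*}^{a_d}$ of Definition \ref{def:partner-definition}. The trait-centric dynamics (\ref{eq:recombDynamics-trait}) read $\dot\a(a_d)/\a(a_d)=u_\a(a_d)\,C_d(\a)-1$, where the bracketed factor $C_d(\a)$ is shared by every trait of dimension $d$; stationarity of $\a^*$ and $u_{\a^*}(a'_d)=u_{\a^*}$ (Proposition \ref{prop-trait-stationary}) force $C_d(\a^*)=1/u_{\a^*}$. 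Because $\eta_{\a^*}^{a_d}$ solves the fixed-point equation (\ref{eq:eta-definition}), the conditional partner distribution is invariant to leading order, so $u_\a(a_d)$ stays close to $u^r_{\a^*}(a_d)$ and the relative growth of $a_d$ approaches $u^r_{\a^*}(a_d)/u_{\a^*}-1>0$; the trait therefore expands and the trajectory escapes. The attraction of the partner distribution that makes this initialisation representative is the convergence statement proved in the course of Theorem \ref{thm:asymptotic-sufficient}.

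For \emph{internal stability}, suppose $w^\intercal J_{\a^*}^r w>0$ for some $w\in T_{\supp(\a^*)}$. Restricting to the face and using the relative-entropy function $H(\a)=\sum_{a\in\supp(\a^*)}\a^*(a)\log(\a^*(a)/\a(a))\ge0$, the identity $\sum_a\a(a)z^r_\a(a)\equiv1$ gives $\dot H=\sum_a(\a(a)-\a^*(a))z^r_\a(a)$; writing $w=\a-\a^*$ and expanding $z^r$ to first order yields $\dot H=w^\intercal J_{\a^*}^r w+o(|w|^2)$. This is exactly the relation $\dot H=w^\intercal U w$ underlying the replicator benchmark, with $U$ replaced by the $r$-Jacobian, so the instability argument behind Fact \ref{claim-necc-asymp-replic} transfers: indefiniteness of $J_{\a^*}^r$ on $T_{\supp(\a^*)}$ produces, through a Chetaev-type argument on the cone $\{w:w^\intercal J_{\a^*}^r w>0\}$, a nearby trajectory along which $H$ strictly increases and which thus cannot return to $\a^*$. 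I expect this step to be the main obstacle: linearising the field on the face gives the Jacobian $\mathrm{diag}(\a^*)J_{\a^*}^r$, and the mere fact that its eigenvalues have non-positive real part --- all that Lyapunov stability yields directly --- does not imply $w^\intercal J_{\a^*}^r w\le0$, because $J_{\a^*}^r$ need not be symmetric. The positive-definite directions must therefore be ruled out by the nonlinear argument, and the delicate points are to control the remainder in $\dot H$ and to confirm that $\{w:w^\intercal J_{\a^*}^r w>0\}$ is a valid Chetaev cone --- which is what the cited results of \citet{van1991stability} guarantee in the replicator case and what must be carried over to the $z^r$-monotone dynamics.
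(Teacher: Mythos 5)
Your proposal is correct and follows essentially the same route as the paper for all three parts: the external-stability-against-types argument via the nonnegativity of the combinator term in $z^r_\a(a)$, the external-stability-against-traits argument via seeding the invading trait with its stable partner distribution $\eta_{\a^*}^{a_d}$ and invoking trait-payoff monotonicity, and the internal-stability argument via perturbing in a direction $w$ with $w^\intercal J_{\a^*}^r w>0$ are exactly the paper's steps, with your identity $\dot H=\sum_a(\a(a)-\a^*(a))z^r_\a(a)=w^\intercal J_{\a^*}^r w+o(\lvert w\rvert^2)$ being the relative-entropy packaging of the paper's statement that the $y$-weighted average $r$-payoff exceeds the population average. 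The ``delicate point'' you flag in the internal-stability step --- that one must show the trajectory does not rotate out of the cone where the quadratic form is positive before leaving a fixed neighbourhood --- is real, but the paper's own proof (both the sketch and Appendix A, which simply asserts that a positive direction ``would initiate a trajectory that would eventually carry the state away'') passes over it at the same or a lower level of rigour, so your treatment is at least as careful as the original.
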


\begin{theorem}
\label{thm:asymptotic-sufficient}
A stationary state ${\a^*}$ is asymptotically stable with recombination rate $r$ if it satisfies
\begin{enumerate}
     \item Internal stability: $w^\intercal\cdot J_{\a^*}^r w< 0$ for each $w\in T_{\supp({\a^*})}$.
    \item External stability against traits: $u_{\a^*} > u^r_{\a^*}(a_d)$ for each $a_d \notin \supp_d({\a^*})$.
    \item External stability against types: $u_{\a^*} > (1-r)u_{\a^*}(a)$ for each $a \notin \supp({\a^*})$.
\end{enumerate}
\end{theorem}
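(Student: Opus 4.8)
My plan is to decouple the analysis into an \emph{internal} component, governing perturbations that stay within the face $\Delta(\supp(\a^*))$, and an \emph{external} component, governing the appearance of types carrying traits outside $\supp_d(\a^*)$. The first simplification is to note that for $r>0$ a stationary state necessarily has rectangular support (otherwise the recombination inflow into a missing type of $\overline{\supp}(\a^*)$ would be strictly positive, contradicting $\dot\a^*=0$), so every external type carries at least one novel trait; moreover types carrying two or more novel traits enter only at second order in the size of the perturbation, so to leading order it suffices to track, for each novel trait $a_d$, its marginal frequency $\a(a_d)$ together with its conditional partner distribution $\a(a_d,\cdot)/\a(a_d)$. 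The case $r=0$ is exactly Fact \ref{claim-suffic-asymp-replic}, so I treat $r\in(0,1]$.

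For internal stability I would linearise the dynamics in the form $\dot\a(a)=\a(a)\bigl(z^r_\a(a)-1\bigr)$ of Equation (\ref{eq:recombDynamicsSimple}) at $\a^*$. Since $z^r_{\a^*}(a)=1$ on $\supp(\a^*)$ by Proposition \ref{pro:stationary-states-recombinator}, the Jacobian of the face-restricted system is $\operatorname{diag}(\a^*(a))\,J_{\a^*}^r$. A standard linear-algebra lemma then shows that if $J_{\a^*}^r$ is negative-definite on $T_{\supp(\a^*)}$ and $\operatorname{diag}(\a^*(a))$ is positive, the product has only eigenvalues of negative real part on $T_{\supp(\a^*)}$: for an eigenpair $(\lambda,v)$ with $v\in T_{\supp(\a^*)}$ one writes $J_{\a^*}^r v=\lambda\,\operatorname{diag}(\a^*(a))^{-1}v$, pre-multiplies by $\bar v^\intercal$, and reads off $\operatorname{Re}\lambda<0$ from the negative Hermitian form on the left and the positive one on the right. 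This yields local asymptotic stability on the face; equivalently, $-\sum_{a\in\supp(\a^*)}\a^*(a)\ln\a(a)$ strictly decreases along trajectories interior to the face, with negative-definiteness supplying the sign of the second-order term.

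The external component is the heart of the argument and splits into two claims matching conditions (2) and (3). First I would establish the convergence statement the paper defers to this proof: fixing a novel trait $a_d$, the conditional partner distribution obeys a replicator-with-immigration equation whose replicator part pulls mass toward high-payoff partners while the recombination term of rate $r$ continually injects mass proportional to $\prod_{a_{d'}\in a_{-d}}\a^*(a_{d'})$; I would show that this system contracts to the unique fixed point $\eta_{\a^*}^{a_d}$ of Definition \ref{def:partner-definition} (existence and uniqueness being Proposition \ref{prop:UniquePartner}), so that the realised trait payoff $u_\a(a_d)$ converges to $u^r_{\a^*}(a_d)$. Feeding this into the trait-centric dynamics of Equation (\ref{eq:recombDynamics-trait}), whose bracketed factor equals $1/u_{\a^*}$ at $\a^*$, gives $\dot\a(a_d)/\a(a_d)\to\bigl(u^r_{\a^*}(a_d)-u_{\a^*}\bigr)/u_{\a^*}<0$ by external stability against traits, so every novel trait is driven to extinction. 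External stability against types enters to secure Lyapunov stability during the transient phase, before the partner distributions equilibrate: bounding the replicator channel of a single-novel-trait type's $z^r_\a(a)$ by $(1-r)u_{\a^*}(a)/u_{\a^*}<1$, this condition caps the maximal growth rate of external mass so that a sufficiently small initial perturbation cannot leave a prescribed neighbourhood.

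I expect the main obstacle to be the coupling of these two timescales: the partner distribution of a novel trait adjusts while the trait's total frequency is itself moving, and one must rule out the possibility that a trait grows substantially during the interval in which its partner distribution still favours high-payoff partners (so that $u_\a(a_d)$ temporarily exceeds $u_{\a^*}$) before the recombination term restores $\eta_{\a^*}^{a_d}$. Making this rigorous will require either a singular-perturbation/invariant-region argument or a single combined Lyapunov function of the form $-\sum_{a\in\supp(\a^*)}\a^*(a)\ln\a(a)+\sum_{a_d\notin\supp_d(\a^*)}c_{a_d}\,\phi(\a(a_d))$, with weights $c_{a_d}$ and a penalty $\phi$ chosen so that the negative internal drift and the eventually negative external drift dominate the transient cross-terms; verifying this domination near $\a^*$, uniformly over admissible perturbations, is where the bulk of the technical work lies. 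Assembling Lyapunov stability (from the capped external growth), eventual extinction of novel traits (from external stability against traits), and on-face convergence (from internal stability) then delivers asymptotic stability.
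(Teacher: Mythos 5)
Your architecture coincides with the paper's: the same three-way split (supported types, types with a single novel trait, types with several novel traits), the same linearisation of $\dot{\a}(a)=\a(a)\left(z^r_\a(a)-1\right)$ giving the face Jacobian $\operatorname{diag}(\a^*)\,J^r_{\a^*}$, and the same reduction of the single-novel-trait case to the conditional partner distribution and its limit $\eta_{\a^*}^{a_d}$. But the two steps you leave open are precisely where the proof lives, so as it stands the argument has genuine gaps. First, global convergence of the partner dynamics to the unique fixed point of Equation (\ref{eq:eta-definition}) is asserted, not proved; Proposition \ref{prop:UniquePartner} gives uniqueness of the fixed point but says nothing about convergence of the flow. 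The paper proves convergence in Appendix \ref{app:StabilityPartner} by showing that, for each coordinate $a_{-d}$, the drift $H_{a_{-d}}(\eta)$ changes sign exactly once along the segment from any $\eta^0$ with $\eta^0(a_{-d})=0$ to the vertex at $a_{-d}$ (the relevant one-variable function is concave or convex, positive at $0$ and negative at $1$), so that $|\eta^t(a_{-d})-\eta^{a_d}_{\a^*}(a_{-d})|$ decreases monotonically; some such argument must be supplied, and ``replicator-with-immigration contracts'' is not one.

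Second, the timescale coupling you flag as the main obstacle does not require a singular-perturbation argument or a combined Lyapunov function. Writing $\dot{\a}(a_d)=\left(u_{\a^*}(a_d|y^t)-1\right)\a(a_d)$ and integrating gives $\a^t(a_d)=M\exp\bigl(\int_{t'}^{t}(u_{\a^*}(a_d|y^s)-1)\,ds\bigr)$; once $y^s$ is close to $\eta^{a_d}_{\a^*}$ the integrand is below some $-C<0$, and the finite transient contributes only the bounded constant $M$ (this is exactly Lemma \ref{lem:Flow-Convergence}). Relatedly, you misplace the role of condition (3): types with two or more novel traits are not negligible ``second-order'' objects --- their frequencies evolve at first order through the replicator channel, and condition (3) is precisely what makes them decay exponentially, so that their contribution to the linear system for a single novel trait becomes an integrable forcing term $f_{a_d}(t)$; the paper then invokes Brauer's theorem on perturbed linear systems (Lemma \ref{lem:Brauer}) to conclude that this forcing does not destroy stability. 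Without these three ingredients --- convergence of the partner dynamics, the integration bound, and the treatment of the cross-terms from multi-novel-trait types --- the external part of your proof does not close.
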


\begin{proof}[Sketch of proof for Theorem \ref{thm:asymptotic-neccesary}] (See Appendix \ref{app:Proof1and2} for the full formal proof.)
\begin{enumerate}
    \item Suppose to the contrary that the $r$-Jacobian matrix $J_{x^*}^r$ is not negative semi-definite. 
    This implies that there exists a state $y$ satisfying $\supp{y}\subset\supp{\a^*}$, such that $(\a^*-y)'J_{\a^*}^r(\a^*-y)>0.$ This, in turn, implies that starting from a perturbed  state $\epsilon\cdot y+(1-\epsilon)\cdot \a^*$ giving weight $\epsilon$ to $y$ for sufficiently small $\epsilon>0$, the average payoff according to distribution $y$ is strictly greater than the weighted average payoff according to $\a^*$, which implies by $r$-monotonicity that the dynamic of the population will move away from $\a^*$.
    
    \item Suppose to the contrary that $u_{\a^*} < u^r_{\a^*}(\hat a_d)$ for some $\hat a_d \notin \supp_d(\a^*)$. Define $\hat\eta_{\a^*}^{\hat a_d}\in\Delta(A)$:
    \[
\hat{\eta}_{\a^*}^{\hat{a}_{d}}\left(a\right)=\begin{cases}
\eta_{\a^*}^{\hat{a}_{d}}\left(a_{-d}\right) & a_{d}=\hat{a}_{d}\textrm{ and }a_{-d}\in\supp_{-d}(\a^*)\\
0 & \textrm{otherwise}
\end{cases}
\]
    Consider a perturbed state $(1-\e)\a^*+\e\hat{\eta}_{\a^*}^{\hat{a}_{d}}$ for sufficiently small $\e>0.$ 
    Proposition \ref{prop:UniquePartner} implies that $\hat{a}_{d}$'s partners are distributed according to the stable partner distribution $\eta_{\a^*}^{\hat{a}_{d}}$, and that this distribution remains stationary as long as $\epsilon \ll 1$. 
    This implies that the mean payoff of agents with trait $\hat{a}_{d}$ is $u^r_{\a^*}(\hat a_d)>u_{\a^*}.$  Trait-payoff monotonicity (Proposition \ref{prop-trait-stationary}) then implies that the share of trait  $\hat{a}_{d}$ keeps increasing, and hence that the population eventually moves away from $x$.
    
    \item  Suppose to the contrary that $u_{\a^*} < (1-r)u_{\a^*}(\hat a)$ for $\hat a \notin \supp(\a^*).$ Consider the slightly perturbed state $(1-\e)\a^*+\e \hat a$. The fact that $u_{\a^*} < (1-r)u_{\a^*}(\hat a)$ implies that $z^r_{\a^*}(\hat a)>1$ (see Equation (\ref{eq:zed_def})), which implies, due to Proposition \ref{prop-1}, that the share of agents playing action $\hat a$ keeps growing, and hence that the population moves away from $x$.\qedhere
\end{enumerate}
\end{proof}

\begin{proof}[Sketch of proof for Theorem \ref{thm:asymptotic-sufficient}] (See Appendix \ref{app:Proof1and2} for the formal proof.)

Assume to the contrary that $\a^*$ is not asymptotically stable. Then for every $\epsilon>0$ there is an $\e$-nearby state $y_\e$ (i.e., satisfying $|y_\e-\a^*|<\e$) such that a population starting at $y_\e$ does not converge to $\a^*$. 
The fact that the $r-$Jacobian matrix is negative definite implies that $y_\e(a)\rightarrow \a^*(a)$ for each $a \in \supp(\a^*)$, for sufficiently small $\e$.
The inequality $u_{\a^*} > (1-r)u_{\a^*}(a)$ for each $a \notin \supp(\a^*)$ implies that the share of new agents who imitate any  $a \notin \supp(\a^*)$ as a whole block (i.e., without recombination) converges to zero (see Equation (\ref{eq:zed_def})). 

Finally, the inequality  $u_{\a^*} > u^r_{\a^*}(a_d)$ for each $a_d \notin \supp_d(\a^*)$, implies, due to trait monotonicity (Proposition \ref{prop-2}) that the share of agents who play each external trait $a_d$ converges to zero. 
Combining these facts, we see that the population converges from any sufficiently nearby state $y_\e$ to $\a^*$, which implies that $\a^*$ is asymptotically stable.
\end{proof}

Next we apply Theorems \ref{thm:asymptotic-neccesary}--\ref{thm:asymptotic-sufficient} to characterise stable states in two interesting special cases: (I) pure states, and (II) interior states. 

Corollaries \ref{cor-pure-nec}--\ref{cor-pure-suf} (which extend \citeauthor{waldman1994systematic}'s (\citeyear{waldman1994systematic}) Proposition 2) show that a pure state $a$ is asymptotically stable essentially iff the payoff of $a$ against itself is higher than (1)  the payoff of any "neighbouring" strategy (which differs in a single trait) against $a$, and (2) $1-r$ times the payoff of any strategy (which might differ in multiple traits). Formally
\begin{corollary}\label{cor-pure-nec}
If a pure state $a\in A$ is Lyapunov stable with recombination rate $r$, then:
\begin{enumerate}
    \item $u_a(a)\geq u_a(a'_d,a_{-d})$ for any dimension $d\in D$ and any trait $a'_d\in A_d,$ and
    \item $u_a(a)\geq (1-r)\cdot u_a(a')$ for any type $a'\in A.$
\end{enumerate}
\end{corollary}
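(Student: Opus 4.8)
The plan is to obtain both inequalities directly as the pure-state specialisation of the three necessary conditions in Theorem~\ref{thm:asymptotic-neccesary}. First I would observe that a pure state $a$ is stationary under the recombinator dynamics: every external type $b\neq a$ differs from $a$ in at least one dimension and hence carries a trait of zero frequency, so the product term in Equation~(\ref{eq:recombinatorMotion}) vanishes and $\dot{\a}(b)=0$, while $z^r_a(a)=1$ gives $\dot{\a}(a)=0$. Thus ``$a$ is Lyapunov stable'' is a legitimate hypothesis, and Theorem~\ref{thm:asymptotic-neccesary} applies with $\a^*=a$, $\supp(a)=\{a\}$, and $\supp_d(a)=\{a_d\}$ for each $d$.

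The first step is to dispose of internal stability. Since $\supp(a)=\{a\}$ is a singleton, the tangent space $T_{\supp(a)}$ contains no nonzero vector, so condition~(1) of Theorem~\ref{thm:asymptotic-neccesary} is vacuously satisfied; this is precisely why the corollary carries no internal-stability clause.

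Next I would translate external stability against traits. Because $\supp_{-d}(a)=\{a_{-d}\}$ is a singleton, the stable partner distribution $\eta_a^{a_d}$ of any external trait $a_d\notin\supp_d(a)$ is degenerate, placing mass one on $a_{-d}$ (the degenerate case noted after Definition~\ref{def:partner-definition}). Hence $u^r_a(a_d)=u_a(a_d,a_{-d})$, and since $a$ is pure we have $u_a=u_a(a)$. Condition~(2) of the theorem, $u_a\geq u^r_a(a_d)$, therefore reads $u_a(a)\geq u_a(a_d,a_{-d})$ for every $a_d\notin\supp_d(a)$; adjoining the incumbent trait itself (for which the inequality is an equality) extends the range to all $a'_d\in A_d$, which is exactly clause~(1) of the corollary.

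Finally I would translate external stability against types. Using $u_a=u_a(a)$ again, condition~(3) of the theorem gives $u_a(a)\geq(1-r)u_a(a')$ for every external type $a'\notin\supp(a)$; for $a'=a$ the inequality holds trivially because $r\in[0,1]$ and payoffs are strictly positive, so it holds for all $a'\in A$, which is clause~(2) of the corollary. The argument is almost entirely bookkeeping; the only points requiring any care are the vacuity of the tangent-space condition and the degeneracy of $\eta_a^{a_d}$, both of which are immediate once the support is recognised as a single point.
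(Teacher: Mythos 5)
Your proposal is correct and takes essentially the same route as the paper, which simply notes that Corollaries \ref{cor-pure-nec}--\ref{cor-interior} follow immediately from Theorems \ref{thm:asymptotic-neccesary}--\ref{thm:asymptotic-sufficient}; your write-up supplies exactly that specialisation (vacuous tangent-space condition for a singleton support, degenerate stable partner distribution $\eta_a^{a_d}=\delta_{a_{-d}}$, and $u_a=u_a(a)$ for a pure state).
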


\begin{corollary}\label{cor-pure-suf}
Pure stationary state $a\in A$ is asymptotically stable with recombination rate $r$ if:
\begin{enumerate}
    \item $u_a(a)> u_a(a'_d,a_{-d})$ for any dimension $d\in D$ and any trait $a_d\neq a'_d\in A_d,$ and
    \item $u_a(a)> (1-r)\cdot u_a(a')$ for any type $a\neq a'\in A.$
\end{enumerate}
\end{corollary}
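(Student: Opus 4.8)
The plan is to obtain Corollary \ref{cor-pure-suf} as a direct specialisation of Theorem \ref{thm:asymptotic-sufficient} to the case in which the candidate state is the pure state $\a^*$ placing unit mass on the type $a$. The task then reduces to checking that, at such a pure state, the three sufficient conditions of Theorem \ref{thm:asymptotic-sufficient} (internal stability, external stability against traits, external stability against types) are equivalent to, or automatically implied by, the two hypotheses of the corollary. Throughout I would use that at $\a^*=a$ all the relevant quantities collapse to game payoffs against $a$: the mean payoff is $u_{\a^*}=u(a,a)=u_a(a)$, and for any type $a'$ the type payoff is $u_{\a^*}(a')=u(a',a)=u_a(a')$.

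First I would dispose of internal stability. Since $\supp(\a^*)=\{a\}$ is a singleton, the tangent space $T_{\supp(\a^*)}$ contains no nonzero vector, because a single coordinate that sums to zero must itself vanish. Hence the condition $w^\intercal J_{\a^*}^r w<0$ for all $w\in T_{\supp(\a^*)}$ holds vacuously, and the $r$-Jacobian need never be computed. Next, external stability against types is immediate: the external types are exactly the $a'\neq a$, and substituting $u_{\a^*}=u_a(a)$ and $u_{\a^*}(a')=u_a(a')$ shows that condition (3) of the theorem, $u_{\a^*}>(1-r)u_{\a^*}(a')$, is verbatim hypothesis (2) of the corollary.

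The only step requiring a short computation is external stability against traits, whose crux is identifying the stable partner distribution at a pure state. Here $\supp_d(\a^*)=\{a_d\}$, so the external traits $a'_d\notin\supp_d(\a^*)$ are precisely the traits $a'_d\neq a_d$ named in hypothesis (1) of the corollary. Moreover $\supp_{-d}(\a^*)=\{a_{-d}\}$ is a singleton, so the defining system (\ref{eq:eta-definition}) for $\eta_{\a^*}^{a'_d}$ reduces to a single equation whose unique solution (guaranteed by Proposition \ref{prop:UniquePartner}) is the point mass on $a_{-d}$: the fraction in (\ref{eq:eta-definition}) equals one, since numerator and denominator coincide over a one-element support, and $\prod_{a_{d'}\in a_{-d}}\a^*(a_{d'})=1$ because every trait of $a$ has frequency one, so $\eta(a_{-d})=(1-r)+r=1$. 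Substituting this degenerate distribution into the definition (\ref{eq:u_a_d-def}) of the invading-trait payoff gives $u^r_{\a^*}(a'_d)=u_{\a^*}(a'_d,a_{-d})=u_a(a'_d,a_{-d})$. Thus condition (2) of the theorem becomes $u_a(a)>u_a(a'_d,a_{-d})$ for every $a'_d\neq a_d$, which is exactly hypothesis (1) of the corollary.

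With all three conditions verified, Theorem \ref{thm:asymptotic-sufficient} delivers asymptotic stability of $\a^*$. There is essentially no genuine obstacle: the analytic heavy lifting lives in Theorem \ref{thm:asymptotic-sufficient} and Proposition \ref{prop:UniquePartner}, and the only point deserving care is the degeneracy of the stable partner distribution, namely confirming that a singleton partner support forces a point mass independently of the recombination rate $r$, together with the bookkeeping that the index sets of external traits and external types coincide with the neighbouring strategies and the alternative types named in the corollary.
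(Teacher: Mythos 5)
Your proposal is correct and follows exactly the route the paper intends: the paper simply asserts that Corollaries \ref{cor-pure-nec}--\ref{cor-interior} ``follow immediately'' from Theorems \ref{thm:asymptotic-neccesary}--\ref{thm:asymptotic-sufficient}, and your specialisation (vacuous internal stability on a singleton support, the point-mass stable partner distribution forcing $u^r_{\a^*}(a'_d)=u_a(a'_d,a_{-d})$, and the verbatim match of the type condition) is precisely the omitted bookkeeping. No gaps.
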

\addtocounter{example}{-3}
\begin{example}[revisited]
Corollary \ref{cor-pure-suf} implies that any strict Nash equilibrium is asymptotically stable for any recombination rate $r$. This implies that $\ad$ is asymptotically stable for all values of $r$ in the partially enforceable prisoner's dilemma. Observe that $\sc$ yields a higher payoff against itself than both of its neighbouring strategies (namely, $\sd$ and $\ac$). This implies that $\sc$ is asymptotically stable if $$15=u_\sc(\sc)> (1-r)\cdot u_\sc(\ad)=16 \iff r>\frac{1}{16},$$ and it is unstable if $r<\frac{1}{16}.$
\end{example}
Corollary \ref{cor-interior} shows that an interior (full support) state $x$ is asymptotically stable essentially iff the $r$-Jacobian matrix is negative definite. Formally
\begin{corollary}\label{cor-interior}
\begin{enumerate}
    \item If an interior state $x^*$ is Lyapunov stable with recombination rate $r$, then $x^TJ_{x^*}^r x\leq0$ for each $x\in\Delta(A)$.
    \item An interior stationary state $a\in A$ is asymptotically stable with recombination rate $r$ if $x^TJ_{x^*}^r x<0$ for each $x\in\Delta(A)$.
\end{enumerate}
\end{corollary}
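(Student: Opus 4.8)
The plan is to obtain Corollary~\ref{cor-interior} as a direct specialisation of the main results, Theorems~\ref{thm:asymptotic-neccesary}--\ref{thm:asymptotic-sufficient}, to the interior case, exploiting the fact that interiority voids both external-stability requirements. If $x^*\in\INT(\D(A))$ then $\supp(x^*)=A$ and $\supp_d(x^*)=A_d$ for every dimension $d\in D$. Consequently there is no external type $a\notin\supp(x^*)$ and no external trait $a_d\notin\supp_d(x^*)$, so conditions (2) and (3) in each theorem (external stability against traits and against types) are satisfied vacuously. The only condition with content is internal stability, condition (1), which concerns the sign of the quadratic form $w^\intercal J^r_{x^*}\,w$ on the tangent space $T_{\supp(x^*)}$.

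For part (1) I would apply Theorem~\ref{thm:asymptotic-neccesary}: Lyapunov stability of the interior state $x^*$ yields weak internal stability, i.e.\ $w^\intercal J^r_{x^*}\,w\le 0$ for all $w\in T_{\supp(x^*)}$. For part (2) I would apply Theorem~\ref{thm:asymptotic-sufficient}: a stationary interior state automatically meets (2) and (3), so strict internal stability, $w^\intercal J^r_{x^*}\,w< 0$ for all $w\in T_{\supp(x^*)}$, already suffices for asymptotic stability. In both parts the decisive simplification is that, because $x^*$ is interior, the relevant tangent space is the full simplex tangent space: $T_{\supp(x^*)}=T_A=\{w\neq 0:\sum_i w_i=0\}$.

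The one step that requires care --- and the step I expect to be the main obstacle --- is matching the tangent-space condition delivered by the theorems with the quadratic-form condition that the corollary states over $x\in\Delta(A)$. The bridge is that at an interior point the admissible perturbations are precisely the deviations $x-x^*$ towards other states $x\in\Delta(A)$, and the set $\{x-x^*:x\in\Delta(A)\}$ spans $T_A$; hence requiring the quadratic form of $J^r_{x^*}$ to be (weakly) negative along all such deviation directions is equivalent to negative (semi-)definiteness on $T_A$. I would be careful to phrase the condition on these tangent directions rather than on the raw vectors $x$, since it is the action of $J^r_{x^*}$ on $T_A$ --- and not its action transverse to the simplex --- that governs on-simplex stability. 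Once this identification is made explicit, the two theorems collapse into the compact statement of the corollary, with the weak inequality corresponding to Lyapunov stability and the strict inequality to asymptotic stability.
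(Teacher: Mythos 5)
Your proposal is correct and follows exactly the route the paper intends: the paper's own ``proof'' is the single remark that Corollaries \ref{cor-pure-nec}--\ref{cor-interior} follow immediately from Theorems \ref{thm:asymptotic-neccesary}--\ref{thm:asymptotic-sufficient}, the point being precisely that interiority makes the two external-stability conditions vacuous and leaves only the internal (negative-definiteness) condition. Your additional care in reading the corollary's quadratic-form condition over $\Delta(A)$ as a condition on the deviation directions $x-x^*$ spanning the tangent space $T_A$ is a sensible clarification of a loosely stated hypothesis, not a departure from the paper's argument.
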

Corollaries \ref{cor-pure-nec}--\ref{cor-interior} follow immediately from Theorems \ref{thm:asymptotic-neccesary}--\ref{thm:asymptotic-sufficient}.

\section{A General View}\label{sec-extension}

The results of the previous section were obtained in the context of the particular equation of motion expressed by the recombinator dynamic of Equation \ref{eq:recombinatorMotion}.
We consider in this section a broader class of equations of motion, 
and analyse stationarity and stability in this more general setup.
\subsection{Extended Model}
Say that a function $f: A \times \Delta(A) \to [0,1]$ is a 
\emph{regular} function if, for each fixed $x \in \Delta(A)$, the equality 
$\sum_{a \in A} f(a,x) = 1$ holds. This is equivalent to saying that for each $x$, the vector $f(\cdot,x) - x(\cdot)$ is a tangent vector of the simplex $\Delta(A)$,
which implies that the $x^t$ always remains in the state space $\Delta(A)$.%
\footnote{\ 
Our definition of regularity is analogous to Definition 4.1 in \cite{weibull1997evolutionary}. In our canonical example, the replicator dynamics, we have $f(a,x) = \frac{\a(a)u_\a(a)}{u_\a}$. } 
In what follows, we will sometimes write $f(a|x)$ as a synonym for $f(a,x)$, when $f:A \times \Delta(A) \to [0,1]$.

Given functions $f_1, f_2: A \times \Delta(A) \to [0,1]$, and $r \in [0,1]$ consider dynamics of the form
\begin{equation}
	\label{eq:generalRecombinatorMotion}
	%\dot{x}(a) = (1-r)f_1(a|x)x(a)+r(1-r)f_2(a|x) - x(a).
    \dot{x}(a) = (1-r)f_1(a|x) + rf_2(a|x) - x(a).
\end{equation}

%Note that the first term on the right-hand-side is a form of imitative dynamics for the types. 
For example, consider Equation (\ref{eq:recombinatorMotion}).
Define $f_1(a|x) = \frac{u_x(a)}{u_x} x(a)$ and $f_2(a|x)=\prod_{a_d \in a} \frac{\a(a_d)u_\a(a_d)}{u_\a}$.
%In these terms, 
%$x\mapsto f_1(\cdot|x)x$ and $x\mapsto f_2$
It is easy to show that 
$f_1$ and $f_2$ are regular functions and hence Equation (\ref{eq:recombinatorMotion}) is a special case of the form of Equation (\ref{eq:generalRecombinatorMotion}).
We shall, from now on, 
assume that $f_1(a|x)$ and $f_2(a|x)$
refer to regular functions s
(which implies that $x^t$ remains in the simplex for any $t$).

\subsection{Characterising Stationary States} 

Define the $r$-payoff of action $a \in \supp(\a)$ at state $x$ as:
\begin{equation}
	\label{eq:gen-zeta_def}
	%\zeta^{r}_\a(a) \:=  rf_1(a|x)+(1-r)\frac{f_2(a|x)}{x(a)}.
    \zeta^{r}_\a(a) \:= \frac{1} {x(a)} \biggl((1-r)f_1(a|x) + rf_2(a|x)\biggr).
\end{equation}
Dividing Equation (\ref{eq:generalRecombinatorMotion}) by $x(a)$ for $a \in \supp(\a)$ now yields
\begin{equation}
	\label{eq:generalRelativeGrowth1}
	\frac{\dot{\a}(a)}{\a(a)} = \zeta^{r}_\a(a) - 1,  
\end{equation}
which generalises Equation (\ref{eq:recombDynamicsSimple}).
\begin{proposition}
	\label {prop:stationary-general-recombinator}
	The population recombinator dynamic of Equation (\ref{eq:generalRecombinatorMotion}) satisfies:
	\begin{enumerate}
		\item $r$-payoff monotonicity:  $\zeta^{r}_\a(a) > \zeta^{r}_\a(a') \Leftrightarrow \frac{\dot\a(a)}{\a(a)}>\frac{\dot\a(a')}{\a(a')}$ for each state $\a \in \Delta(A)$ and for each pair of types $a,a' \in \supp(\a)$.
		\item A state $\a$ is stationary if and only if $\zeta^{r}_\a(a) = 1$ for all $a \in \supp(\a).$
	\end{enumerate}
\end{proposition}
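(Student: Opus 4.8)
The plan is to mirror exactly the proof of Proposition \ref{pro:stationary-states-recombinator}, since the general setup is designed so that Equation (\ref{eq:generalRelativeGrowth1}) plays the same role that Equation (\ref{eq:recombDynamicsSimple}) played there. The key observation is that all the game-theoretic content (the specific forms of $u_\a(a)$, the trait-to-type ratio, the products over dimensions) has been absorbed into the single quantity $\zeta^r_\a(a)$, so that once Equation (\ref{eq:generalRelativeGrowth1}) is in hand the argument is purely formal and identical in structure to the $r$-payoff case.

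First I would prove part (1). This is immediate from Equation (\ref{eq:generalRelativeGrowth1}): for any two types $a,a' \in \supp(\a)$ we have $\frac{\dot\a(a)}{\a(a)} = \zeta^r_\a(a) - 1$ and $\frac{\dot\a(a')}{\a(a')} = \zeta^r_\a(a') - 1$, so subtracting gives $\frac{\dot\a(a)}{\a(a)} - \frac{\dot\a(a')}{\a(a')} = \zeta^r_\a(a) - \zeta^r_\a(a')$. Hence the strict inequality $\zeta^r_\a(a) > \zeta^r_\a(a')$ holds if and only if $\frac{\dot\a(a)}{\a(a)} > \frac{\dot\a(a')}{\a(a')}$, which is exactly $r$-payoff monotonicity.

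For part (2) I would argue both directions, again reading off Equation (\ref{eq:generalRelativeGrowth1}). For the "if" direction, suppose $\zeta^r_\a(a) = 1$ for all $a \in \supp(\a)$. Then Equation (\ref{eq:generalRelativeGrowth1}) gives $\dot\a(a) = 0$ for every $a \in \supp(\a)$; and for $a \notin \supp(\a)$ we have $\a(a) = 0$, and since $f_1, f_2$ map into $[0,1]$ with the regularity normalisation, one checks $\dot\a(a) = (1-r)f_1(a|\a) + rf_2(a|\a) \geq 0$, so that combined with the fact that the mass is already fully allocated among $\supp(\a)$ (where $\dot\a = 0$), conservation of total mass $\sum_a \dot\a(a) = 0$ forces $\dot\a(a) = 0$ off the support as well. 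Thus every coordinate derivative vanishes and $\a$ is stationary. For the "only if" direction, suppose $\zeta^r_\a(a) \neq 1$ for some $a \in \supp(\a)$; then $\a(a) > 0$ and Equation (\ref{eq:generalRelativeGrowth1}) yields $\dot\a(a) = \a(a)(\zeta^r_\a(a) - 1) \neq 0$, so $\a$ cannot be stationary.

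The one point requiring slight care — and the only genuine difference from the proof of Proposition \ref{pro:stationary-states-recombinator} — is handling the off-support coordinates in the "if" direction, since $\zeta^r_\a(a)$ is only defined for $a \in \supp(\a)$ and the original proof implicitly used forward invariance of the support. Here I would lean on the regularity assumption: the conservation identity $\sum_{a}\bigl((1-r)f_1(a|\a) + rf_2(a|\a)\bigr) = 1 = \sum_a \a(a)$ together with the nonnegativity of the inflow terms and the vanishing of $\dot\a$ on the support pins down $\dot\a(a) = 0$ everywhere. I expect this mass-balance step to be the main (though still minor) obstacle; everything else transfers verbatim from the proof of Proposition \ref{pro:stationary-states-recombinator}.
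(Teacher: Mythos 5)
Your proposal is correct and follows essentially the same route as the paper, which simply transplants the proof of Proposition \ref{pro:stationary-states-recombinator} with Equation (\ref{eq:generalRelativeGrowth1}) in place of Equation (\ref{eq:recombDynamicsSimple}). Your extra mass-balance step for the off-support coordinates (nonnegativity of the inflow terms plus $\sum_a \dot\a(a)=0$ from regularity forces $\dot\a(a)=0$ off the support) is a welcome refinement that the paper's one-line proof leaves implicit, since stationarity requires $\dot\a(a)=0$ for \emph{all} $a\in A$, not only on $\supp(\a)$.
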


\begin{proof}
	The proof is identical to the proof of Proposition \ref{prop-1}, with the necessary changes such as using Equation (\ref{eq:generalRelativeGrowth1}) in place of Equation (\ref{eq:recombDynamicsSimple}), etc.
\end{proof}

In even greater generality, instead of restricting to two regular functions alone, it is possible to work with any integer $m \ge 2$ regular functions and recapitulate the entire argument of this section.
We leave this straightforward generalisation to the reader.
\begin {example} Consider a learning dynamic in which agents occasionally revise their actions at a constant rate, normalised to one. Each revising agent randomly chooses a mentor with a probability that is proportional to that mentor's fitness. With probability $1-r$ the new agent imitates all of the traits of the mentor, while with the remaining probability of $r$, the new agent randomly chooses only a single dimension and imitates the mentor's trait in this dimension (while keeping the agent's existing traits fixed in all other dimensions). 

This learning process induces dynamics that fit Equation (\ref{eq:generalRecombinatorMotion}), in which 
$f_1(a|x) = \frac{u_x(a)}{u_x} x(a)$ (as in the baseline model) and $f_2(a|x)= \frac{1}{n} \sum_{d\in D} \frac{x(a_{-d}) x(a_{d}) u_{x}(a_{d})}{u_{x}}$. Substituting this in Equation (\ref{eq:gen-zeta_def}) yields the following $r$-payoff:
\begin{equation}
    \zeta^{r}_\a(a) \:= (1-r)\frac{u_x(a)}{u_x} + \frac{r}{n\cdot x(a)} \sum_{d\in D} \frac{x(a_{-d}) x(a_{d}) u_{x}(a_{d})}{u_{x}}.
\end{equation}
This example illustrates the flexibility of the general approach in applicability to a range of possible learning dynamics. 
\hfill \myenddefinition
\end{example}

\subsection{Trait-Centric Dynamics}
Another perspective is attained by noting that the constitution of $A$ as $A =  A_1 \times \ldots \times A_n$ enables the definition, for any trait $a_d \in A_d$, of the marginal frequency $x(a_d)$ and the marginal payoff $u_x(a_d)$.
An equation of motion of the form of Equation (\ref{eq:generalRecombinatorMotion}) determines trajectories in $\Delta(A)$. 
Denoting $\Delta_i \:= \Delta(A_i)$, via projection operators one may consider which trajectories in $\Delta_1 \times \ldots \times \Delta_n$ are induced by the trajectories in $\Delta(A)$ determined by $\Phi_u$.
In what follows we generalise the notion of trait payoff monotonicity. 
We need a way of measuring the `strength' of a trait $a_d$ in dimension $D$ relative to any other trait in the same dimension, given a regular function $f$ and state $x$.
The key to achieving this is by looking at the partners in $A_{-d}$, because $a_d$ alone gets no payoff: it needs partners.
For each partner $a'_{-d} \in A_{-d}$, a trait $a_d$ receives payoff $f(a_d,a'_{-d},x)$.
The measure of the relative strength of $a_d$ is then given by the sum of its payoffs for all possible partners.

%Zoom(8.10): Ziv and Omer will revise the Eqyation 30 + proof of Prop. 5 for the new notation of \phi (and check if other stuff has to be revised to the new notation).
%O(10.10): I have made the required correction. Ziv, please go over Equation (30) and the proof and see if you agree
%Z(15.10): I agree

\begin{definition}
Let $f(a,x)$ be a regular function.
%, which for any action $a$ and trait $a_d$ we can rewrite as $f(a_d, a_{-d},x)$. 
Define the \textit{marginal function} of $f$ to be
\begin{equation}
	\varphi(a_d,x)=
\frac{1}{x(a_d)}\sum_{a'_{-d} \in A_{-d}} f(a_d,a'_{-d},x).
\end{equation}

We say that $\varphi$ is \emph{trait payoff increasing} if $u_x(a_d)>u_x(\widehat{a}_d)$ implies $\varphi(a_d,x)>\varphi(\widehat{a}_d,x)$ for every state $x$ and every pair of traits $a_d,\widehat{a}_d\in A_d$. 
\hfill \myenddefinition
\end{definition}

Denote by $\varphi_1$ the marginal function of $f_1(\cdot|x)$ and by $\varphi_2$ the marginal function of $f_2(\cdot|x)$.  
Then dividing Equation (\ref{eq:generalRecombinatorMotion}) by $x(a_d)$ and summing over all $a'_{-d} \in A_{-d}$ yields the \emph{trait-centric recombinator dynamics}:
\begin{align}
\label{eq:generalRecombinatorForm}
	\frac{\dot{\a}(a_d)}{\a(a_d)} &= \sum_{a'_{-d} \in A_{-d}} \frac {\dot{\a}(a_d, a'_{-d})}{x(a_d)}  \\ 
 &= \frac{1-r}{x(a_d)}\sum_{a'_{-d} \in A_{-d}} f_1(a_d,a_{-d}'|x) 
 + \frac{r}{x(a_d)}\sum_{a'_{-d} \in A_{-d}} f_2(a_d,a_{-d}'|x) - 1 \nonumber \\
 %&= \frac{r}{x(a_d)}\sum_{a'_{-d} \in A_{-d}} %f_1(a_d,a_{-d}'|x)x(a_d,a_{-d}')  +\frac{1-r}{x(a_d)}\sum_{a'_{-d} \in %A_{-d}} f_2(a_d,a_{-d}'|x) - 1 \nonumber \\
	&= (1-r) \varphi_1(a_d,x)+ r \varphi_2(a_d,x) - 1. 
 \nonumber
\end{align}

\begin{example} 
Consider again Equation (\ref{eq:recombinatorMotion}), which was previously shown to be an equation of motion involving regular functions. In that case it follows from Equation (\ref{eq:recombDynamics-trait}) that
\begin{gather}
    \varphi(a_d,x)=u_\a(a_d) \biggl(\frac{(1-r)} {u_x}   
	+ \frac{r} {u_x^{|D|}}  \sum_{a'_{-d} \in A_{-d}} \prod_{a'_{d'} \in a'_{-d}} \a(a'_{d'})u_\a(a'_{d'}) \biggr).
\end{gather}
Notice that if $u_\a(a_d)>u_\a(\widehat{a}_d)$ then indeed $\varphi(a_d,x) > \varphi(\widehat{a}_d,x)$, 
hence the function $\varphi$ defined for the recombinator is indeed trait payoff increasing.
It follows that Equation (\ref{eq:recombDynamics-trait}) is a special case of the form of Equation (\ref{eq:generalRecombinatorForm}).
\hfill \myenddefinition
\end{example}

\begin{proposition}
    
	\label{prop:marginalPayoffMonotonicity}

	(1) Let $\varphi_1$ and $\varphi_2$ be trait payoff 
	increasing. Then
    \begin{enumerate}
    \item 
	The trait-centric recombinator dynamic of Equation (\ref{eq:generalRecombinatorForm}) satisfies trait payoff monotonicity, that is, $u_\a (a_d) > u_\a (a'_d) \Leftrightarrow \frac{\dot\a(a_d)}{\a(a_d)}>\frac{\dot\a(a'_d)}{\a(a'_d)}$ for each state $\a \in \Delta(A)$ and for each pair of traits $a_d,a'_d \in \supp_d(\a)$.
	
	\item  
    A state $x$ is a stationary state only if $u_\a (a_d) = u_\a (a'_d)$ for each pair of traits $a_d,a'_d \in \supp_d(\a)$.
    \end{enumerate} 
\end{proposition}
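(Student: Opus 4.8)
The plan is to read everything off the trait-centric identity in Equation~(\ref{eq:generalRecombinatorForm}), which expresses the relative growth rate of any incumbent trait $a_d$ as the convex combination $(1-r)\varphi_1(a_d,\a)+r\varphi_2(a_d,\a)-1$, shifted by the constant $-1$ that is common to every trait. Since the weights $1-r$ and $r$ are nonnegative and sum to one, and each of $\varphi_1,\varphi_2$ is trait payoff increasing, the entire argument reduces to the elementary fact that a convex combination of two strictly order-preserving maps is again strictly order-preserving. This mirrors the proof of Proposition~\ref{prop-trait-stationary}, with the trait-payoff-increasing hypothesis playing the role that the explicit dimension-common bracket in Equation~(\ref{eq:recombDynamics-trait}) played there.

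For part~(1), forward direction, I would suppose $u_\a(a_d)>u_\a(a'_d)$ for two traits in $\supp_d(\a)$. Applying trait payoff increasing to each marginal function gives $\varphi_1(a_d,\a)>\varphi_1(a'_d,\a)$ and $\varphi_2(a_d,\a)>\varphi_2(a'_d,\a)$, so by Equation~(\ref{eq:generalRecombinatorForm}) the difference of relative growth rates equals $(1-r)[\varphi_1(a_d,\a)-\varphi_1(a'_d,\a)]+r[\varphi_2(a_d,\a)-\varphi_2(a'_d,\a)]$, which is strictly positive because at least one of the two nonnegative weights is positive and each bracket is positive. For the converse I would invoke the trichotomy of the reals $u_\a(a_d)$ and $u_\a(a'_d)$: a strict ordering of the relative growth rates cannot coexist with the reverse strict payoff ordering (by the forward direction applied to the swapped pair) nor, as discussed below, with equal payoffs, leaving $u_\a(a_d)>u_\a(a'_d)$ as the only possibility.

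For part~(2) I would use the definition of stationarity directly: at a stationary $\a$ we have $\dot\a(a)=0$ for every $a\in A$, so summing over partners, $\dot\a(a_d)=\sum_{a'_{-d}\in A_{-d}}\dot\a(a_d,a'_{-d})=0$ for every incumbent trait, whence every trait's relative growth rate vanishes and all are equal. If two incumbent traits had distinct payoffs, the forward direction of part~(1) applied to the larger one would force distinct relative growth rates, contradicting that both vanish; therefore $u_\a(a_d)=u_\a(a'_d)$ for every pair in $\supp_d(\a)$. Note that this ``only if'' claim uses only the forward half of part~(1) and is thus entirely unaffected by any subtlety in the converse.

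The one genuinely delicate point — and the main obstacle to a fully literal proof of the converse in part~(1) — is the tie case: the definition of trait payoff increasing is a strict-to-strict implication and does not by itself force equal trait payoffs to yield equal values of $\varphi_1,\varphi_2$. I would close this gap by observing that in the instances of interest the marginal functions depend on the trait only through its payoff $u_\a(a_d)$ times a factor common to all traits in the dimension — precisely the structure of the bracket in Equation~(\ref{eq:recombDynamics-trait}) — so that equal payoffs do produce equal relative growth rates; equivalently, one reads ``trait payoff increasing'' as ``a strictly increasing function of the trait payoff,'' yielding the same conclusion. Everything else is routine bookkeeping inherited from Proposition~\ref{prop-trait-stationary}.
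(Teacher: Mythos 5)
Your proof is correct and follows essentially the same route as the paper's: the forward direction applies the trait-payoff-increasing hypothesis to each marginal function and takes the convex combination, the converse is the contrapositive/trichotomy of that same step, and part~(2) is the standard stationarity argument that the paper merely alludes to. The one place you go beyond the paper is the tie case in the converse: the paper's own proof silently asserts that $u_\a(a_d)\le u_\a(a'_d)$ yields $\varphi_i(a_d,\a)\le\varphi_i(a'_d,\a)$, which does not literally follow from the strict-to-strict definition of trait payoff increasing, so your explicit patch (reading the hypothesis as ``$\varphi_i$ is a strictly increasing function of the trait payoff,'' which holds for the recombinator instance) is a genuine and welcome tightening rather than a deviation.
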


\begin{proof}
    See Appendix \ref{app:proof-of-Prop-5}.
\end{proof}

\subsection{Characterising Stable States} 
 Consider an equation of motion of the form 
 \begin{equation}
 \label{eq:traitImitation}
     \dot{x}(a)=(1-r)f_1(a|x)+rf_2(a|x)-x(a).
 \end{equation}
 We assume that $f_1$ and $f_2$ are non-negative, regular and trait-payoff-increasing functions. 
 We further assume that $f_1$ implements a \emph{type-imitation dynamic} in the sense that $f_1(a|x)=g_1(a|x)\cdot x(a)$, where $g_1(a|x)$ is differentiable and satisfies the property of \textit{monotone percentage growth rate}, namely $g_1(a|x)>g_1(a'|x)$ if and only if $u_x(a)>u_x(a')$ (this property is typically called `imitation dynamics' in the literature, see, e.g., \cite[Example 1]{sandholm2010Local}). 
 Finally, we assume the following three assumptions on  $f_2$, 
  which together with our previous assumptions imply that it is a \emph{trait-imitation dynamic}:
  \begin{enumerate}
      \item[1.] \emph{Differentiability}: $f_2$ is differentiable;
      \item[2.] 
      \emph{Trait combination}: $\prod\limits_{d\in D}x(a_d)>0$ implies $f_2(a|x)>0$;
      \item[3.] \emph{Trait growth inertia}: $x(a_d)=0$ and $\frac{\partial f_2(a_d,a_{-d}|x)}{\partial x(a')}> 0$ implies 
      $\prod\limits_{d'\neq d} x(a_{d'})> 0$ and $a'_d=a_d.$  
  \end{enumerate}

The assumption of trait combination implies that the function $f_2$ combines traits into types; that is, if all traits that are part of a type $a$ exist in the population, then there will be a positive share of new agents who will adopt type $a$.

To understand the idea behind \textit{trait growth inertia}, 
suppose that $x$ is a stationary state, and suppose that trait $a_d$ is absent from the population. In such a case Equation (\ref{eq:traitImitation}) implies (by linear approximation) that near a stationary state the growth in a type is determined by the change of $f_2(a|x)$ in the directions of the various types $a'$. 
Which types matter here? 
Trait growth inertia implies that if new agents adopt a type $a$, this can happen only if $a_d$ is the only trait in type $a$ that does not exist in the population, and that the perturbation in the population has been a result of invasion of trait $a_d$ via some type $a'$ carrying it. \vskip 5pt 

If all of the above assumptions hold, then we say an equation of motion of the form of Equation (\ref{eq:traitImitation}) expresses \emph{generalised recombinator dynamics}, which generalises Equation (\ref{eq:recombinatorMotion}), where the first component  captures type imitation (generalising $\frac{\a(a)u_\a(a)}{u_\a}$) and the second component  captures trait imitation (generalising $\prod\limits_{d\in D} \frac{x(a_d)u_x(a_d)}{u_x}$).
Denote by $J_x$ the restriction to $\supp(x)\times\supp(x)$ of the Jacobian of the function 
\begin{gather*}
    x\mapsto\left((1-r)f_1(\cdot|x)+rf_2(\cdot|x)-x(\cdot)\right).
\end{gather*} 

 For $a_{-d}'\in\supp_{-d}(x^*)$, define $v_{x^*}(a_d|a_{-d}')$ as the trait imitation rate for $a_d$ with respect to partners $a_{-d}'$ (generalising the relative payoff $\frac{u_{x^*}(a_d,a_{-d}')}{u_{x^*}}$ within the second summand in Equation (\ref{eq:Linear-Approximation-Traits}) of Appendix \ref{app:LinearApproximation}), in detail:
  $$v_{x^*}(a_d|a_{-d}')=\frac{\frac{\partial f_2(a_d,a_{-d}'|x^*)}{\partial x(a_d,a_{-d}')}}{\prod\limits_{d'\neq d}x^*(a_{d'}')}.$$ 
 To illustrate, in the recombinator dynamics $f_2(a|x)=\prod\limits_{d\in D} \frac{x(a_d)u_x(a_d)}{u_x}$, thus since $\frac{u_{x^*}(a_{d'}')}{u_{x^*}}=1$ for $d'\neq d$ we have $\frac{\partial f_2(a_d,a_{-d}'|x^*)}{\partial x(a_d,a_{-d}')}=\left(\prod\limits_{d'\neq d} x^*(a_{d'}')\right)\frac{u_{x^*}(a_d,a_{-d}')}{u_{x^*}}$, hence $v_{x^*}(a_d|a_{-d}')=\frac{u_{x^*}(a_d)}{u_{x^*}}$.
 
 Given a distribution $y_{a_d}\in\Delta(\supp_{-d}(x^*))$ we further define $u_{x^*}(a_d|y_d)$ as the marginal payoff of trait $a_d$ resulting from type-imitation (generalising the marginal trait-to-partner payoff from Appendix \ref{app:PartnerDistribution}) 
 $$u_{x^*}(a_d|y_{a_d})=\sum\limits_{a_{-d}'\in \supp_{-d}(x^*)} g_1(a_d,a_{-d}'|x^*)y_{a_d}(a_{-d}').$$ 
 We define  $v_{x^*}(a_d|y_{a_d})$ as the marginal payoff of trait $a_d$ resulting from trait imitation (generalising the average appearing in the second summand in Equation (\ref{eq:Linear-Approximation-Traits}) in Appendix \ref{app:LinearApproximation}):
$$v_{x^*}(a_d|y_{a_d})=\sum\limits_{a_{-d}'\in \supp_{-d}(x^*)}v_{x^*}(a_d|a_{-d}')y_{a_d}(a_{-d}').$$  

In our analysis we shall consider the marginal payoff of trait $a_d$ resulting from \emph{generalised recombinator dynamics} given by $$U_{x^*}^r(a_d|y_{a_d})=(1-r)u_{x^*}(a_d|y_{a_d})+rv_{x^*}(a_d|y_{a_d}).$$ The latter will play a  role in the generalised partner dynamics, which is a dynamic on $\Delta(\supp_{-d}(x^*))$ defined by 
\begin{align}
    \dot{y}_{a_d}(a_{-d}) = (&1-r)g_1(a_d,a_{-d}|x^*)y_{a_d}(a_{-d})\\
    &+r\biggl(\prod\limits_{d'\neq d}x^*(a_{d'})\biggr)v_{x^*}(a_d|y_{a_d})- y_{a_d}(a_{-d})U_{x^*}^r(a_d|y_{a_d}),
    \nonumber
\end{align}

We say that the \emph{generalised partner dynamics} has a unique globally stable equilibrium if for every trait $a_d$ in dimension $d$ there is a state $\eta_{x^*}^{a_d}\in\Delta(\supp_{-d}(x^*))$ such that $y_{a_d}\to \eta_{x^*}^{a_d}$ as $t \to \infty$, independently of the initial conditions. 

With these assumptions we can now 
extend our characterisation of stable states:
\begin{theorem}
\label{thm:general-asymptotic-neccesary}
Let $x^*$ be a stationary state of generalised recombinator dynamics such that the partner-dynamics has a unique 
globally stable equilibrium $\eta_{x^*}^{a_d}$ for every dimension $d$ and trait $a_d$. If the $x^*$ is  Lyapunov stable, then it satisfies
\begin{enumerate}
     \item Weak internal stability:  $w^\intercal\cdot J_{x^*} w\leq 0$ for each $w\in T_{\supp({x^*})}$.
    \item External stability against traits: $1 \geq U^r_{x^*}(a_d|\eta_{x^*}^{a_d})$ for each $a_d \notin \supp_d({\a^*})$.
    \item External stability against types: $1 \geq (1-r)f_1(a|x^*)$ for each $a \notin \supp({x^*})$.
\end{enumerate}
\end{theorem}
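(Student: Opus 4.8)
The plan is to prove the contrapositive of each of the three conditions separately, mirroring the three-part structure of the sketch for Theorem~\ref{thm:asymptotic-neccesary} but replacing the concrete objects ($z^r$, $u^r_{x^*}(a_d)$, $u_{x^*}(a)$) by their general counterparts ($\zeta^r$, $U^r_{x^*}(a_d\mid\eta^{a_d}_{x^*})$, $f_1$). Throughout I would rely on Proposition~\ref{prop:stationary-general-recombinator}, which supplies the relative-growth identity $\frac{\dot x(a)}{x(a)}=\zeta^r_x(a)-1$ together with $\zeta^r$-monotonicity, and on Proposition~\ref{prop:marginalPayoffMonotonicity} at the trait level. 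In each case I perturb $x^*$ by a small amount in a carefully chosen direction and show the perturbation grows, contradicting Lyapunov stability.

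For Part~(1), I would assume $w^\intercal J_{x^*}w>0$ for some $w\in T_{\supp(x^*)}$ and exhibit local divergence on the invariant face $\Delta(\supp(x^*))$. Since $x^*$ lies in the relative interior of this face and the face is forward invariant, the restricted vector field has Jacobian $J_{x^*}$ acting on $T_{\supp(x^*)}$. Taking $V(x)=(x-x^*)^\intercal(x-x^*)$, the linear approximation gives $\dot V\approx 2(x-x^*)^\intercal J_{x^*}(x-x^*)$, whose symmetric part inherits the positive direction $w$. Hence $\dot V>0$ on an open cone around $x^*+sw$ for small $s$, and a Chetaev-type instability argument shows that trajectories starting arbitrarily close to $x^*$ leave a fixed neighbourhood, so $x^*$ is not Lyapunov stable. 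The only care needed is that the quadratic form of a possibly non-symmetric $J_{x^*}$ still detects instability, which is exactly what the Chetaev function $V$ provides.

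For Part~(3), I would assume the per-capita block-imitation rate violates the bound, i.e. $(1-r)g_1(\hat a\mid x^*)>1$ for some $\hat a\notin\supp(x^*)$ (this is the quantity entering condition~(3), since $f_1(\hat a\mid x)=g_1(\hat a\mid x)\,x(\hat a)$). Perturbing to $(1-\e)x^*+\e\hat a$ and using non-negativity of $f_2$, the relative growth of $\hat a$ satisfies $\zeta^r_x(\hat a)=\frac{(1-r)f_1(\hat a\mid x)+rf_2(\hat a\mid x)}{x(\hat a)}\ge(1-r)g_1(\hat a\mid x)\to(1-r)g_1(\hat a\mid x^*)>1$ as $\e\to0$. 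Thus $\dot x(\hat a)>0$ for small $\e$ by Proposition~\ref{prop:stationary-general-recombinator}, the mass of $\hat a$ increases, and $x^*$ is unstable.

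Part~(2) is the main obstacle and the only genuinely new argument. Assuming $U^r_{x^*}(\hat a_d\mid\eta^{\hat a_d}_{x^*})>1$ for some $\hat a_d\notin\supp_d(x^*)$, I would inject a small mass of $\hat a_d$ and track jointly the total trait mass $x(\hat a_d)$ and the conditional partner distribution $y_{\hat a_d}(a_{-d})=x(\hat a_d,a_{-d})/x(\hat a_d)$. The key step is to show that, to leading order near $x^*$, $y_{\hat a_d}$ obeys the generalised partner dynamics, which by hypothesis converges to the unique globally stable $\eta^{\hat a_d}_{x^*}$ on a faster time scale than the slow evolution of $x(\hat a_d)$; a time-scale-separation (singular-perturbation) argument then lets me evaluate the trait's growth rate at the stable partner distribution. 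Feeding $y_{\hat a_d}\to\eta^{\hat a_d}_{x^*}$ into the trait-centric dynamics of Equation~(\ref{eq:generalRecombinatorForm}) gives $\varphi_1(\hat a_d,x)\to u_{x^*}(\hat a_d\mid\eta^{\hat a_d}_{x^*})$ and, to leading order near $x^*$, $\varphi_2(\hat a_d,x)\to v_{x^*}(\hat a_d\mid\eta^{\hat a_d}_{x^*})$, so that $\frac{\dot x(\hat a_d)}{x(\hat a_d)}\to U^r_{x^*}(\hat a_d\mid\eta^{\hat a_d}_{x^*})-1>0$. Trait-payoff monotonicity (Proposition~\ref{prop:marginalPayoffMonotonicity}) then forces the share of $\hat a_d$ to grow, contradicting Lyapunov stability. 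The hardest part is making the time-scale separation rigorous, i.e. proving that the partner distribution locks onto $\eta^{\hat a_d}_{x^*}$ before the $\hat a_d$ mass can decay, which is precisely where the assumed global stability of the partner dynamics is indispensable.
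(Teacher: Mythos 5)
Your proposal is correct and, for Parts (1) and (3), essentially reproduces the paper's argument: the paper proves this theorem by declaring that it ``follows the exact footsteps'' of Theorem \ref{thm:asymptotic-neccesary} with Proposition \ref{prop-trait-stationary} replaced by Proposition \ref{prop:marginalPayoffMonotonicity}, and your Chetaev-function treatment of the quadratic form $w^\intercal J_{x^*}w$ on the invariant face, together with your reading of condition (3) as a bound on the per-capita rate $g_1(\hat a\mid x^*)$ (rather than on $f_1(\hat a\mid x^*)$, which vanishes identically off the support), match what the appendix actually does. The one genuine divergence is Part (2). You inject an arbitrarily configured small mass of $\hat a_d$ and then invoke global stability of the partner dynamics, via a time-scale-separation argument, to conclude that the partner distribution locks onto $\eta_{x^*}^{\hat a_d}$ before the trait mass can decay. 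The paper instead chooses the perturbation direction to be $\hat\eta_{x^*}^{\hat a_d}$ itself, i.e.\ it introduces the mutant trait already distributed according to the stable partner distribution; by stationarity of $\eta_{x^*}^{\hat a_d}$ under the partner dynamics the partner distribution then simply stays put, and the trait grows at rate $U^r_{x^*}(\hat a_d\mid\eta_{x^*}^{\hat a_d})-1>0$ from time zero, so no convergence argument is needed for the necessity direction (convergence from arbitrary initial partners is only required for sufficiency). Your route can be made rigorous, but note that there is no actual separation of time scales: both the partner distribution and the trait mass evolve at $O(1)$ rates, and what saves the argument is merely that the trait mass decays at most exponentially at a bounded rate, hence survives the finite time needed for the partners to approach $\eta_{x^*}^{\hat a_d}$, after which its growth rate is positive and bounded away from zero. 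Casting this as a singular-perturbation problem promises more machinery than is needed; the paper's choice of perturbation direction buys a one-line argument at no cost in generality.
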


\begin{theorem}
\label{thm:-general-asymptotic-sufficient}
Let $x^*$ be a stationary state of  generalised recombinator dynamics such that the partner-dynamics has a unique globally stable equilibrium $\eta_{x^*}^{a_d}$ for every dimension $d$ and trait $a_d$. The stationary state ${\a^*}$ is asymptotically stable if it satisfies
\begin{enumerate}
    \item Internal stability: $w^\intercal\cdot J_{x^*} w< 0$ for each $w\in T_{\supp({x^*})}$.
    \item External stability against traits: $1 > U^r_{x^*}(a_d|\eta_{x^*}^{a_d})$ for each $a_d \notin \supp_d({\a^*})$.
    \item External stability against types: $1 > (1-r)f_1(a|x^*)$ for each $a \notin \supp({\a^*})$.
\end{enumerate}
\end{theorem}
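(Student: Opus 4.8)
The plan is to run the same three-front argument as in the proof of Theorem~\ref{thm:asymptotic-sufficient}, but with each recombinator-specific object replaced by its generalised counterpart: the $r$-Jacobian by $J_{x^*}$, trait monotonicity by Proposition~\ref{prop:marginalPayoffMonotonicity}, and the explicit stable partner distribution by the assumed globally stable equilibrium $\eta_{x^*}^{a_d}$ of the generalised partner dynamics. I would first record the structural fact that every stationary state has rectangular support: for $a\notin\supp(x^*)$ stationarity forces $0=\dot{x}^*(a)=r f_2(a|x^*)$, since $f_1(a|x^*)=g_1(a|x^*)x^*(a)=0$, and by \emph{trait combination} $f_2(a|x^*)>0$ whenever $\prod_{d} x^*(a_d)>0$; hence every external type must carry at least one external trait. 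This is what lets conditions~(2) and~(3) jointly cover all external types, since the recombination channel into any external type is active only through an external trait.

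\textbf{Internal face.} Restricting to the invariant face $\Delta(\supp(x^*))$, the linearisation of $x\mapsto(1-r)f_1+rf_2-x$ at the fixed point $x^*$ is exactly $J_{x^*}$, and since $x^*$ is stationary the constant term vanishes, so $\dot{x}\approx J_{x^*}(x-x^*)$. Negative definiteness of $J_{x^*}$ on $T_{\supp(x^*)}$ makes the symmetric part of the restriction to the tangent space negative definite, so every relevant eigenvalue has negative real part and $x^*$ is locally asymptotically stable \emph{within} the face by Lyapunov's indirect method; consequently any trajectory that stays near $x^*$ and whose external mass tends to zero has its internal part converge to $x^*$.

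\textbf{External traits and types.} Assume toward a contradiction that $x^*$ is not asymptotically stable, so there are states $y_\e\to x^*$ whose trajectories do not converge. For an external type $a$ I split $\dot{x}(a)=[(1-r)g_1(a|x)-1]\,x(a)+rf_2(a|x)$: by continuity and condition~(3) (read as a bound on the type-imitation percentage rate, $(1-r)g_1(a|x^*)<1$) the self-reinforcing channel is strictly contractive near $x^*$, so the only possible source of persistence is the recombination inflow $rf_2(a|x)$, which by trait combination is of the order of the external traits appearing in $a$. It therefore suffices to kill the external traits. Fixing an external trait $a_d$, \emph{trait growth inertia} guarantees that, to first order near $x^*$, the $f_2$-inflow into the types $(a_d,a'_{-d})$ loads only on single-partner invasion directions, so the conditional partner distribution $y_{a_d}$ evolves precisely by the generalised partner dynamics; the standing global-stability hypothesis then gives $y_{a_d}\to\eta_{x^*}^{a_d}$. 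Once the partner distribution has relaxed, the trait-centric equation (\ref{eq:generalRecombinatorForm}) yields $\dot{x}(a_d)/x(a_d)\to U^r_{x^*}(a_d\mid\eta_{x^*}^{a_d})-1<0$ by condition~(2), so by Proposition~\ref{prop:marginalPayoffMonotonicity} every external trait frequency decays to zero; this drains the recombination inflow into external types, which then vanish as well, and the internal part converges by the previous step. This contradicts non-convergence and establishes asymptotic stability.

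\textbf{Main obstacle.} The delicate point is the two-timescale reasoning in the external-trait step: I must show that while the total external mass remains small the partner distribution $y_{a_d}$ relaxes to $\eta_{x^*}^{a_d}$ \emph{before} the overall frequency $x(a_d)$ can escape, so that the sign of the net trait growth is genuinely dictated by $U^r_{x^*}(a_d\mid\eta_{x^*}^{a_d})-1$ rather than by transient partner compositions. Making this rigorous requires the differentiability and trait-growth-inertia assumptions to isolate exactly which linear terms feed the partner dynamics, together with a convergence estimate for $y_{a_d}\to\eta_{x^*}^{a_d}$ that is uniform in the (small) external mass, so that the strict inequality of condition~(2) survives the $O(\e)$ perturbation of the ambient state.
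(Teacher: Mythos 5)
Your proposal follows essentially the same route as the paper's proof in Appendix B.2: the same three-way split (internal face via negative definiteness of $J_{x^*}$ and linearisation; external types with several missing traits via the contraction $(1-r)g_1(a|x^*)<1$; external traits via the generalised partner dynamics, its assumed globally stable equilibrium $\eta_{x^*}^{a_d}$, and condition (2)), together with the same preliminary observation that every external type must carry an external trait. The two-timescale obstacle you flag at the end is resolved in the paper not by a uniform-in-the-external-mass estimate but by noting that in the linear approximation the partner dynamics is autonomous (decoupled from the trait mass $x(a_d)$), so after a finite relaxation time the integrated growth rate of $x(a_d)$ is dominated by a strictly negative constant (Lemma \ref{lem:Flow-Convergence}), while the residual inflow from non-rectangular types is an integrable forcing term absorbed by Brauer's theorem (Lemma \ref{lem:Brauer}).
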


In Appendix \ref{app:UniqueGloballyStable} we describe a family of functions, containing the recombinator dynamics, for which the induced partner dynamics admits a unique globally stable equilibrium. 

\section{Conclusion}\label{sec-conclusion}
In this paper we began with a simple observation: nearly all replicator models of evolutionary game theory in the current literature suppose that every agent in a population can be associated with a single fitness-determinative trait. This, however, is far from realistic; we are all composed of complex ensembles of traits and behaviours, and frequently our success is critically dependent on many or all of the elements of those ensembles. 

We therefore expanded the standard models to one in which there is a set of dimensions, where each dimension is itself a set of traits, and an agent is associated with a tuple of traits, one from each dimension. In what we called a combinator model, newly born agents sample tuples of incumbent agents, each of whom is a mentor, and imitate a trait from each mentor.
In a recombinator model, a proportion $0 \le r \le 1$ of the population samples multiple mentors by the combinator and $1-r$ selects only one mentor, as in the replicator.

The result is a rich and interesting dynamic social learning model, predicting novel behaviour in various applications. 
The recombinator dynamics violate payoff monotonicity when the recombination rate is positive, and this can allow strictly dominated types to be asymptotically stable.
To analyse the dynamics, we defined the $r$-payoff function, which combines the effects of the combinator and replicator components of the dynamics into a single recombinator vector field.
In trajectories flowing along this vector field, payoff monotonicity is restored, indicating that this is the natural payoff function on which to focus in this setting.

There are in fact actually two parallel games intertwined in the model: at a visible level, agents interacting in a game determining population-dependent payoffs, alongside a less immediately obvious but significant game being played by competing traits.  The importance of the trait-centric perspective emerges here, as at the stationary state of a convergent trajectory surviving types may exhibit different payoffs, but at the traits level, the dynamics reliably select for traits with higher payoffs.

Stability is similarly studied, with the recombinator model generalising known results on asymptotic stability in the replicator model. However, while asymptotic stability in the replicator model depends on internal and external stability to invading types, the recombinator adds an extra layer of complexity: external stability now needs to obtain against both invading traits and invading types to ensure stability.  

There is much scope for future research to develop these ideas in several directions, including developing a multi-population version of the recombinator dynamics, establishing microfoundations by revision protocols, and studying parallel models based on other evolutionary dynamics, such as best reply, logit, and Brown–von Neumann–Nash.

\begin{adjustwidth}{0pt}{-21pt}
\begin{singlespace}
\bibliographystyle{chicago}
\bibliography{mybibdata}
\end{singlespace}
\end{adjustwidth}

\newpage
\appendix
\pagenumbering{arabic}
\section*{Online Appendix -- For Online Publication}
\section{Formal Proofs of Theorems \ref{thm:asymptotic-neccesary} and \ref{thm:asymptotic-sufficient}}
\label{app:Proof1and2}

\subsection{Proof Outline}
The proof of Theorems \ref{thm:asymptotic-neccesary} and \ref{thm:asymptotic-sufficient} proceeds as follows. We consider three different cases. 
The first case involves the dynamics of types in the support of an equilibrium. 
The second case involves types that have traits outside the support in multiple dimensions.
The third and last case involves types that have traits outside the support in a single dimension.

For each case we begin by considering the linear approximation of the dynamics near an equilibrium. In the first case we obtain a condition akin to that of the replicator dynamics, 
thus internal stability allows us to prove local stability. 
In the last case we show that the approximate dynamics is given by a diagonal matrix, and external stability for types implies negative values along the diagonal. 

The most challenging case is the second one. 
We first show that the linearisation of the dynamics there can be presented as linear dynamics over components whose partners have positive proportion in equilibrium plus some ``external force'' emerging from the interaction with types with unsupported partners. It turns out that if the linear component over supported partners is negative definite then the dynamics as a whole is stable. To prove that it is indeed negative definite we introduce the notion of ``partner dynamics'' and prove that these dynamics lead to a unique stable equilibrium which is in fact the stable partner distribution introduced in Equation (\ref{eq:eta-definition}). We then use this fact to show the negative definiteness of the original linearisation.

\subsection{The Linear Approximation}\label{app:LinearApproximation}

We consider in this section a linear approximation of the recombinator dynamics around a stationary state $\a^*$. The Lyapunov--Poincar\'e Theorem \cite[page 41]{Bellman} implies that the stability properties of the recombinator dynamics at a stationary state $x^*$ will be implied by the stability properties of its linear approximation around the state $x^*$. 

1. \emph{Internal Stability}. 
Note that if one restricts attention to $\supp(\a^*)$ then the recombinator equation is a 
replicator-like equation with respect to the $r$-payoff field $z^r$
(in the sense that the form
$\frac{\dot{\a}(a)}{\a(a)} = z^r_\a (a) - 1$ relates $\frac{\dot{\a}(a)}{\a(a)}$ on one side of the equation to a function of $a$ on the other side).
Hence expanding around $\a^*$ we obtain, since 
$\sum\limits_{a\in supp(\a)} \a(a)z_\a^r(a)=1$, that 
(neglecting terms that are $O(||x-x^*||^2))$ for any $a\in \supp(x^*)$:
\begin{equation}
\label{eq:LinearApproxInner}
    \dot{\a}(a)=\a^*(a)\sum\limits_{a'\in \supp(\a)} \frac{\partial z_{\a^*}^r(a)}{\partial \a(a')}(\a(a')-\a^*(a'))=\a^*(a)\sum\limits_{a'\in \supp(\a)} (J_{\a^*}^r)_{a,a'}(\a(a')-\a^*(a')).
\end{equation}

From Equation (\ref{eq:LinearApproxInner}) it is immediately clear that it is sufficient for $J_{\a^*}^r$ to be negative definite (with respect to the tangent space) for local asymptotic stability of the dynamics of $\a(a)$ for $a\in \supp(\a^*)$ to obtain. 
In the other direction, suppose that $\a^*$ is Lyapunov stable.
Then $J_{\a^*}^r$ must be negative-semi-definite: any positive value in $J_{\a^*}^r$ would initiate a trajectory that would eventually carry the state of the population away from $\a^*$, contradicting Lyapunov stability. 

2. \emph{External Stability Against Types}.
Next consider a type $a \not \in \supp(\a^*)$. Notice that as the combinator part of the recombinator equation is a product of averages over traits $a_d\in a$, the linear factors are obtained by types $a'$ such that $\a^*(a'_d)=0$ for a single value of $d$, namely -- only for types in which only one of their traits is outside the support of $x^*$.

If $a\in A$ has multiple traits outside the support $x^*$, we obtain the approximate linear dynamic (neglecting terms that are $O(||x-x^*||^2))$
\begin{equation}
\label{eq:LinearApproxTypes}
    \dot{\a}(a)=\a(a)\biggl((1-r)\frac{u_{\a^*}(a)}{u_{\a^*}}-1\biggr).
\end{equation}
It is clear from Equation (\ref{eq:LinearApproxTypes}) that it suffices for $(1-r)\frac{u_{\a^*}(a)}{u_{\a^*}}-1<0$ for the weight of $a$ to be reduced asymptotically (at an exponential rate), hence the same condition suffices for the local asymptotic stability of $\a$ restricted to neighbour types to hold.
In the other direction, if $(1-r)\frac{u_{\a^*}(a)}{u_{\a^*}}-1 >0$, then Lyapunov stability cannot obtain, as the weight of $a \not \in \supp(\a^*)$ increases, carrying the population away from $\a^*$. 

3. \emph{External Stability Against Traits}. 
We first consider the proof of Theorem \ref{thm:asymptotic-neccesary} for this part. Assume the dynamics is Lyapunov stable and that external-stability-against-traits doesn't hold, namely that $u_{\a^*}^r(a_d)-u_{\a^*} > 0$ for some $a_d\in \supp_d(x^*)$. 
 By appealing again to the monotonicity property proved in Proposition \ref{prop-trait-stationary}, it follows that the weight $\a^t(a_d)$ increases monotonically, hence moving the population away from $\a^*$, contradicting the assumption of Lyapunov stability.\vskip 5pt 

The proof of local stability (Theorem \ref{thm:asymptotic-sufficient}) for this part is more involved and will be carried in several stages. As before, we begin with the linear approximation for traits $a_d$ in dimension $d$.
Suppose that $a_{-d}\in\supp_{-d}(x^*)$, while $(a_d,a_{-d})\not\in\supp(x^*)$. 
The linear approximation for the equation of motion for $a_{-d}\in\supp_{-d}(x^*)$ is (neglecting terms that are $O(||x-x^*||^2))$
\begin{align}
\label{eq:Linear-Approximation-Traits}
    \dot{x}(a_d,a_{-d})%\\ \nonumber
     = (&1-r)\frac{u_{x^*}(a_d,a_{-d})}{u_{x^*}}x(a_d,a_{-d})\\ \nonumber &+\frac{r}{u_{x^*}}\prod\limits_{d'\neq d}x^*(a_{d'})\sum\limits_{a_{-d}'\in A_{-d}}u_{x^*}(a_d,a_{-d}')x(a_d,a_{-d}')-x(a_d,a_{-d}).
\end{align}

Note that if $a_{-d}'\not\in\supp_{-d}(x^*)$ then (as discussed after Equation (\ref{eq:LinearApproxTypes}) above)
\begin{gather}\label{eq:Non-neighbours-die-fast}
    x^t(a_d,a_{-d}')=x^0(a_d,a_{-d}')e^{\left((1-r)\frac{u_{x^*}(a_d,a_{-d}')}{u_{x^*}}-1\right)t},
\end{gather} 
where $x^0$ represents the initial state and $x^t$ the state at time $t > 0$.

Hence, setting 
\[
f_{a_d}(t)=\frac{r}{u_{x^*}}\prod\limits_{d'\neq d} x^*(a_{d'})\sum\limits_{a_{-d}'\not\in\supp_{-d}(x^*)}u_{x^*}(a_d,a_{-d}')x^0(a_d,a_{-d}')e^{\left((1-r)\frac{u_{x^*}(a_d,a_{-d}')}{u_{x^*}}-1\right)t}
\]
we can rewrite Equation (\ref{eq:Linear-Approximation-Traits}) for $a_{-d}\in\supp_{-d}(x^*)$ as 
\begin{align}
\label{eq:Linear-Approximation-Traits-Reduced1}
    \dot{x}(a_d,a_{-d}) = (&1-r)\frac{u_{x^*}(a_d,a_{-d})}{u_{x^*}}x(a_d,a_{-d})
    \\ \nonumber &+\frac{r}{u_{x^*}}\prod\limits_{d'\neq d}x^*(a_{d'})\sum\limits_{a_{-d}'\in \supp_{-d}(x^*)}u_{x^*}(a_d,a_{-d}')x(a_d,a_{-d}')-x(a_d,a_{-d})+f_{a_d}(t),
\end{align}
noting that $\int_0^
\infty f_{a_d}(s)ds<\infty$. 

Consider the matrix $\Lambda^{a_d}$ indexed by elements $a_{-d},a_{-d}'\in \supp_{-d}(x^*)$ given by (which presents Equation (\ref{eq:Linear-Approximation-Traits-Reduced1}) for a fixed $a_d$ in matrix form):
\begin{gather}\label{eq:The-Matrix}
    \Lambda^{a_d}_{a_{-d},a_{-d}'}=\begin{cases} \left((1-r)+r\prod\limits_{d'\neq d}x^*(a_{d'})\right)\frac{u_{x^*}(a_d,a_{-d})}{u_{x^*}}-1, & a_{-d}=a_{-d}'\\
    \frac{ru_{x^*}(a_d,a_{-d}')}{u_{x^*}}\prod\limits_{d'\neq d}x^*(a_{d'}), & a_{-d}\neq a_{-d}'
    \end{cases}. 
\end{gather}
Using $\Lambda^{a_d}$, and letting $f(t)$ be the vector whose entries equal $f_{a_d}(t)$, Equation (\ref{eq:Linear-Approximation-Traits-Reduced1}) can be rewritten more compactly as
\begin{gather}\label{eq:Linear-Approximation-Traits-Reduced2}
    \dot{x}(a_d,\cdot)=\Lambda^{a_d} x(a_d,\cdot) +f(t),
\end{gather}
where $\int_0^\infty ||f(s)||ds<\infty$.
The following lemma is then a direct corollary of 
\citet[Theorem 2]{Brauer1964}:
\begin{lemma}\label{lem:Brauer}
 If $\Lambda^{a_d}$ is negative definite then all the solutions of Equation (\ref{eq:Linear-Approximation-Traits-Reduced2}) tend to $0$ as $t\to\infty$, and $0$ is in fact Lyapunov stable. 
\end{lemma}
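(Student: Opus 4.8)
The plan is to prove the lemma directly, by solving the constant-coefficient linear system in Equation (\ref{eq:Linear-Approximation-Traits-Reduced2}) via variation of parameters, rather than quoting Brauer's theorem as a black box. Abbreviate $\Lambda \:= \Lambda^{a_d}$ and $y(t) \:= x(a_d,\cdot)$, so the equation reads $\dot{y} = \Lambda y + f(t)$ with $\int_0^\infty \|f(s)\|\,ds < \infty$, and Duhamel's formula gives the unique solution $y(t) = e^{\Lambda t} y(0) + \int_0^t e^{\Lambda (t-s)} f(s)\,ds$. Everything then reduces to controlling the matrix exponential and the convolution against $f$.

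The first step is a uniform decay bound on the homogeneous flow. Negative definiteness of $\Lambda$ means $w^\intercal \Lambda w < 0$ for every $w \ne 0$, equivalently that the symmetric part $\tfrac12(\Lambda + \Lambda^\intercal)$ is negative definite (since $w^\intercal \Lambda w = \tfrac12 w^\intercal(\Lambda + \Lambda^\intercal)w$); let $-2\alpha < 0$ denote its largest eigenvalue. Taking $V(w) = \|w\|^2$ along $\dot{w} = \Lambda w$ gives $\dot{V} = w^\intercal(\Lambda + \Lambda^\intercal)w \le -2\alpha V$, whence $\|e^{\Lambda t}\| \le e^{-\alpha t}$ for all $t \ge 0$. (One could equally deduce $\|e^{\Lambda t}\| \le M e^{-\alpha t}$ from the fact that negative definiteness forces every eigenvalue of $\Lambda$ to have strictly negative real part; the quadratic-Lyapunov route simply makes $M=1$.)

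The second step is attractivity. In Duhamel's formula the homogeneous term $e^{\Lambda t} y(0) \to 0$ immediately. For the convolution I would estimate $\big\|\int_0^t e^{\Lambda(t-s)} f(s)\,ds\big\| \le \int_0^t e^{-\alpha(t-s)}\|f(s)\|\,ds$ and split the integral at $t/2$: on $[0,t/2]$ use $e^{-\alpha(t-s)} \le e^{-\alpha t/2}$ to bound the contribution by $e^{-\alpha t/2}\int_0^\infty \|f\| \to 0$, and on $[t/2,t]$ use $e^{-\alpha(t-s)} \le 1$ to bound it by $\int_{t/2}^\infty \|f\| \to 0$, the latter vanishing because it is the tail of a convergent integral. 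Hence $y(t) \to 0$.

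The third step is Lyapunov stability, and here one must use that $f$ is not a genuinely exogenous forcing: by its definition preceding the lemma, $f_{a_d}(t)$ is linear in the unsupported-partner components $x^0(a_d,a_{-d}')$ of the initial state, and external stability against types makes each exponent $(1-r)u_{x^*}(a_d,a_{-d}')/u_{x^*} - 1$ strictly negative, so $\int_0^\infty \|f(s)\|\,ds \le C\,\|x^0\|$ for a constant $C$ depending only on $\Lambda$ and these decay rates. Substituting the decay bound and this estimate into Duhamel yields the uniform bound $\|y(t)\| \le \|y(0)\| + \int_0^\infty\|f\| \le (1+C)\|x^0\|$ for all $t$, so taking $\delta = \varepsilon/(1+C)$ delivers Lyapunov stability. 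I expect the split-at-$t/2$ estimate for the convolution to be the only genuine analytic content; the points to get right are that ``negative definite'' is read as a statement about the quadratic form (equivalently the symmetric part), which is exactly what powers the simple $V=\|w\|^2$ Lyapunov function, and that the integrability of $f$ — hence the whole argument — rests on the external-stability-against-types hypothesis already secured earlier in the proof.
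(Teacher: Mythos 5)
Your proof is correct, and it takes a genuinely different route from the paper's. The paper proves this lemma entirely by citation: it verifies that Equation (\ref{eq:Linear-Approximation-Traits-Reduced2}) has the form treated in Brauer's 1964 theorem on perturbed linear systems with integrable forcing (taking $A=\Lambda^{a_d}$, $f(t,y)=f_{a_d}(t)$, $p\equiv 0$), and then obtains the required hypothesis $\|e^{At}\|\leq Ke^{-\sigma t}$ from the implication that negative definiteness forces every eigenvalue of $A$ to have negative real part, citing Chicone's Theorem 2.34. You replace both black boxes with direct arguments: the quadratic Lyapunov function $V(w)=\|w\|^2$ extracts the exponential decay of $e^{\Lambda t}$ straight from the negative definiteness of the quadratic form (with constant $M=1$, bypassing the spectral argument), and the Duhamel representation with the split-at-$t/2$ estimate rederives exactly the conclusion of Brauer's theorem for this constant-coefficient, state-independent-forcing special case. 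Your treatment of Lyapunov stability is also a genuine addition rather than a restatement: the paper's citation leaves implicit why boundedness is uniform in the initial condition, whereas you correctly observe that $f_{a_d}$ is linear in the unsupported-partner coordinates of $x^0$ and that external stability against types makes its exponents strictly negative, so $\int_0^\infty\|f\|\,ds\leq C\|x^0\|$ and the bound $\|y(t)\|\leq(1+C)\|x^0\|$ follows. This correctly identifies that the integrability of the forcing --- stated but not justified at the point where $f_{a_d}$ is introduced --- rests on the external-stability-against-types hypothesis. What your approach buys is self-containedness and a tighter use of the hypothesis (the Lyapunov route uses the quadratic form itself, not merely the weaker Hurwitz property it implies); what the paper's buys is brevity and the extra generality of Brauer's theorem, none of which is needed here.
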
 
 
 \begin{proof}
    We prove Lemma \ref{lem:Brauer} by applying \cite[Theorem 2]{Brauer1964}. First we notice that Equation (\ref{eq:Linear-Approximation-Traits-Reduced2}) has the form of  \cite[Equation (1)]{Brauer1964}. Indeed, in terms of \cite[Theorem 2]{Brauer1964} and \cite[Equations (5) and (9)]{Brauer1964}, set $A=\Lambda^{a_d}$,  $f(t,y)=f_{a_d}(t)$, and $p(t)\equiv 0$. 
    
    Therefore, it is sufficient to prove that if $A=\Lambda^{a_d}$ is negative definite then $||e^{At}||\leq Ke^{-\sigma t}$ for some $\sigma>0$. This is an immediate corollary of the implication $(3)\Rightarrow (2)$ in Chicone (1999) \cite[Theorem 2.34]{Chicone1999}. 
\end{proof} \qedhere

Lemma \ref{lem:Brauer} shows that to complete the proof we should show that $\Lambda^{a_d}$ is negative definite. We will do that in the following section by introducing the notion of \emph{partner distribution} and showing how it may be used together with the reduction above to prove stability.

 \subsection{The Partner Distribution and the Proof of the Main Results}\label{app:PartnerDistribution}

We now begin to examine the negative definiteness of $\Lambda^{a_d}$. To prove that it will be sufficient to prove that $x(a_d,\cdot)=0$ is locally stable for the differential equation $\dot{x}(a_d,\cdot)=\Lambda^{a_d}x(a_d,\cdot)$, which by definition is the same as Equation (\ref{eq:Linear-Approximation-Traits-Reduced1}) with the terms $f_{a_d}(t)$ neglected. 

First, sum Equation (\ref{eq:Linear-Approximation-Traits-Reduced1}), neglecting the terms $f_{a_d}(t)$, over $a_{-d}\in\supp_{-d}(x^*)$:
\begin{gather}\label{eq:SupportLinearApproximateTraits}
     \dot{x}(a_d)=(1-r)\frac{\sum\limits_{a_{-d}\in \supp_{-d}(x^*)}x(a_d,a_{-d})u_{x^*}(a_d,a_{-d})}{u_{x^*}}\\ \nonumber+r\frac{\sum\limits_{a_{-d}'\in \supp_{-d}(x^*)} u_{x^*}(a_d,a_{-d}')x(a_d,a_{-d}')}{u_{x^*}}-x(a_d).
 \end{gather}
  
 Consider the \textit{partner distribution} $y(a_{-d})\:=\frac{x(a_d,a_{-d})}{x(a_{d})}$ for $a_{-d}\in\supp_{-d}(x^*)$, which is the ratio of the distribution of $a_d$ to the weight of $a_d$ at state $x$. Define the \emph{normalised marginal trait-to-partner payoff} $u_{x^*}(a_d|y)=\sum\limits_{a_{-d}\in \supp_{-d}(x^*)}y(a_{-d})\frac{u_{x^*}(a_d,a_{-d})}{u_{x^*}}$. From Equation (\ref{eq:SupportLinearApproximateTraits}) we obtain the differential equation: 
\begin{gather}\nonumber
     \dot{x}(a_d)=(1-r)u_{x^*}(a_d|y)x(a_d)+ru_{x^*}(a_d|y)x(a_d)-x(a_d)\\
     \label{eq:SupportLinearApproximateTraitsPartnersNon} =(u_{x^*}(a_d|y)-1)x(a_d)
 \end{gather}

%O(11.10):I've added this Lemma to make the proof more clear. 
 \begin{lemma}\label{lem:Flow-Convergence}
  Suppose that for every initial condition $x^0$ in an open neighbourhood of $x^*$ the partner distribution $y^t\to \eta_{x^*}^{a_d}$ as $t\to \infty$ and furthermore $u_{x^*}(a_d|\eta_{x^*}^{a_d})-1<0$ as implied by internal stability. Then $x(a_d)=0$ is a locally stable equilibrium of Equation (\ref{eq:SupportLinearApproximateTraitsPartnersNon}) 
 \end{lemma}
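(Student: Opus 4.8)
The plan is to treat Equation (\ref{eq:SupportLinearApproximateTraitsPartnersNon}) as a scalar linear ODE with a time‑varying coefficient and to solve it explicitly. Writing $c(t) := u_{x^*}(a_d \mid y^t) - 1$, where $y^t$ is the partner distribution along the trajectory issuing from $x^0$, the equation reads $\dot{x}(a_d) = c(t)\,x(a_d)$, whose unique solution is
\[
x^t(a_d) = x^0(a_d)\,\exp\!\left(\int_0^t c(s)\,ds\right).
\]
Everything then reduces to controlling the exponent $\int_0^t c(s)\,ds$, and I would extract both conclusions (convergence and stability) from the two hypotheses of the lemma.

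First, for attractivity: since the map $y \mapsto u_{x^*}(a_d\mid y)=\sum_{a_{-d}} y(a_{-d})\frac{u_{x^*}(a_d,a_{-d})}{u_{x^*}}$ is linear, hence continuous, on the compact simplex $\Delta(\supp_{-d}(x^*))$, the hypothesis $y^t \to \eta_{x^*}^{a_d}$ gives $c(t)\to u_{x^*}(a_d\mid\eta_{x^*}^{a_d})-1 =: -\delta$, and internal stability supplies $\delta>0$. Choosing $T$ with $c(s)\le -\delta/2$ for all $s\ge T$, the tail estimate $\int_0^t c(s)\,ds \le \int_0^T c(s)\,ds - \tfrac{\delta}{2}(t-T)$ shows that the exponent tends to $-\infty$, so $x^t(a_d)\to 0$.

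Second, for Lyapunov stability: because $y^t$ stays in the compact simplex and $u_{x^*}(a_d\mid\cdot)$ is bounded there, $c$ is bounded, say $|c|\le M$; combining $\int_0^T c(s)\,ds \le MT$ on the initial segment with $\int_T^t c(s)\,ds \le 0$ on the tail yields $\sup_{t\ge 0}\int_0^t c(s)\,ds \le MT$, whence $x^t(a_d)\le e^{MT}x^0(a_d)$ for every $t$. Given $\varepsilon>0$, taking $x^0(a_d)<\varepsilon e^{-MT}$ keeps $x^t(a_d)<\varepsilon$ for all $t$, which is precisely local stability of $x(a_d)=0$.

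The main obstacle I anticipate is uniformity: the threshold $T$, and hence the constant $e^{MT}$, depends a priori on the rate at which $y^t$ approaches $\eta_{x^*}^{a_d}$, which could in principle degrade as $x^0$ ranges over the neighbourhood, so the $\delta'=\varepsilon e^{-MT}$ above is not obviously uniform. To secure a single constant valid throughout a neighbourhood of $x^*$, I would invoke the assumption in Theorems \ref{thm:general-asymptotic-neccesary}--\ref{thm:-general-asymptotic-sufficient} that the partner dynamics has a \emph{globally stable} equilibrium $\eta_{x^*}^{a_d}$: on a compact simplex a globally asymptotically stable equilibrium attracts uniformly on compacta, so $T$ can be chosen uniformly. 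Alternatively, one can sidestep the issue entirely by noting that the underlying system $\dot{x}(a_d,\cdot)=\Lambda^{a_d}x(a_d,\cdot)$ is autonomous and linear with $\Lambda^{a_d}$ a Metzler matrix (its off‑diagonal entries $\tfrac{r\,u_{x^*}(a_d,a_{-d}')}{u_{x^*}}\prod_{d'\neq d}x^*(a_{d'})$ are positive): by Perron--Frobenius theory for quasi‑positive matrices the spectral abscissa is a real eigenvalue carried by a non‑negative eigenvector, so attractivity of $0$ on the non‑negative cone already forces $\Lambda^{a_d}$ to be Hurwitz, and thus $0$ to be uniformly asymptotically stable.
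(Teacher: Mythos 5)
Your proposal is correct and follows essentially the same route as the paper's proof: both reduce Equation (\ref{eq:SupportLinearApproximateTraitsPartnersNon}) to the scalar linear ODE $\dot{x}(a_d)=\bigl(u_{x^*}(a_d\mid y^t)-1\bigr)x(a_d)$, integrate it explicitly, and use $y^t\to\eta_{x^*}^{a_d}$ together with $u_{x^*}(a_d\mid\eta_{x^*}^{a_d})-1<0$ to make the coefficient eventually bounded above by a negative constant, yielding exponential decay. Your treatment is in fact somewhat more complete than the paper's terse argument, which absorbs the transient into an unexplained constant $M$ and does not separately address the Lyapunov-stability bound or the uniformity of the threshold time over initial conditions --- points you handle correctly via the boundedness of the coefficient on the simplex and the global stability of the (autonomous) partner dynamics, respectively.
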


 \begin{proof}
   There is a $C>0$ and time $t_0>0$ s.t. for every $t>t_0$ we have $u_{x^*}(a_d|y^t)-1<-C<0$. Integrating Equation (\ref{eq:SupportLinearApproximateTraitsPartnersNon}) we have, for some constant $M$ and every $t>t'>t_0$
 \begin{gather}
     x^t(a_d)= Me^{\int_{t'}^t (u_{x^*}(a_d|y^s)-1)ds}<Me^{-C(t-t')}.
 \end{gather}
 Taking $t\to \infty$ implies the Lemma. 
 \end{proof}

 We have thus shown that if the flow of the partner distribution $y$ (as given by solution of the differential equation $\dot{x}(a_d,\cdot)=\Lambda^{a_d}x(a_d,\cdot)$) converges to the stable partner distribution for every initial state near $x^*$ and furthermore internal stability holds then stability follows in this case as well. 
 
 In the last two sections of this Appendix, we will show that the partner distribution follows a certain dynamic and that this dynamic has the stable partner distribution as a unique globally stable equilibrium, which will finish the proof of Theorem \ref{thm:asymptotic-sufficient}.

 \subsection{Partner Dynamics}\label{app:PartnerDynamic}

%Here we study the dynamics of the partner distribution $y$. 
In this section we prove that the partner distribution, given by the solution of the differential equation $\dot{x}(a_d,\cdot)=\Lambda^{a_d}x(a_d,\cdot)$, follows a dynamic. In the next and final section we will prove that the stable partner distribution is its unique globally stable equilibrium.  

 Since $y(a_{-d})x(a_d)=x(a_d,a_{-d})$ one has $\dot{y}(a_{-d})x(a_d)+y(a_{-d})\dot{x}(a_d)=\dot{x}(a_d,a_{-d})$. Dividing by $x(a_d)$ and rearranging we obtain 
 \begin{gather}
     \dot{y}(a_{-d})=\frac{\dot{x}(a_d,a_{-d})}{x(a_d)}-y(a_{-d})\frac{\dot{x}(a_d)}{x(a_d)}.
 \end{gather}
 Substituting this into Equation (\ref{eq:Linear-Approximation-Traits-Reduced1}) (neglecting the $f_{a_d}(t)$ term), and considering the normalised payoff $\widehat{u}_{x^*}(a)=\frac{u_{x^*}(a)}{u_{x^*}}$ we obtain 
 \begin{gather}\label{eq:PartnerDynamics}
     \dot{y}(a_{-d})=(1-r)\widehat{u}_{x^*}(a_d,a_{-d})y(a_{-d})+r\left(\prod\limits_{d'\neq d} x^*(a_{d'})\right)u_{x^*}(a_{d}|y)-y(a_{-d})u_{x^*}(a_{d}|y),
 \end{gather}
 where, by definition $u_{x^*}(a_d|y)=\sum\limits_{a_{-d}'\in\supp_{-d}(x^*)} \widehat{u}_{x^*}(a_d,a_{-d}')y(a_{-d}').$

 Going back to Lemma \ref{lem:Flow-Convergence} in Appendix \ref{app:PartnerDistribution}, we are only left to prove that the partner dynamic is globally stable and that $y^t\to \eta_{x^*}^{a_d}$ as $t\to \infty$, where $\eta_{x^*}^{a_d}$ is the stable partner distribution. We prove this in the next section, Appendix \ref{app:StabilityPartner}. \vskip 5pt

\subsection{Global Stability of the Partner Dynamics}
\label{app:StabilityPartner}

Recall that for a trait $a_d$ and an associated partner distribution $y\in\Delta(\supp_{-d}(x^*))$ we defined the partner dynamics as
\begin{equation}
\label{eq:ApproxPartnerDynamics}
    \dot{y}(a_{-d})=(1-r)\widehat{u}_{\a^*}(a_d,a_{-d})y(a_{-d})+ru_{\a^*}(a_d|y)\prod\limits_{d'\neq d} \a^*(a_{d'})-y(a_{-d})u_{\a^*}(a_d|y),
\end{equation}
where $\widehat{u}_{x^*}(a)=\frac{u_{x^*}(a)}{u_{x^*}}$. Abusing notation (or alternatively, normalising the payoff units by the factor $u_{x^*}$) we will henceforth denote $u_{x^*}(a)$ instead of $\widehat{u}_{x^*}(a)$. 

Note that $\eta$ is a stationary point of this dynamics if and only if 
\begin{gather}\label{eq:StablePartnersRecall}
    (1-r)\frac{u_{\a^*}(a_d,a_{-d})\eta(a_{-d})}{u_{\a^*}(a_d|\eta)}+r\prod\limits_{d'\neq d} \a^*(a_{d'})-\eta(a_{-d})=0,
\end{gather}
which means that stationary states of the partner dynamic of a trait correspond to the stable partner distribution as defined in Definition \ref{def:partner-definition}.
 
We now turn to prove the proposition on global stability. First notice that the sign of right-hand-side of Equation (\ref{eq:ApproxPartnerDynamics}) is the same as that of the left-hand-side of Equation (\ref{eq:StablePartnersRecall}). 
    Fixing $a_{-d}$, let $\eta^0$ be a distribution satisfying $\eta^0(a_{-d})=0$, and for $t\in[0,1]$ set $\eta^t(a_{-d})=t$ and $\eta^t(a_{-d}')=(1-t)\eta^0(a_{-d}')$ for $a_{-d}'\neq a_{-d}$. 
    Plugging $\eta^t$ instead of $\eta$ into Equation (\ref{eq:StablePartnersRecall}) we obtain the equation 
    \begin{gather}
        (1-r)\frac{t u_{\a^*}(a_d,a_{-d})}{tu_{\a^*}(a_d,a_{-d})+(1-t)u_{\a^*}(a_d|\eta^0)}+r\prod\limits_{d'\neq d} \a^*(a_{d'})-t=0.
    \end{gather}
    Notice that the left hand side of this last equation, which we denote $g_{a_{-d}}(t|\eta^0)$ or $g(t)$ for short, is either a concave or a convex function of $t$. Furthermore, $g(0)=r\prod\limits_{d'\neq d}x^*(a_{d'})>0$, and $g(1)=1-r+r\prod\limits_{d'\neq d}x^*(a_{d'})-1<0$, which implies that $g$ changes sign exactly once for every choice of distribution $\eta^0$.  
  
  Denoting the left-hand-side of Equation (\ref{eq:StablePartnersRecall}) by $H_{a_{-d}}(\eta)$, we can now deduce that the sets $V_{a_{-d}}^+=\{\eta: H_{a_{-d}}(\eta)>0\}$ and $V_{a_{-d}}^-=\{\eta: H_{a_{-d}}(\eta)<0\}$ are two open, disjoint, and connected sets. If $\eta(a_{-d})=0$ we have
  $H_{a_{-d}}(\eta)>0$, and if $\eta(a_{-d})=1$ we have $H_{a_{-d}}(\eta)<0$. We deduce that $|\eta^t(a_{-d})-\eta_{x^*}^{a_d}(a_{-d})|$ decreases monotonically to $0$ regardless of the initial point $\eta^0$, hence the dynamics satisfy the property of Lyapunov stability. 
  
  Furthermore, let $\eta^*$ be a limit point of the dynamic, namely, suppose that there is a sequence $t_n\to \infty$ such that $\y^{t_n}\to \eta^*$. 
  Then it must hold that $H_{a_{-d}}(\eta^*)=0$ for every $a_{-d}\in A_{-d}$, hence by the uniqueness of $\eta_\a^{a_d}$ we have $\eta^*=\eta_{\a}^{a_d}$. 
 \vskip 5pt

 \section{General View: Proofs and a Detailed Example}\label{app:GeneralView}

 \subsection{Proof of Proposition \ref{prop:marginalPayoffMonotonicity}}\label{app:proof-of-Prop-5}
    Suppose that $u_\a (a_d) > u_\a (a'_d)$ for each pair of traits $a_d,a'_d \in \supp_d(\a)$. 
    Then by the assumption of trait payoff monotonicity, $\varphi_1(a_d,x)>\varphi_1({a'}_d,x)$ and $\varphi_2({a}_d,x) > \varphi_2({a'}_d,x)$ for every state $x$ and every pair of traits $a_d,{a'}_d \in \supp_d(\a)$.
    Hence by Equation (\ref{eq:generalRecombinatorForm}) it follows that $\frac{\dot\a(a_d)}{\a(a_d)}>\frac{\dot\a(a'_d)}{\a(a'_d)}$.\vskip 3pt
    
    In the other direction, suppose that $\frac{\dot\a(a_d)}{\a(a_d)}>\frac{\dot\a(a'_d)}{\a(a'_d)}$ for all $a_d,a'_d \in \supp_d(\a)$ and every state $\a$.
    Then by Equation (\ref{eq:generalRecombinatorForm}) $(1-r) \varphi_1(a_d,x)+r\varphi_2(a_d,x) > (1-r) \varphi_1({a'}_d,x)+r\varphi_2({a'}_d,x)$ for all $a_d,a'_d \in \supp_d(\a)$ and every state $\a$.
    
    Suppose now that $u_\a (a_d) \le u_\a (a'_d)$ for some pair $a_d,a'_d \in \supp_d(\a)$.
    Then by the assumption that $f_1$ and $f_2$ satisfy trait payoff monotonicity it follows that $\varphi_1(a_d,x) \le \varphi_1(a'_d,x)$ and $\varphi_2(a_d,x) \le \varphi_2(a'_d,x)$, which further implies $(1-r) \varphi_1(a_d,x)+r\varphi_2(a_d,x) \le (1-r) \varphi_1(a'_d,x)+r\varphi_2(a'_d,x)$. 
    This is a contradiction establishing the conclusion that $u_\a (a_d) > u_\a (a'_d)$.
    
    With monotonicity established, from here standard arguments from monotonicity establish that $x$ is a stationary state only if $u_\a (a_d) = u_\a (a'_d)$ for each pair of traits $a_d,a'_d \in \supp_d(\a)$.

\subsection{Proof of Theorems  \ref{thm:general-asymptotic-neccesary} and \ref{thm:-general-asymptotic-sufficient}}

The proof of Theorem \ref{thm:general-asymptotic-neccesary} follows the exact footsteps of Theorem \ref{thm:asymptotic-neccesary}, where the role of Proposition \ref{prop-trait-stationary} is replaced by Proposition \ref{prop:marginalPayoffMonotonicity}.

We turn to explain how to adapt the proof of Theorem \ref{thm:asymptotic-sufficient} to prove Theorem \ref{thm:-general-asymptotic-sufficient}. Suppose that $x^*$ is a stationary point and $f_2$ satisfies conditions (1) - (3) (of the statements of the theorems) at $x^*$. Then for $a\not\in \supp(x^*)$, as in the proof of Theorems 1 and 2, $x^*(a_d)=0$ for at least one dimension $d\in D$. If $x^*(a_d)=x^*(a_{d'})=0$ for $d\neq d'$ then condition (3) implies the linear approximation in this case is 
\begin{gather}
    \dot{x}(a)\approx \left[(1-r)f_1(a|x^*)-1\right]x(a),
\end{gather}
and external stability for traits will determine stability here. 

Suppose that $x^*(a_d)=0$ for only one dimension $d\in D$. Recalling that $f_1(a|x)=g_1(a|x)x(a)$, the linear approximation is 
\begin{gather}\label{eq:ExtendedRecombinator}
    \dot{x}(a)\approx(1-r)g_1(a|x^*)x(a)+r\sum\limits_{a_{-d}'\in A_{-d}}\frac{\partial f_2(a_d,a_{-d}'|x^*)}{\partial x(a_d,a_{-d}')}x(a_d,a_{-d}')- x(a_d,a_{-d}).
\end{gather}
Recall that we have defined $v_{x^*}(a_d,a_{-d}')=\frac{\frac{\partial f_2(a_d,a_{-d}'|x^*)}{\partial x(a_d,a_{-d}')}}{\prod\limits_{d'\neq d}x^*(a_{d'})}$. 
Then Equation (\ref{eq:ExtendedRecombinator}) becomes 
\begin{gather}\label{eq:ExtendedRecombinator2}
    \dot{x}(a)\approx(1-r)g_1(a|x^*)x(a)+r\prod\limits_{d'\neq d}x^*(a_{d'})\sum\limits_{a_{-d}'\in A_{-d}}v_{x^*}(a_d,a_{-d}')x(a_d,a_{-d}')- x(a_d,a_{-d}).
\end{gather}
As in the Equations (\ref{eq:Non-neighbours-die-fast}) and (\ref{eq:Linear-Approximation-Traits-Reduced1}) in the proof of Theorems 1 and 2, we can write 
\begin{align}
\label{eq:ExtendedRecombinator3}
    \dot{x}(a) = (&1-r)g_1(a|x^*)x(a) \\
    \nonumber
    &+r\prod\limits_{d'\neq d}x^*(a_{d'})\sum\limits_{a_{-d}'\in \supp_{-d}(x^*)}v_{x^*}(a_d,a_{-d}')x(a_d,a_{-d}')- x(a_d,a_{-d})+f_{a_d}(t),
\end{align}
where $f_{a_d}$ is a linear combination of strictly negative exponential functions. 

As in the proof of Theorems 1 and 2, Equation (\ref{eq:ExtendedRecombinator3}) has the form $\dot{x}(a_d,\cdot)=\Lambda^{a_d}x(a_d,\cdot)+f(t)$, where $f$ is a vector whose coordinates are the functions $f_{a_d}$ and $\Lambda^{a_d}$ is a matrix constructed in the same manner as in Equation (\ref{eq:The-Matrix}) in Appendix \ref{app:LinearApproximation}. Lemma \ref{lem:Brauer} from that appendix shows again that to complete the proof we should prove that $\Lambda^{a_d}$ is negative definite. We will now explain how the proof in Appendix \ref{app:Proof1and2} can be adapted to our generalisation. 

We sum over $a_{-d}\in \supp_{-d}(x^*)$ on both sides of the equation $\dot{x}(a_d,\cdot)=\Lambda^{a_d} x(a_d,\cdot)$, namely Equation (\ref{eq:ExtendedRecombinator3}) with the $f_{a_d}$ term neglected. Recall that we have defined $u_{x^*}(a_d|y_{a_d})=\sum\limits_{a_{-d}'\in \supp_{-d}(x^*)} g_1(a_d,a_{-d}'|x^*)y_{a_d}(a_{-d}')$, and $v_{x^*}(a_d|y_{a_d})=\sum\limits_{a_{-d}'\in \supp_{-d}(x^*)}v_{x^*}(a_d,a_{-d}')y_{a_d}(a_{-d}')$, where $y_{a_d}\in\Delta(\supp_{-d}(x^*))$ is the partner distribution of $a_d$. Summing over partners $a_{-d}$ we thus obtain 
\begin{gather}
    \dot{x}(a_d)=\left[(1-r)u_{x^*}(a_d|y_{a_d})+rv_{x^*}(a_d|y_{a_d}) -1\right]x(a_d).
\end{gather}

Recall that we have defined  $U_{x^*}^r(a_d|y_{a_d})=(1-r)u_{x^*}(a_d|y_{a_d})+rv_{x^*}(a_d|y_{a_d})$, hence we obtain the equation 
\begin{gather}\label{eq:TraitGeneral}
    \dot{x}(a_d)=\left(U_{x^*}^r(a_d|y_{a_d}) -1\right)x(a_d).
\end{gather}
As argued following Equation (\ref{eq:SupportLinearApproximateTraitsPartnersNon}) in Appendix \ref{app:PartnerDistribution}, to prove local stability here it sufficient to show that the partner distribution $y_{a_d}$, defined by the solutions of the differential equation $\dot{x}(a_d,\cdot)=\Lambda^{a_d} x(a_d,\cdot)$, follows a dynamic and that this dynamic has stable partner distribution $\eta_{x^*}^{a_d}$ as its unique globally stable equilibrium. Next we derive the form of this generalised partner dynamics, which will finish the proof as its global stability is assumed as a condition, as a straightforward generalisation of our proof in Appendices \ref{app:PartnerDynamic} and \ref{app:StabilityPartner}.

\subsubsection{The Form of the Generalised Partner Dynamics}
We now turn to deriving the partner distribution dynamics in this case. Recall that $\dot{y}_{a_d}(a_{-d})=\frac{\dot{x}(a_d,a_{-d})}{x(a_d)}-y_{a_d}(a_{-d})\frac{\dot{x}(a_d)}{x(a_d)}$.
This leads to 
\begin{align}
\label{eq:GeneralPartnerDynamics}
    \dot{y}_{a_d}(a_{-d}) = (&1-r)g_1(a_d,a_{-d}|x^*)y_{a_d}(a_{-d})\\
    &+r\biggl(\prod\limits_{d'\neq d}x^*(a_{d'})\biggr)v_{x^*}(a_d|y_{a_d})- y_{a_d}(a_{-d})U_{x^*}^r(a_d|y_{a_d}).
    \nonumber
\end{align}
A stationary point $\eta$ of this equation must therefore satisfy 
\begin{gather}
    0=(1-r)\frac{g_1(a_d,a_{-d}|x^*)}{U_{x^*}^r(a_d|\eta)}\eta(a_{-d})+r\frac{\biggl(\prod\limits_{d'\neq d}x^*(a_{d'})\biggr)v_{x^*}(a_d|\eta)}{U_{x^*}^r(a_d|\eta)}- \eta(a_{-d}).
\end{gather}

If, assuming external stability for traits, a unique globally stable stationary distribution $\eta_{x^*}^{a_d}$ exists here as assumed in the statement of our theorems, the stability argument follows in the same manner as in Lemma \ref{lem:Flow-Convergence} in Appendix \ref{app:PartnerDistribution}.

\subsection {Partner Dynamics with a Unique Globally Stable Equilibrium}\label{app:UniqueGloballyStable}
In this appendix we describe a family of functions containing the recombinator dynamics for which the induced partner dynamics admits a unique globally stable equilibrium.

In this family of dynamics the trait-imitation component will be a product of (1) the frequencies of all traits, and (2) a function $G$ that only depends on the relative trait utilities. We begin by formally presenting this functional form for $f_2$ and proving that it is indeed a trait-imitation dynamics.

    \begin{proposition}\label{prop:Trait-Imitation}
        An equation of motion of the form of Equation (\ref{eq:traitImitation}) in which\\ $f_2(a|x)=\left(\prod\limits_{d\in D} x(a_d)\right) G\left(\left(\frac{u_x(a_d)}{u_x}\right)_{d\in D}\right)$, where $G>0$ is continuously differentiable, satisfies the three conditions characterising  trait imitation dynamics. 
    \end{proposition}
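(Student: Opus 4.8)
The plan is to verify the three defining properties of a trait-imitation dynamics in turn, concentrating the analytic effort on boundary states at which some marginal trait frequency vanishes. \emph{Trait combination} is immediate: if $\prod_{d\in D}x(a_d)>0$, then positivity of $G$ gives $f_2(a|x)=\left(\prod_{d\in D}x(a_d)\right)G\!\left(\left(\frac{u_x(a_d)}{u_x}\right)_{d\in D}\right)>0$, so the trait-combination requirement holds with no further work.

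For \emph{differentiability} I would first dispose of the interior. On the open region where every trait marginal $x(a_d)$ is strictly positive, $f_2$ is plainly $C^1$: the prefactor $\prod_{d\in D}x(a_d)$ is a polynomial, $u_x$ is a strictly positive quadratic, each marginal payoff $u_x(a_d)=\frac{1}{x(a_d)}\sum_{a_{-d}\in A_{-d}}x(a_d,a_{-d})u_x(a_d,a_{-d})$ is a polynomial divided by $x(a_d)$ and hence differentiable wherever $x(a_d)>0$, and $G$ is continuously differentiable by hypothesis. The only delicate region is the boundary, where some $x(a_d)=0$ and $u_x(a_d)$ is singular; I would treat this together with the third condition.

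The crux is a direct computation of the partial derivatives of $f_2$ at a boundary state $x$ with $x(a_d)=0$, which at once establishes boundary differentiability and yields \emph{trait growth inertia}. Fix the type $a=(a_d,a_{-d})$ and perturb the frequency of an arbitrary type $a'$ by $h\ge 0$ into the simplex. If $a'_d\neq a_d$, the perturbation does not change $x(a_d)$, which remains $0$; hence the factor $\prod_{d\in D}x(a_d)$, and with it $f_2$, stays identically zero along the perturbation, so $\frac{\partial f_2(a|x)}{\partial x(a')}=0$. If instead $a'_d=a_d$, the perturbation raises $x(a_d)$ to $h$, and since the only carriers of trait $a_d$ are then the injected type-$a'$ agents, one has $u_x(a_d)\to u_x(a_d,a'_{-d})$ as $h\to 0$. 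Expanding the prefactor, the difference quotient $f_2(a|x+he_{a'})/h$ converges to $\left(\prod_{d'\neq d}x(a_{d'})\right)G\!\left(\frac{u_x(a_d,a'_{-d})}{u_x},\left(\frac{u_x(a_{d'})}{u_x}\right)_{d'\neq d}\right)$, a finite nonnegative number. This limit is strictly positive exactly when $\prod_{d'\neq d}x(a_{d'})>0$; combined with the vanishing derivative in the first case, this is precisely the implication required by trait growth inertia (and it identifies the derivative with $\left(\prod_{d'\neq d}x(a_{d'})\right)v_{x^*}(a_d|a'_{-d})$, consistent with the main text).

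The main obstacle is the singularity of $u_x(a_d)$ as $x(a_d)\to 0$, which a priori threatens differentiability on the boundary and is the reason the claim is not obvious for an arbitrary $C^1$ positive $G$ rather than only the multiplicative $G$ of the recombinator. The resolution, which I would make explicit, is that $u_x(a_d)$ enters $f_2$ only through the bounded factor $G$ — the relative-payoff arguments range over a compact set since each $u_x(a_d)$ is a convex combination of the finitely many payoffs and $u_x\ge \min u>0$ — while the vanishing frequency $x(a_d)$ sits outside $G$. The product therefore admits the finite directional derivatives into the simplex computed above, so $f_2$ is differentiable and all three conditions hold.
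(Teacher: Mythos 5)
Your proposal is correct and follows essentially the same route as the paper's proof: trait combination from positivity of $G$, and a first-order expansion of $f_2$ along the perturbation $x+he_{a'}$ at boundary states, which simultaneously yields the boundary partial derivatives (zero unless $a'_d=a_d$, and equal to $\bigl(\prod_{d'\neq d}x(a_{d'})\bigr)G(\cdot)$ otherwise) and hence trait growth inertia. Your added remarks on the interior case and on the boundedness of $G$'s arguments make explicit points the paper leaves implicit, but they do not change the argument.
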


\begin{proof}
We prove that each of the properties defining trait-imitation hold. 

For differentiability, suppose $x^*(a_d)=0$ while $x^*(a_{d'})>0$ for every $d'\neq d$.  We can write 
    \begin{gather}
      \frac{\partial f_2(a|x^*)}{\partial x(a')}=\lim\limits_{h\to 0}\frac{f_2(a|x^*+he_{a'})}{h}
      =\lim\limits_{h\to 0}\frac{1}{h}\int_0^h\frac{d}{dh}\left[f_2(a|x^*+he_{a'})\right]dh. 
    \end{gather}
    Notice that if $a_d\not\in a'$ then the limit above is $0$. 
    For the sake of notational convenience, we denote $\textbf{u}_x(a)=\left(\frac{u_x(a_{d'})}{u_x}\right)_{d'\in D}$. Thus, we consider $a'=(a_d,a_{-d}')$. Notice that \begin{gather*}
        f_2(ax^*+se_{(a_d,a_{-d}')})=s\left(\prod\limits_{d'\neq d}x(a_{d'})\right)G(\textbf{u}_{x^*+se_{(a_d,a_{-d}')}}(a)).
    \end{gather*}
    Furthermore, $\frac{u_{x^*+se(a_d,a_{-d}')}(a_d)}{u_{x^*+se(a_d,a_{-d}')}}=\frac{s^{-1}u_{x^*+se(a_d,a_{-d}')}(a_d,a_{-d}')s}{u_{x^*}+O(s)}=\frac{u_{x^*}(a_d,a_{-d}')}{u_{x^*}}+o(s)$. Thus we obtain, as $s\to 0$,
    \begin{gather*}
        f_{2}(a|x^*+se_{(a_d,a_{-d}')})=\left(\prod\limits_{d'\neq d}x(a_{d'})\right)G\left(\frac{u_{x^*}(a_d,a_{-d}')}{u_{x^*}}, \left(\textbf{u}_{x^*}(a)\right)_{-d}\right)\cdot s+o(s),
    \end{gather*}
    hence $f_2$ is differentiable. 
       For the other two properties, if $f_2(a|x)=0$ then $\prod\limits_{d\in D}x(a_d)=0$, proving nullity. If $x(a_d)=0$ then $\prod\limits_{d'\neq d}x(a_{d'})=0$ or $a_d'\neq a_d$ implies $\frac{\partial f_2(a_d,a_{-d}|x)}{\partial x(a')}=0$ proving trait dependence. \qedhere \vskip 5pt

\end{proof}
    
    This family of dynamics is akin to imitation dynamics with a single trait (and indeed they coincide if $|D|=1$; see \cite[Example 1]{sandholm2010Local}). 
    %\vskip 5pt
    Observe that the recombinator dynamics has this functional form, where $G((u_d)_{d\in D})=\prod\limits_{d\in D}u_d$.
      
   We next assume that $G$ is symmetric (in the sense of being invariant to permutations), and we assume that the type-imitation component has the following form (for an arbitrary dimension $d$):
     \begin{gather*}
       f_1(a|x)=x(a)G\left(\frac{u_{x^*}(a)}{u_{x^*}}, \textbf{1}_{-d}\right),
   \end{gather*}    
       namely, that $g_1(a|x)=G\left(\frac{u_{x^*}(a)}{u_{x^*}}, \textbf{1}_{-d}\right)$. Observe that under these assumptions $v_{x^*}$ and $u_{x^*}$ coincide. Indeed, if $x^*(a_d)=0$ and $a_{-d}'\in\supp_{-d}(x^*)$ then $\frac{u_{x^*}(a_{d'}')}{u_{x^*}}=1$ by Proposition \ref{prop:marginalPayoffMonotonicity}, hence $\frac{\partial f_2(a_d,a_{-d}'|x^*)}{\partial x(a_d,a_{-d}')}=\left(\prod\limits_{d'\neq d} x^*(a_{d'}')\right) G\left(\frac{u_{x^*}(a_d,a_{-d}')}{u_x^*},\textbf{1}_{-d}\right)$, and therefore :
     \begin{gather*}
        v_{x^*}(a_d|a_{-d}')=G\left(\frac{u_{x^*}(a_d,a_{-d}')}{u_x^*},\textbf{1}_{-d}\right)=g_1(a|x), 
    \end{gather*}
     hence
         \begin{gather*}
       v_{x^*}(a_d|y_{a_d})=\sum\limits_{a_{-d}'\in \supp_{-d}(x^*)} g_1(a_d,a_{-d}'|x^*)y_{a_d}(a_{-d}')=u_{x^*}(a_d|y_{a_d}).
   \end{gather*}
     Recall that $v_{x^*}(a_d|y_{a_d})$ is the marginal payoff of trait $a_d$ resulting from trait-imitation, while $u_{x^*}(a_d|y_{a_d})$ is the marginal payoff of trait $a_d$ resulting from type-imitation. Thus, the equality above implies that trait-imitation and type-imitation equilibrate. In this case, $U_{x^*}^r(a_d|\cdot)=u_{x^*}(a_d|\cdot)=v_{x^*}(a_d|\cdot)$.
   We will prove that in such case 
   the partner dynamics has a unique globally stable stationary state.

\begin{proposition}\label{prop:Partner-Unique-Globaly-Stable-General}
    If $v_{x^*}(a_d|\cdot)=u_{x^*}(a_d|\cdot)$ whenever $x^*(a_d)=0$ then the generalised partner dynamics has a unique globally stable equilibrium. 
\end{proposition}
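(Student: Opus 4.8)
The plan is to show that the hypothesis $v_{x^*}(a_d|\cdot)=u_{x^*}(a_d|\cdot)$ collapses the generalised partner dynamics into exactly the functional form of the recombinator partner dynamics of Equation~(\ref{eq:ApproxPartnerDynamics}), so that the global-stability argument of Appendices~\ref{app:PartnerDynamic} and~\ref{app:StabilityPartner} transfers almost verbatim. First I would unpack the hypothesis. Since $u_{x^*}(a_d|y)=\sum_{a_{-d}'}g_1(a_d,a_{-d}'|x^*)y(a_{-d}')$ and $v_{x^*}(a_d|y)=\sum_{a_{-d}'}v_{x^*}(a_d|a_{-d}')y(a_{-d}')$ are both linear in $y$, evaluating the identity $u_{x^*}(a_d|\cdot)=v_{x^*}(a_d|\cdot)$ at the vertices $y=\delta_{a_{-d}'}$ of $\Delta(\supp_{-d}(x^*))$ forces the coefficients to agree: $g_1(a_d,a_{-d}|x^*)=v_{x^*}(a_d|a_{-d})$ for every $a_{-d}\in\supp_{-d}(x^*)$. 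I would denote this common value by $w(a_{-d})$ (strictly positive, inherited from $G>0$ in the family under consideration), and write $p(a_{-d})=\prod_{d'\neq d}x^*(a_{d'})$ and $W(y)=\sum_{a_{-d}'}w(a_{-d}')y(a_{-d}')$. Substituting into Equation~(\ref{eq:GeneralPartnerDynamics}) and using $u_{x^*}(a_d|\cdot)=v_{x^*}(a_d|\cdot)=U^r_{x^*}(a_d|\cdot)=W$ yields
\begin{equation*}
\dot{y}(a_{-d})=(1-r)\,w(a_{-d})\,y(a_{-d})+r\,p(a_{-d})\,W(y)-y(a_{-d})\,W(y),
\end{equation*}
which is precisely Equation~(\ref{eq:ApproxPartnerDynamics}) with the relative payoff $\widehat{u}_{x^*}(a_d,a_{-d})$ replaced by $w(a_{-d})$ and with $p(a_{-d})$ in the role of the trait-independent partner weight.

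Next I would record the only two structural facts about $w$ and $p$ that the Appendix~\ref{app:StabilityPartner} argument actually uses (and for which $r>0$ is assumed, the case $r=0$ being degenerate). The first is $w(a_{-d})>0$, which guarantees $W(y)>0$ on the simplex; this legitimises dividing by $W(y)$ and makes the sign of $\dot{y}(a_{-d})$ equal to that of the stationarity functional $H_{a_{-d}}(y)=(1-r)w(a_{-d})y(a_{-d})/W(y)+r\,p(a_{-d})-y(a_{-d})$. The second is that $p$ is a probability distribution on $\supp_{-d}(x^*)$: indeed $\sum_{a_{-d}}\prod_{d'\neq d}x^*(a_{d'})=\prod_{d'\neq d}\bigl(\sum_{a_{d'}}x^*(a_{d'})\bigr)=1$, so $0<p(a_{-d})\le 1$ for supported partners, with $p(a_{-d})<1$ whenever there is more than one partner (the single-partner case being trivial, since the partner distribution is then a fixed point mass).

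With these in hand I would rerun the monotone-sign argument. A distribution $\eta$ is a rest point of the reduced dynamics exactly when $H_{a_{-d}}(\eta)=0$ for all $a_{-d}$, which is structurally the defining system of Definition~\ref{def:partner-definition}; existence follows from Brouwer's theorem and the $z_0$-reduction of Proposition~\ref{prop:UniquePartner} applies mutatis mutandis to give uniqueness, identifying the rest point with the stable partner distribution $\eta_{x^*}^{a_d}$. For global convergence, fix $a_{-d}$ and restrict $H_{a_{-d}}$ to the one-parameter family $\eta^t(a_{-d})=t$, $\eta^t(a_{-d}')=(1-t)\eta^0(a_{-d}')$ for $a_{-d}'\neq a_{-d}$. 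The resulting scalar function $g(t)$ is a M\"obius function of $t$ (hence of constant-sign second derivative) plus an affine term, so it is either concave or convex throughout, and since $g(0)=r\,p(a_{-d})>0$ while $g(1)=-r\bigl(1-p(a_{-d})\bigr)<0$ it changes sign exactly once. This makes $\{H_{a_{-d}}>0\}$ and $\{H_{a_{-d}}<0\}$ open, disjoint and connected, separated by the coordinate level through $\eta_{x^*}^{a_d}(a_{-d})$, from which each coordinate $|\,y^t(a_{-d})-\eta_{x^*}^{a_d}(a_{-d})\,|$ decreases monotonically to $0$; together with uniqueness of the rest point this yields Lyapunov stability and global convergence, as required.

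The reduction itself is routine, so I do not expect the principal difficulty to lie there. The crux is confirming that replacing the linear relative payoff $\widehat{u}_{x^*}$ by the possibly nonlinear coefficient $w=G(\cdot)$ does not destroy the concavity/convexity of $g(t)$, which is what drives the single-crossing property. This is the step I would be most careful about: the key observation is that in the restricted function $g(t)$ each $w(a_{-d})$ enters only as a fixed positive constant, so the first summand remains a genuine M\"obius function of $t$ and the single-crossing survives unchanged. Once this is noted, no further analytic obstacle arises, and the arguments of Appendices~\ref{app:PartnerDynamic} and~\ref{app:StabilityPartner} carry over directly.
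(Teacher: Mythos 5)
Your proposal is correct and follows essentially the same route as the paper: under the hypothesis the generalised partner dynamics collapses to the functional form of Equation (\ref{eq:ApproxPartnerDynamics}) with the relative payoff replaced by the common coefficient $g_1(a_d,a_{-d}|x^*)=v_{x^*}(a_d|a_{-d})$, after which existence/uniqueness transfers from Proposition \ref{prop:UniquePartner} and global convergence from the single-crossing argument of Appendix \ref{app:StabilityPartner}. Your added details (the vertex evaluation forcing the coefficients to agree, the check that $\prod_{d'\neq d}x^*(a_{d'})$ sums to one over supported partners, and the observation that the M\"obius structure of $g(t)$ is preserved when the linear payoff is replaced by a fixed positive constant) merely make explicit what the paper's proof asserts by reference.
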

\begin{proof}
The proof generalises our stability result for the partner dynamics presented in Appendix \ref{app:StabilityPartner}, and explains how the proof in Appendix \ref{app:StabilityPartner} can be amended to apply for the general case. Notice that under the assumptions stated in Proposition \ref{prop:Partner-Unique-Globaly-Stable-General}, the partner dynamics of Equation (\ref{eq:GeneralPartnerDynamics}) takes the form 
\begin{align}
\label{eq:GeneralPartnerDynamicsNew}
    \dot{y}_{a_d}(a_{-d}) = (&1-r)g_1(a_d,a_{-d}|x^*)y_{a_d}(a_{-d}) \\
    &+r\biggl(\prod\limits_{d'\neq d}x^*(a_{d'})\biggr)u_{x^*}(a_d|y_{a_d})- y_{a_d}(a_{-d})u_{x^*}(a_d|y_{a_d}).
    \nonumber
\end{align}
Recall that 
\begin{gather}\label{eq:Linear-Combination-Generalised}
    u_{x^*}(a_d|y_{a_d})=\sum\limits_{a_{-d}'\in \supp_{-d}(x^*)}g_1(a_d,a_{-d}'|x^*)y_{a_d}(a_{-d}').
\end{gather} 
In Appendix \ref{app:StabilityPartner}, we show $g_1(a|x^*)=u_{x^*}(a)$ (following a normalisation of units by factor $u_{x^*}$), 
and substituting this form into Equation (\ref{eq:GeneralPartnerDynamicsNew}) yields the partner dynamics of Equation (\ref{eq:ApproxPartnerDynamics}). 

%We will now explain how all the results in in Appendix \ref{app:StabilityPartner} hold for a more general form of $f_1$. 

The proof in Appendix \ref{app:StabilityPartner} immediately applies to our setting here by replacing $u_{x^*}(a)$ by $g_1(a|x^*)$. 
For uniqueness, Proposition \ref{prop:UniquePartner} applies to our setting here by, again, replacing $u_x(a)$ by $g_1(a|x)$
\end{proof}

  Next, we show that our generalised results allow us to capture dynamics akin to imitation by success:

       \begin{example}\label{exm:TraitImitationLinear}
       Consider the linear functional form $G(u)=\prod\limits_{d\in D}u_d+b$.
       %In this case, $G(u_d,\textbf{1}_{-d})=u_d+b$. The marginal function $\varphi_2^b$ of $f_2$ differs in this case by a constant from that of $\parphi_0$ (obtained taking $b=0$), hence it is marginal-payoff-increasing. The marginal function of $f_1$ is $\varphi_1^b(a_d,x)=$        
       The dynamics in this case is given by 
       \begin{align}\label{eq:b-recombinator-dynamic}
           \dot{x}(a) &=(1-r)\left[(1+b)^{|D|-1}\left(\frac{u_x(a)}{u_x}+b\right)-(1+b)^{|D|}+1\right]x(a)\\
           \nonumber &+r\prod\limits_{d\in D} x(a_d)\left[\prod\limits_{d\in D}\left(\frac{u_x(a_d)}{u_x}+b\right)-(1+b)^{|D|}+1\right]-x(a)
       \end{align}
       It thus holds that $u_x(a_d|\cdot)=v_x(a_d|\cdot)$ and thus Proposition \ref{prop:Partner-Unique-Globaly-Stable-General} applies. 
       
       In terms of the literature, this example should be compared with Example $5.4.4$ in \citealp{sandholm2010population}
       (imitation by success), where (using Sandholm's notation) $\rho_{ij}(x,\pi)=x_ic(\pi_j)$ and $c(\pi_j)=\pi_j+K$. 
       In that case, the dynamics leads to the replicator dynamics. Note that if $|D|=1$ then in our case
       \begin{gather}
       \label{eq:D=1}
           \dot{x}(a)=(1-r)\frac{u_x(a)}{u_x}x(a)+r\frac{u_x(a)}{u_x}x(a)-x(a)=\left(\frac{u_x(a)}{u_x}-1\right)x(a)
       \end{gather}
       hence we also derive the replicator dynamics for this special case. However, when $|D|>1$ we derive substantially different dynamics. For example, when  $|D|=2$ we get the following dynamic, 
       \begin{align}
       \label{eq:b-recombinator-dynamic-D=2}
           \dot{x}(a)=
           (1-r)(1+b)\left(\frac{u_x(a)}{u_x}+b\right)x(a)+r\prod\limits_{d\in D} x(a_d)\prod\limits_{d\in D}\left(\frac{u_x(a_d)}{u_x}+b\right)\\ \nonumber -(1+b)^2\left((1-r)x(a)+r \prod\limits_{d\in D} x(a_d)\right)-r\left(x(a)-\prod\limits_{d\in D} x(a_d)\right),
           %\myenddefinition
       \end{align}
       significantly different from the replicator of Equation (\ref{eq:D=1}).
\hfill \myenddefinition

\end{example}

\end{document}